\documentclass[11pt]{article}

\usepackage[letterpaper, margin = 1.1in]{geometry}
\usepackage{amsmath,amsthm,amssymb,mathrsfs,bm,graphicx}
\usepackage[unicode]{hyperref}
\usepackage[comma,sort&compress,square]{natbib}
\usepackage{setspace}
\usepackage{subcaption}
\usepackage{wrapfig}
\usepackage{seqsplit}

\hypersetup{
    colorlinks=true,
    linkcolor=blue,
    citecolor=blue,
    filecolor=blue,
    urlcolor=blue
}

\numberwithin{equation}{section}
\allowdisplaybreaks
\newtheorem{theorem}{Theorem}[section]
\newtheorem{lemma}{Lemma}[section]
\newtheorem{proposition}{Proposition}[section]
\newtheorem{corollary}{Corollary}[section]

\theoremstyle{definition}

\newtheorem{remark}{Remark}[section]
\newtheorem{example}{Example}[section]

\newcommand{\E}{\mathbb{E}}
\renewcommand{\P}{\mathbb{P}}
\newcommand{\R}{\mathbb{R}}
\newcommand{\N}{\mathbb{N}}

\newcommand{\cA}{\mathcal{A}}
\newcommand{\cB}{\mathcal{B}}
\newcommand{\cD}{\mathcal{D}}
\newcommand{\cF}{\mathcal{F}}

\newcommand{\cH}{\mathcal{H}}
\newcommand{\cN}{\mathcal{N}}
\newcommand{\cP}{\mathcal{P}}
\newcommand{\cR}{\mathcal{R}}
\newcommand{\cS}{\mathcal{S}}
\newcommand{\cX}{\mathcal{X}}

\newcommand{\sZ}{\mathscr{Z}}

\newcommand{\bW}{\bm{W}}
\newcommand{\bZ}{\bm{Z}}
\newcommand{\bw}{\bm{w}}
\newcommand{\one}{\bm{1}}

\newcommand{\ra}{\rightarrow}
\newcommand{\vep}{\varepsilon}
\newcommand{\pto}{\overset{P}{\longrightarrow}}
\newcommand{\dto}{\overset{D}{\longrightarrow}}

\usepackage{tikz}

\newcommand{\myoverline}[2][3.5em]{%
  \overset{%
    \tikz[baseline=-0.1ex]{
      \draw[line width=0.75pt]
        (0,0) .. controls (#1/2,0.5ex) .. (#1,0);
    }%
  }{#2}%
}

\newcommand{\Var}{\mathrm{Var}}
\newcommand{\ECE}{\mathrm{ECE}}
\newcommand{\lECE}{\ell_2\textnormal{-}\ECE}
\newcommand{\binECE}{\mathrm{binECE}}
\newcommand{\lbinECE}{\ell_2\textnormal{-}\binECE}
\newcommand{\binhatECE}{\myoverline{\lbinECE}}
\newcommand{\rankECE}{\mathrm{rankECE}}
\newcommand{\hatrankECE}{\myoverline{\rankECE}}

\newcommand{\rmd}{\mathrm{d}}

\title{A Ranking Approach for Measuring Calibration}

\author{
Anirban Chatterjee\thanks{\texttt{achatter@bu.edu}}\\
Department of Mathematics and Statistics\\ Boston University
\and
Rina Foygel Barber\thanks{\texttt{rina@uchicago.edu}}\\
Department of Statistics\\ University of Chicago
}

\date{}

\begin{document}

\maketitle

\begin{abstract}
When providing forecasted probabilities with a predictive model, the ideal model offers perfect calibration: the true probability of the outcome (i.e., the probability that $Y=1$) exactly matches the forecasted probability $f(X)$. In practice, models inevitably exhibit calibration error, and it is therefore important to be able to measure this miscalibration to assess a model's reliability. The Expected Calibration Error (ECE) is the most widely used measure of miscalibration, but is known to be impossible to estimate the ECE with guaranteed accuracy in an assumption-free setting. In this work, we propose an alternative measure, the rankECE, that is based on comparing points with neighboring values of the predicted probability $f(X)$. Our theoretical guarantees and empirical results establish that rankECE provides a better proxy for ECE as compared to binned approximations to ECE, which are the most commonly-used approximations in practice.
\end{abstract}

\section{Introduction}\label{sec:intro}

In recent years, machine learning and artificial intelligence have achieved remarkable predictive performance across a wide range of domains, leading to their widespread adoption in safety-critical applications such as medical diagnosis \citep{esteva2017dermatologist,esteva2019guide,gulshan2016development}, financial risk assessment \citep{baesens2003benchmarking,fritz2022financial}, and autonomous driving \citep{caesar2020nuscenes,sun2020scalability,chen2024end}. In many of these applications, models produce probabilistic forecasts for classification tasks, which are subsequently used to guide downstream decisions \citep{turay2022toward,cai2020review,mahbobi2023credit}. The reliability of these decisions, however, depends critically on whether the predicted probabilities are well calibrated \citep{murphy1977reliability,vanCalster2019CalibrationTA,shen2025algorithms}. Informally, a model is calibrated if its predicted probabilities agree with the corresponding empirical frequencies \citep{foster1998asymptotic,dawid1982well}. For example, consider a medical diagnostic system that predicts a patient has a $70\%$ probability of having a particular disease based on their symptoms and medical history. A well-calibrated model ensures that, among all patients assigned a predicted probability of $70\%$, approximately $70\%$ actually have the disease. Consequently, calibrated probabilities provide meaningful measures of uncertainty that can be directly interpreted and incorporated into decision-making. In contrast, poorly calibrated predictions can misrepresent uncertainty, potentially leading to suboptimal or even harmful decisions in high-stakes settings.

Formally, for a binary prediction task with input features $X \in \mathcal{X}$ and a model $f:\mathcal{X} \rightarrow [0,1]$ predicting the probability of label $Y = 1$, the model $f$ is said to be calibrated if
\begin{align}\label{eq:def_calibration}
\mathbb{E}\left[Y \mid f(X)\right] = f(X)\text{ almost surely.}
\end{align}

Since its introduction in early work on meteorological forecasting \citep{glenn1950verification,degroot1983comparison}, calibration has been recognized as a fundamental property of probabilistic predictions \citep{hilden1978measurement,guo2017calibration,gneiting2014probabilistic,murphy1984probability,minderer2021revisiting}. However, fitted models are often poorly calibrated, due to issues such as model misspecification (such as for classical models that may not fit the more complex trends in the data) or due to overfitting (such as in neural networks, which often exhibit miscalibration \citep{guo2017calibration,ovadia2019,havasi2021training}). This motivates the need for principled measures of calibration. The Expected Calibration Error (ECE) is perhaps the most widely used such measure. As we discuss below, however, reliably estimating ECE from finite samples is challenging, and existing approximations, primarily binning-based estimators, may fail to capture failures of the calibration condition in \eqref{eq:def_calibration}. In this work, we propose a rank-based approximation to ECE that characterizes calibration, provides a closer approximation to ECE than existing binning-based methods, and can be reliably estimated.

\subsection{Background and Related Works}\label{sec:background}
To evaluate calibration performance, the most common approach is to measure the average discrepancy between the two sides of the calibration identity in \eqref{eq:def_calibration}. In particular, one of the most widely used measure is the $\ell_2$-Expected Calibration Error $(\lECE)$ \citep{murphy1973new,degroot1983comparison,hendrycks2019using,kumar2019verified,lee2023t,foster1998asymptotic, sun2024confidence}, defined as
\begin{align}\label{eq:L2_ECE}
\lECE(f) = \E\bigg[\left|\E[Y|f(X)] - f(X)\right|^2\bigg].
\end{align}
More generally, one can define $\ell_p$-ECE variants by replacing the $2$ in the exponent with $p$ for any $p \geq 1$. In practice, the $\ell_1$ and $\ell_\infty$ variants are also widely used \citep{guo2017calibration,naeini2015obtaining,nixon2019measuring,roelofs2022mitigating}; for simplicity, however, we focus on the $\ell_2$ variant from \eqref{eq:L2_ECE} throughout. 

While this measure is both widely used and intuitive, it presents several practical challenges. In general, the conditional expectation appearing in~\eqref{eq:L2_ECE} cannot be reliably estimated without placing assumptions on the distribution (e.g., smoothness assumptions), and consequently reliably estimating $\lECE(f)$ in finite samples to assess whether a model $f$ is approximately calibrated remains a significant practical challenge, despite being central desiderata of calibration measures \citep{lee2023t,ciosek2026measuring,rossellini2025can,brocker2022uniform,gupta2020distribution,angelopoulos2024theoretical}.  In other words, any nonasymptotic guarantee for accurately estimating $\lECE(f)$ must inherently rely on assuming certain properties of $f$; it is impossible to construct a uniformly consistent estimator of $\lECE$ (see Remark~\ref{rmk:hardness_result} for a precise formulation of this hardness result).

\subsubsection{Binned approximations to ECE} 
Due to the difficulty of estimating ECE, in practice a common approach is to consider a binned approximation of $\lECE$ \citep{guo2017calibration,naeini2015obtaining}, where the calibration of a predictor $f$ is estimated within each bin $\cB_j$ of a partition $[0,1] = \cB_1 \cup \dots \cup \cB_j$:
\begin{align}\label{eq:L2_binned_ECE}
\ell_2\text{-binECE}(f) = \sum_{j=1}^{K} \P\left(f(X)\in \cB_j\right) \cdot  \E\left[Y-f(X)\mid f(X)\in \cB_j\right]^2.
\end{align}
This quantity can equivalently be interpreted as the $\lECE$ of the piecewise-constant approximation of $f$ formed by averaging within each bin: that is, $\ell_2\text{-binECE}(f) = \lECE(f_{\mathrm{bin}})$, where
\begin{equation}\label{eqn:fbin}
    f_{\mathrm{bin}}(x) = \E[f(X) \mid f(X)\in \cB_j]\textnormal{ for any $x$ with $f(x)\in \cB_j$}.
\end{equation}Typically, in finite samples, $\lbinECE(f)$ is estimated by comparing empirical mean of predicted probabilities with observed label frequencies within each bin on a held-out test set (see \eqref{eq:binhatECE}). However, this binning-based approximation may fail to capture certain forms of miscalibration (see Example 3.2 in \cite{kumar2019verified}). Moreover, the choice of the number of bins is critical, and such methods are known to be highly sensitive to implementation choices and to exhibit inherent bias \citep{nixon2019measuring,roelofs2022mitigating,futami2024information,tao2024benchmark}, often necessitating additional debiasing procedures for reliable inference \citep{sun2024confidence,lee2023t,kumar2019verified}. 

Beyond binning-based approaches, plug-in estimators using nonparametric estimates of the conditional expectation function, such as kernel smoothing \citep{zhang2020mix,blasiok2023smooth} and local polynomial regression \citep{austin2019integrated,cleveland2017local}, have also been explored. However, as mentioned above, it is impossible to achieve nonasymptotic assumption-free consistent estimation of $\lECE$ with any estimator.

\subsubsection{Alternatives to ECE}

ECE (and its variants) is perhaps the most widely used measure of predictive calibration. However, due to the challenges of estimating ECE from finite samples, several alternative notions have been proposed to characterize the calibration condition in \eqref{eq:def_calibration}. Kernel-based measures, including Maximum Mean Calibration Error \citep{kumar2018trainable} and Kernel Calibration Error \citep{widmann2019calibration,widmann2021calibration}, quantify calibration via RKHS embeddings of the calibration residual. \citet{gopalan2022low} introduced weighted calibration error (wCE), a general framework parameterized by a class of test functions that subsumes $\ell_1$-ECE and kernel-based measures. \citet{rossellini2025can} proposed cutoff calibration by restricting test functions to interval indicators. \citet{blasiok2023unifying} introduced distance from calibration (dCE), which measures the distance to the nearest perfectly calibrated predictor and is equivalent to wCE with 1-Lipschitz test functions; this notion is also closely related to weak calibration studied in \citet{kakade2008deterministic}.

\subsection{Our Contributions}

Motivated by the challenges of estimating $\lECE$ and the limitations of binning-based approaches, we introduce $\rankECE$ as a tuning-free, rank-based measure of calibration. Our theoretical results show that $\rankECE$ provides a more reliable approximation to $\lECE$, as compared to the binning-based approach. Moreover, $\rankECE$ offers the favorable property that it is zero if and only if $f$ is perfectly calibrated; consequently, we develop finite-sample and asymptotically valid tests that are consistent for detecting departures from perfect calibration. 
Our experiments demonstrate the effectiveness of the proposed measure. Across a wide range of settings, $\rankECE(f)$ consistently provides a closer approximation to $\lECE(f)$ than the widely used $\lbinECE(f)$. We further validate this finding on real-world sentiment prediction tasks using the Amazon and Yelp Polarity corpora. Finally, we show that calibration tests based on $\rankECE(f)$ perform competitively with existing SKCE-based tests \citep{widmann2019calibration}.

The rest of the paper is organized as follows. Section~\ref{sec:def_rank_ece} introduces the proposed rank-based measure of calibration. Section~\ref{sec:rank_prop} studies its theoretical properties as approximation to $\lECE$ and its use for calibration testing. Section~\ref{sec:compare} compares it with existing binning-based approaches. Section~\ref{sec:experiments} presents numerical experiments on both synthetic and real data. All proofs are presented in the appendix, together with some additional results and extensions.

\section{A Rank Measure for Calibration}\label{sec:def_rank_ece}
In this section, we formally introduce our proposed rank-based measure of calibration. We begin by establishing the necessary notation and assumptions. Let $\cX$ denote the feature space (assumed to be a standard Borel space), with $X\in\cX$ representing the covariates/features, and let $Y\in\{0,1\}$ denote the binary label. Consider a pre-trained predictor $f:\cX\ra[0,1]$ for the conditional probability $\P(Y=1\mid X)$. 
For notational convenience, define $Z = f(X)$. Throughout this work, for convenience we will implicitly assume that $Z$ is nonatomic (i.e., $\P(Z=t)=0$ for all $t\in[0,1]$), to avoid the possibility of ties. In particular, this assumption is needed in our definition of $\rankECE$ as well as in the theoretical results. In Appendix \ref{appendix:atomic} we will extend the definition and generalize our main results to remove this assumption.

Suppose that we observe samples $(Y_1,Z_1),\ldots,(Y_n,Z_n)$, where $Z_i=f(X_i)$ for all $1\leq i\leq n$. We define the sample rank-based approximation to $\lECE(f)$ as 
\begin{align}\label{eq:rank_ece_sample}
\hatrankECE_n(f)=\frac{1}{n}\sum_{i=1}^{n-1}\left(Y_{\pi(i)} - Z_{\pi(i)}\right)\left(Y_{\pi(i+1)} - Z_{\pi(i+1)}\right),
\end{align}
where $\pi:[n]\ra[n]$ is the unique permutation such that $Z_{\pi(1)}\leq Z_{\pi(2)}\leq \cdots \leq Z_{\pi(n)}$. The motivation for this definition follows directly from the following identity:
\begin{align*}
\lECE(f)=\E\left[\left|\E[Y\mid Z]-Z\right|^2\right]= \E\left[(Y-Z)(Y'-Z)\right],
\end{align*}
where $Y$ and $Y'$ are independent draws from the conditional distribution of $Y\mid Z$. To construct a sample analogue, we treat consecutive order statistics as having approximately equal predicted probabilities. That is, since $Z_{\pi(i)}$ and $Z_{\pi(i+1)}$ (which are neighbors in the sorted list of observed $Z$ values) are approximately equal, we interpret their associated labels $Y_{\pi(i)}$ and $Y_{\pi(i+1)}$ as approximately i.i.d.\ draws from the conditional distribution $Y\mid Z$, yielding the quantity in \eqref{eq:rank_ece_sample}.

The quantity defined in \eqref{eq:rank_ece_sample} is a sample-based estimate. Taking expectations on both sides of \eqref{eq:rank_ece_sample} we can now define the population analogue as
\begin{align}\label{eq:rank_ECE}
\rankECE_n(f)=\E\left[\hatrankECE_n(f)\right]=\E\left[\frac{1}{n}\sum_{i=1}^{n-1}\left(Y_{\pi(i)} - Z_{\pi(i)}\right)\left(Y_{\pi(i+1)} - Z_{\pi(i+1)}\right)
\right].
\end{align}
For notational brevity, from this point on we will generally omit the dependence on $n$ in $\rankECE(f)$ and $\hatrankECE(f)$, and will indicate the dependence on $n$ only when necessary.

Our next result (proved in Appendix~\ref{sec:proofof_prop_test_rank_finite}) establishes the concentration of $\hatrankECE(f)$ around its mean $\rankECE(f)$.
\begin{proposition}\label{prop:test_rank_finite}
    Take $n\geq 2$. For a predictor $f:\cX\ra[0,1]$, recall $\hatrankECE(f)$ and $\rankECE(f)$ from \eqref{eq:rank_ece_sample} and \eqref{eq:rank_ECE}, respectively. Then for any $\delta>0$, with probability at least $1-\delta$,
    \begin{align*}
        \left|\hatrankECE(f) - \rankECE(f)\right|\leq \sqrt{\frac{81\log (2/\delta)}{32n}}
    \end{align*}
\end{proposition}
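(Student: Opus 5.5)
The plan is to apply McDiarmid's bounded differences inequality to $\hatrankECE(f)$, viewed as a function of the $n$ i.i.d.\ pairs $(Y_1,Z_1),\ldots,(Y_n,Z_n)$. Its expectation is $\rankECE(f)$ by definition~\eqref{eq:rank_ECE}. If replacing any single pair changes $\hatrankECE(f)$ by at most $c/n$, McDiarmid gives
\begin{align*}
\P\left(\left|\hatrankECE(f)-\rankECE(f)\right|\geq t\right)\leq 2\exp\left(-\frac{2t^2}{n\cdot (c/n)^2}\right)=2\exp\left(-\frac{2nt^2}{c^2}\right).
\end{align*}
Setting the right-hand side equal to $\delta$ gives $t=\sqrt{c^2\log(2/\delta)/(2n)}$. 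To match the stated constant $81/32$ we need $c^2/2=81/32$, so the target is $c=9/4$.

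\textbf{Bounded difference.} Write $a_i=Y_i-Z_i\in[-Z_i,1-Z_i]\subseteq[-1,1]$, and work with $n\cdot\hatrankECE(f)=\sum_{i=1}^{n-1} a_{\pi(i)}a_{\pi(i+1)}$.

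Fix an index $j$ and let $S'$ be the same chain sum over the remaining $n-1$ points in sorted order. Inserting point $j$ with value $w=a_j$ between its sorted neighbours $u$ and $v$ removes the link $uv$ and adds the links $uw$ and $wv$. Hence the full sum equals $S'+h(u,w,v)$, where
\begin{align*}
h(u,w,v)=uw+wv-uv=w^2-(u-w)(v-w).
\end{align*}
If point $j$ lands at an end of the sorted order, the increment is just $h=uw$ with a single neighbour. Since $S'$ does not depend on point $j$, replacing $(Y_j,Z_j)$ by $(Y_j',Z_j')$ changes $n\cdot\hatrankECE(f)$ by $h-h'$, where $h'$ is the increment at the new insertion location. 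The total change is therefore at most $\sup h-\inf h$, and it suffices to show $h\in[-9/8,9/8]$ in every configuration.

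\textbf{Main obstacle: the range of $h$.} With only $u,v,w\in[-1,1]$, the crude bound gives $h\in[-3,3]$ (for example $u=v=-w=1$ gives $h=-3$), which is far too weak. The sharper range must come from the structure $a=Y-Z$ with $Y\in\{0,1\}$, together with the ordering $Z_u\leq Z_w\leq Z_v$ imposed by the sort.

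I would do a case analysis over the eight choices of $(Y_u,Y_w,Y_v)\in\{0,1\}^3$. In each case $h$ becomes an explicit quadratic in $(Z_u,Z_w,Z_v)$ on the simplex $0\leq Z_u\leq Z_w\leq Z_v\leq 1$. I would maximise and minimise it there by checking vertices, edges, and interior critical points. The value $9/8$ suggests extremes of the form $\tfrac12+\tfrac58$, or a quadratic like $z(1-z)$-type terms combined with a constant. The endpoint cases $h=uw$ give $|h|\leq 1$ and are easy.

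If a single case refuses to fit in $[-9/8,9/8]$, I would instead bound the difference $h-h'$ directly. This uses the fact that the old and new configurations share the same underlying set of other points, which couples $u,v,u',v'$. That joint analysis is the step I expect to be hardest. Once the bound $c=9/4$ is established, the McDiarmid computation above yields exactly $\sqrt{81\log(2/\delta)/(32n)}$.
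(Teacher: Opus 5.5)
Your reduction is the same as the paper's. You apply McDiarmid with a bounded difference of $\tfrac94\cdot\tfrac1n$, obtained by viewing point $j$ as inserted into the sorted chain of the other $n-1$ points. A replacement then changes $n\,\hatrankECE(f)$ by $h-h'$, where $h=uw+wv-uv$.

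\textbf{The gap.} Your sufficient condition $h\in[-9/8,9/8]$ is false. Take $(Y_u,Z_u)=(1,0)$, $(Y_w,Z_w)=(0,\tfrac12)$ and $(Y_v,Z_v)=(1,\tfrac12+\epsilon)$. Then $(u,w,v)=(1,-\tfrac12,\tfrac12-\epsilon)$, and $h\to-\tfrac12-\tfrac14-\tfrac12=-\tfrac54$. The value $h=1$ is also attained, for example with $(Y_u,Z_u)=(1,0)$, $(Y_v,Z_v)=(0,1)$ and any $w$. So the only interval of width $9/4$ containing the range of $h$ is $[-5/4,1]$. No case analysis over the $Y$'s can give a symmetric bound, and the coupling of $h$ with $h'$ that you expect to be the hardest step is unnecessary.

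\textbf{The fix.} You already noted that only $\sup h-\inf h\le 9/4$ is needed, and the true range $[-5/4,1]$ has exactly that width. This is the paper's lemma: $-\tfrac54\le ab+bc-ac\le 1$ for $a,b,c\in[-1,1]$ with $c-1\le b\le a+1$, applied with $(a,b,c)=(u,w,v)$.

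\begin{itemize}
\item The two side constraints follow from the ordering alone: $w\le 1-Z_w\le 1-Z_u\le u+1$ and $v-1\le -Z_v\le -Z_w\le w$.
\item These constraints are what rule out your crude example $u=v=-w=1$, so the eight-way split over the $Y$'s is not needed.
\item To verify the inequality, set $p=u-w\ge -1$ and $q=v-w\le 1$, so that $h=w^2-pq$. Combining these with $u,v\in[-1,1]$ gives $-(1-w^2)\le pq\le 1+|w|$.
\item Hence $h\le 1$ and $h\ge w^2-|w|-1\ge -\tfrac54$.
\item The one-neighbour increments $uw$ lie in $[-1,1]$, which is inside $[-5/4,1]$.
\end{itemize}

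So $|h-h'|\le \tfrac94$ in every configuration, and your McDiarmid computation then yields the stated constant $81/32$.
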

\noindent In particular, this result places no assumptions on $f$, and thus establishes that $\rankECE$ (which is a population quantity) can be estimated consistently in an assumption-free regime (via the sample estimator $\hatrankECE$).

\section{Properties of $\rankECE$}\label{sec:rank_prop}
Having defined our proposed measure of calibration $\rankECE$ (and established the assumption-free accuracy of its sample-based estimator $\hatrankECE$), we now return to our motivating question: can this new measure offer a better approximation to $\lECE$, as compared to existing alternatives?

In this section, we examine the properties of $\rankECE$, focusing on comparing it to $\lECE$. Afterwards (in Section~\ref{sec:compare}), we will then compare to the binning-based approximation, $\lbinECE$, to establish the benefits of our proposal.

\subsection{$\rankECE$ as an approximation to $\lECE$}
Given the challenges in estimating $\lECE(f)$ discussed in the introduction, our first aim is to explore the extent to which $\rankECE$ provides a reliable approximation to $\lECE$.

We begin by showing that miscalibration captured by $\rankECE$ implies miscalibration in terms of $\lECE$. The proof of this proposition is given in Appendix \ref{sec:proofof_prop_lower_bound}. 
\begin{proposition}\label{prop:lower_bound}
For all $n\ge 2$ and any predictor $f:\cX\ra[0,1]$,
\begin{align*}
    0 \leq \rankECE(f) \leq \lECE(f),
\end{align*}
where $\rankECE(f)$ and $\lECE(f)$ are defined in \eqref{eq:rank_ECE} and \eqref{eq:L2_ECE}, respectively.
\end{proposition}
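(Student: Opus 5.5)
Write $g(z)=\E[Y\mid Z=z]-z$, so that $\lECE(f)=\E[g(Z)^2]$. The plan is to condition on the sorted values $Z_{(1)}<\cdots<Z_{(n)}$ and compute each summand of \eqref{eq:rank_ECE}. Given the full vector $(Z_1,\ldots,Z_n)$, the labels $Y_1,\ldots,Y_n$ are conditionally independent with $\E[Y_i\mid Z_1,\ldots,Z_n]=\E[Y_i\mid Z_i]$, because the pairs are i.i.d. The permutation $\pi$ is a function of the $Z$'s, and is a.s.\ unique by nonatomicity. For $i\ne j$ this gives
\[
\E\bigl[(Y_{\pi(i)}-Z_{\pi(i)})(Y_{\pi(i+1)}-Z_{\pi(i+1)})\,\big|\,Z_1,\ldots,Z_n\bigr]=g(Z_{(i)})\,g(Z_{(i+1)}).
\]
Hence
\[
\rankECE(f)=\frac1n\,\E\Bigl[\sum_{i=1}^{n-1} g(Z_{(i)})\,g(Z_{(i+1)})\Bigr].
\]

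The upper bound follows from $ab\le \tfrac12(a^2+b^2)$:
\[
\sum_{i=1}^{n-1} g(Z_{(i)})g(Z_{(i+1)})\le \sum_{i=1}^n g(Z_{(i)})^2-\tfrac12\bigl(g(Z_{(1)})^2+g(Z_{(n)})^2\bigr)\le \sum_{i=1}^n g(Z_i)^2 .
\]
Taking expectations and dividing by $n$ gives $\rankECE(f)\le \E[g(Z)^2]=\lECE(f)$.

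Nonnegativity is the main obstacle, since a single realization of $\sum_i g(Z_{(i)})g(Z_{(i+1)})$ can be negative. The first step is to reduce to a uniform variable. Let $F$ be the continuous CDF of $Z$ and set $U=F(Z)\sim\mathrm{Unif}[0,1]$. Then $Z=F^{-1}(U)$ a.s., the ordering is preserved, and we may write $g(Z)=h(U)$ with $h\in L^2[0,1]$ and $|h|\le 1$.

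The next step is to write each adjacent-pair expectation as an integral. The $i$-th and $(i+1)$-th uniform order statistics have joint density
\[
\frac{n!}{(i-1)!\,(n-i-1)!}\,u^{i-1}(1-v)^{n-i-1}\quad\text{on } u<v.
\]
Summing over $i$ with the binomial theorem collapses the kernel:
\[
\sum_{i=1}^{n-1}\E\bigl[h(U_{(i)})h(U_{(i+1)})\bigr]=n(n-1)\iint_{u<v} h(u)h(v)\,(1-(v-u))^{n-2}\,\rmd u\,\rmd v .
\]
By symmetry this equals
\[
\tfrac{n(n-1)}{2}\iint h(u)h(v)\,K(u-v)\,\rmd u\,\rmd v,\qquad K(t)=(1-|t|)_+^{\,n-2}.
\]

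So nonnegativity reduces to showing that $K$ is a positive-definite function on $\R$; then the quadratic form $\iint h(u)h(v)K(u-v)\,\rmd u\,\rmd v$ is nonnegative for $h$ extended by zero outside $[0,1]$. I would invoke Askey's theorem: $(1-|t|)_+^{\nu}$ is positive definite on $\R$ whenever $\nu\ge 1$. For $n=2$ the kernel $K\equiv 1$ gives $(\int h)^2\ge 0$, and for $n=3$ the triangle kernel has Fourier transform $\propto \mathrm{sinc}^2\ge0$. If I want to avoid citing Askey, I would use a self-contained route. Write $(1-|t|)_+^{m}$ via Pólya's criterion: for $m\ge1$ it is even, convex on $[0,\infty)$, decreasing, and tends to $0$. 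Pólya's theorem then gives positive definiteness, which completes the argument.
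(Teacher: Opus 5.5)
Your proof is correct. The upper bound is essentially the paper's argument: the paper adds the wrap-around product $|r(Z_{\pi(n)})|\,|r(Z_{\pi(1)})|$ and invokes the rearrangement inequality, and your $ab\le\frac12(a^2+b^2)$ does the same job more directly. Your proof of nonnegativity, however, takes a genuinely different route.

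The paper works term by term. It conditions on the outer neighbors $Z_{\pi(i-1)}$ and $Z_{\pi(i+2)}$ (with $Z_{\pi(0)}=0$, $Z_{\pi(n+1)}=1$). Given these, the middle pair is distributed as the sorted values of two i.i.d.\ draws from $Z$ restricted to the interval between them. By symmetry of the product, each summand is the expectation of a squared conditional mean, hence $\ge 0$.

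You instead collapse all $n-1$ terms into one translation-invariant quadratic form on the quantile scale,
\[
\rankECE(f)=\frac{n-1}{2}\iint_{[0,1]^2}h(u)h(v)\,(1-|u-v|)^{n-2}\,\rmd u\,\rmd v .
\]
You then use that $(1-|t|)_+^{n-2}$ is a characteristic function for $n\ge3$ (P\'olya), so the form equals $\int|\hat h|^2\,\rmd\mu\ge 0$ by Fubini. The case $n=2$ is trivial.

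What the paper's route buys: it is elementary, with no quantile transform or Fourier analysis. It also gives the stronger termwise nonnegativity. It is exactly the engine reused for Theorem~\ref{thm:null_calibration}, and within bins with centered residuals $r-\mu_k$ for Theorem~\ref{thm:rank_bin_comparison}.

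What your route buys: it needs more machinery and only controls the total, but the closed form is informative. The function $\kappa_n(t)=\frac{n-1}{2}(1-|t|)_+^{n-2}$ is a probability density with $\int|t|\,\kappa_n(t)\,\rmd t=1/n$. Extending $h$ by zero, $\rankECE(f)=\langle h,\kappa_n*h\rangle$, i.e.\ a kernel-smoothed $\lECE$ with bandwidth of order $1/n$ on the quantile scale. This makes the convergence $\rankECE_n(f)\to\lECE(f)$ in Proposition~\ref{prop:consistency} an immediate approximate-identity fact. Moreover, the Fourier transform of $\kappa_n$ vanishes only on a null set for $n\ge3$. So the same representation also yields the zero characterization of Theorem~\ref{thm:null_calibration}, already at $n=3$.
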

\noindent At finite $n$, Proposition \ref{prop:lower_bound} establishes that $\rankECE(f)$ is a lower bound on $\lECE(f)$, but does not quantify the gap between the two. In other words, $\rankECE$ cannot systematically overestimate the $\lECE$, but we now need to examine whether $\rankECE$ might result in a drastic underestimate. To do so, we first consider an asymptotic result, showing that $\rankECE(f)$ offers consistent estimation of $\lECE(f)$ for any \emph{fixed} predictor $f$. 

\begin{proposition}\label{prop:consistency}
Fix a predictor $f:\cX\ra[0,1]$. The empirical estimate $\hatrankECE_n(f)$ converges in probability to $\lECE(f)$ as $n\ra\infty$. Consequently, $\rankECE_n(f)\to\lECE(f)$ as $n\to\infty$.
\end{proposition}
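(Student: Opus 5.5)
By Proposition~\ref{prop:test_rank_finite}, $\hatrankECE_n(f)-\rankECE_n(f)\to 0$ in probability. Both quantities are bounded in $[-1,1]$, so the two claims are equivalent, and it suffices to show $\rankECE_n(f)\to\lECE(f)$ deterministically. Write $m(z)=\E[Y\mid Z=z]-z$, so $\lECE(f)=\E[m(Z)^2]$. Conditioning on $Z_1,\dots,Z_n$, the labels are independent given the $Z$'s, and $Y_{\pi(i)}$, $Y_{\pi(i+1)}$ are distinct samples. Hence
\begin{align*}
\rankECE_n(f)=\E\left[\frac1n\sum_{i=1}^{n-1} m(Z_{(i)})\,m(Z_{(i+1)})\right],
\end{align*}
where $Z_{(1)}<\dots<Z_{(n)}$ are the order statistics. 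This is exact because $Z$ is nonatomic, so there are no ties.

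Next I compare this with $\E[\frac1n\sum_{i} m(Z_{(i)})^2]=\frac{n}{n}\E[m(Z)^2]=\lECE(f)$; the boundary term $m(Z_{(n)})^2/n$ is $O(1/n)$. Using $ab=\tfrac12(a^2+b^2)-\tfrac12(a-b)^2$, the gap satisfies
\begin{align*}
\lECE(f)-\rankECE_n(f)=\frac{1}{2n}\E\left[\sum_{i=1}^{n-1}\big(m(Z_{(i)})-m(Z_{(i+1)})\big)^2\right]+O(1/n).
\end{align*}
So the proof reduces to showing that the average squared increment of $m$ along consecutive order statistics vanishes. This gives a clean exact representation, and it also reproves the upper bound of Proposition~\ref{prop:lower_bound}.

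The main obstacle is that $m$ is only measurable and bounded, not continuous. I plan an $L^2$ approximation argument. Let $F$ be the CDF of $Z$; since $Z$ is nonatomic, $U=F(Z)$ is Uniform$(0,1)$. Write $m=g\circ F$ a.s. for some bounded measurable $g$ on $[0,1]$. Given $\epsilon>0$, choose a continuous $h$ on $[0,1]$ with $\|g-h\|_{L^2(\mathrm{Unif})}\le\epsilon$, using density of continuous functions in $L^2$. Then
\begin{align*}
(a_i-a_{i+1})^2\le 3(a_i-h_i)^2+3(h_i-h_{i+1})^2+3(h_{i+1}-a_{i+1})^2,
\end{align*}
where $a_i=g(U_{(i)})$ and $h_i=h(U_{(i)})$. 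The first and third terms, averaged over $i$ and in expectation, are at most $\tfrac{6}{n}\E\sum_i (g-h)^2(U_i)\le 6\epsilon^2$, because the sum over order statistics equals the sum over the sample. For the middle term, $h$ is uniformly continuous and the maximal spacing of $n$ uniforms tends to $0$ in probability. Since $h$ is bounded, dominated convergence makes $\frac1n\E\sum_i(h_i-h_{i+1})^2\to0$.

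Therefore $\limsup_n(\lECE(f)-\rankECE_n(f))\le 9\epsilon^2$ for every $\epsilon>0$, which gives $\rankECE_n(f)\to\lECE(f)$. Combining this with Proposition~\ref{prop:test_rank_finite} yields the convergence in probability of $\hatrankECE_n(f)$.
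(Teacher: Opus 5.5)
Your proof is correct. Its skeleton matches the paper's: reduce to $\rankECE_n(f)\to\lECE(f)$ via Proposition~\ref{prop:test_rank_finite}, use the conditional-independence identity (your $m$ is the paper's $r$), replace the residual by a continuous function, and exploit small spacings. The tools differ, though.

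\textbf{The paper's route.} It writes $r(Z_{\pi(i)})r(Z_{\pi(i+1)})=r(Z_{\pi(i)})^2+r(Z_{\pi(i)})\bigl(r(Z_{\pi(i+1)})-r(Z_{\pi(i)})\bigr)$ and bounds the mean \emph{absolute} increment. It obtains a continuous $\bar r_\vep$ from Lusin's theorem and the Tietze extension, agreeing with $r$ outside a set of $P_Z$-probability below $\vep$. It then counts the consecutive pairs touching that set, at most two per bad sample. Spacings are handled deterministically: consecutive spacings sum to at most $1$, so at most $1/\delta$ of them exceed $\delta$.

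\textbf{Your route.} Your polarization identity gives the gap exactly: $\lECE(f)-\rankECE_n(f)=\frac{1}{2n}\E\sum_{i}\bigl(m(Z_{(i)})-m(Z_{(i+1)})\bigr)^2$ plus a boundary term in $[0,1/n]$. This pairs naturally with plain $L^2$-density of continuous functions and with the fact that sums over order statistics equal sums over the sample. It also buys more than consistency. It reproves $\rankECE\le\lECE$ without the rearrangement inequality. And since $(a-b)^2\le 2|a-b|$ for $a,b\in[-1,1]$, it immediately recovers the bound $(\|r\|_{\textnormal{TV}}+1)/n$ of Proposition~\ref{prop:rank_l2_finite_bd}.

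\textbf{An unnecessary step.} Your detour through $U=F(Z)$ and the maximal uniform spacing is valid but not needed. You can approximate $m$ directly in $L^2(P_Z)$ by a continuous $h$ on $[0,1]$ and use the paper's counting argument on the original scale. That bounds $\frac1n\sum_i\bigl(h(Z_{(i)})-h(Z_{(i+1)})\bigr)^2$ by $\omega_h(\delta)^2+4\|h\|_\infty^2/(n\delta)$ deterministically, where $\omega_h$ is the modulus of continuity of $h$. This works even when the support of $Z$ has gaps.
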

\noindent This asymptotic result holds for each fixed $f$, and does not provide any uniform control of the gap; indeed, the hardness results for estimating $\lECE$ (discussed in Section~\ref{sec:background}) imply that it would not be possible to provide an assumption-free (i.e., uniform over all $f$) bound on the gap between $\rankECE(f)$ and $\lECE(f)$. However,
the following result shows that this gap is typically small, except in pathological cases where the residual function $r(Z)=\E[Y-Z\mid Z]$ is highly irregular. To formalize this notion, we introduce the total variation of a function $r$:
\begin{align}\label{eq:def_TV}
    \|r\|_{\textnormal{TV}} = \sup\left\{\sum_{i=0}^{k-1}\left|r(s_{i+1}) - r(s_i)\right|:0 = s_0<s_1<\cdots<s_{n_P-1}<s_k = 1,~ k\geq 2\right\}.
\end{align}
The following result quantifies the approximation error of $\rankECE(f)$ relative to $\lECE(f)$, in terms of the total variation of the residual function $r(Z)$ (see Appendix \ref{sec:proofof_prop_rank_l2_finite_bd} for the proof).
\begin{proposition}\label{prop:rank_l2_finite_bd}
    Consider a predictor $f:\cX\ra[0,1]$, and let $r(Z) = \E[Y-Z\mid Z]$ where $Z=f(X)$. Then, for $n\geq 2$,
    \begin{align*}
        \left|\rankECE(f) - \lECE(f)\right|\leq \frac{1}{n}\left(\|r\|_{\textnormal{TV}} + 1\right)
    \end{align*}
\end{proposition}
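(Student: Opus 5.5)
Conditioning on the sorted values $Z_{(1)}<\cdots<Z_{(n)}$, I will exploit that the labels are conditionally independent given the $Z$'s. Write $R_i = Y_i - Z_i$, so that $\E[R_i\mid Z_1,\dots,Z_n] = r(Z_i)$. Since $Z$ is nonatomic, $\pi$ is a.s.\ unique and depends only on the $Z$'s, and conditional independence gives
\begin{align*}
\rankECE(f) = \frac1n\sum_{i=1}^{n-1}\E\left[r(Z_{(i)})\,r(Z_{(i+1)})\right].
\end{align*}
On the other hand, $\lECE(f)=\E[r(Z)^2] = \frac1n\sum_{i=1}^n \E[r(Z_{(i)})^2]$, by exchangeability and because sorting does not change the sum $\sum_i r(Z_i)^2$.

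Taking the difference, I will write
\begin{align*}
\lECE(f)-\rankECE(f) = \frac1n\E\left[r(Z_{(n)})^2\right] + \frac1n\sum_{i=1}^{n-1}\E\left[r(Z_{(i)})\bigl(r(Z_{(i)})-r(Z_{(i+1)})\bigr)\right].
\end{align*}
The first term is at most $1/n$ because $|r|\le 1$. For the second, using $|r(Z_{(i)})|\le 1$ pointwise,
\begin{align*}
\left|\sum_{i=1}^{n-1} r(Z_{(i)})\bigl(r(Z_{(i)})-r(Z_{(i+1)})\bigr)\right| \le \sum_{i=1}^{n-1}\left|r(Z_{(i+1)})-r(Z_{(i)})\right| \le \|r\|_{\textnormal{TV}}.
\end{align*}
The last inequality holds because the ordered points $Z_{(1)}<\cdots<Z_{(n)}$ form a partition-type chain in $[0,1]$: extending it by $0$ and $1$ (if needed) only increases the sum, so it is bounded by the supremum in \eqref{eq:def_TV}. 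This already yields $|\rankECE-\lECE|\le \frac1n(\|r\|_{\textnormal{TV}}+1)$.

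To sharpen the constants, one could instead use the symmetric identity $a^2+b^2-2ab=(a-b)^2$. This gives a bound by $\frac1{2n}\bigl(\E[r(Z_{(1)})^2+r(Z_{(n)})^2] + \sum(r_{i+1}-r_i)^2\bigr)$, and $\sum(r_{i+1}-r_i)^2\le 2\sum|r_{i+1}-r_i|$ since each increment is at most $2$ in absolute value. This produces the same final bound, and it also directly shows the nonnegativity of the difference, consistent with Proposition~\ref{prop:lower_bound}.

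The main technical point is that $r$ is only defined $P_Z$-a.e. The total variation must therefore be interpreted for a version of $r$, for instance the one attaining the stated norm. I will note that modifying $r$ on a $P_Z$-null set changes neither $\rankECE$ nor $\lECE$, because a.s.\ all the $Z_{(i)}$ avoid that set. The conditioning step, which uses that the $Y_i$ are independent given $(Z_1,\dots,Z_n)$ with means $Z_i+r(Z_i)$, also requires care with the random permutation. This is handled by summing over the deterministic permutations $\sigma$, with the indicator $\{\pi=\sigma\}$ being $Z$-measurable.
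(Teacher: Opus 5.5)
Your proposal is correct and follows the paper's own proof: you write $\lECE(f)=\frac1n\E[\sum_{i=1}^n r(Z_{(i)})^2]$, isolate the boundary term $\frac1n r(Z_{(n)})^2\le \frac1n$, and bound $\sum_{i=1}^{n-1} r(Z_{(i)})(r(Z_{(i+1)})-r(Z_{(i)}))$ by $\sum_i |r(Z_{(i+1)})-r(Z_{(i)})|\le\|r\|_{\textnormal{TV}}$ after appending the endpoints $0$ and $1$. Two of your extras go beyond the paper: the $(a-b)^2$ identity gives $\rankECE(f)\le\lECE(f)$ directly, without the rearrangement inequality the paper uses for Proposition~\ref{prop:lower_bound}, and your remark that $r$ may be changed on a $P_Z$-null set addresses a version-dependence of $\|r\|_{\textnormal{TV}}$ that the paper leaves implicit.
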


Proposition \ref{prop:rank_l2_finite_bd} shows that $\rankECE(f)$ provides a tight approximation to $\lECE$ whenever the residual function has bounded variation. This assumption is mild in practice: prediction functions learned by standard classifiers are typically smooth or monotone, and the corresponding residual functions therefore have bounded variation \citep{ciosek2026measuring}; consequently, the approximation error decays at $O(1/n)$ rate in most practical settings. In Appendix \ref{appendix:example}, we complement this result with an example showing that the bounded variation assumption is essential. Specifically, we construct a sequence of residual functions whose total variation increases with $n$, for which $\rankECE(f)$ can be arbitrarily close to zero while $\lECE(f)$ remains bounded away from zero.

The results thus far establish $\rankECE(f)$ as a reasonably reliable approximation of $\seqsplit{\lECE(f)}$; this is complemented by the result of Proposition \ref{prop:test_rank_finite}, where we prove an assumption-free guarantee on the accuracy of its estimator $\hatrankECE(f)$. In other words, $\hatrankECE(f)$ is a sample-based, tuning-free estimator that cannot substantially overestimate $\lECE(f)$, and also will not substantially underestimate $\lECE(f)$ under a mild total variation condition. This is essentially the best we could hope for, given the hardness results preventing assumption-free consistent estimation of $\lECE(f)$ described in Section~\ref{sec:background} (see also Remark~\ref{rmk:hardness_result} below).
\begin{remark}
    The property of $\rankECE(f)$ established in Proposition \ref{prop:test_rank_finite} has been referred to in the literature as \textit{testability} \citep{rossellini2025can,okoroafor2025nearoptimal}. Informally, this means that one can empirically verify whether a predictor $f$ is approximately calibrated with respect to this measure (which holds for $\rankECE(f)$, via its estimator $\hatrankECE(f)$, whose accuracy is established in Proposition \ref{prop:test_rank_finite} with no assumptions on $f$). When a calibration measure is testable, it can be used to ensure that a predictor is suitable for downstream decision-making. More broadly, two properties are especially desirable for any calibration measure \citep{rossellini2025can}. The first is \textit{testability}, as described above. The second is \textit{actionability}, which requires that small calibration error under the given measure implies meaningful guarantees for downstream decision-making tasks. We study these properties for $\rankECE(f)$ in more detail in Appendix \ref{sec:test_action_rankece}.
\end{remark}

\begin{remark}\label{rmk:hardness_result}
    Above, we have mentioned past results on the fundamental hardness of estimating $\lECE$ in an assumption-free setting; these results explain why finite-sample statements (such as Proposition~\ref{prop:rank_l2_finite_bd}) cannot avoid assumptions (such as bounded total variation). Here we give a precise statement of such a hardness result, to clarify the challenge. The following result is adapted from \citet[Theorem 12.5]{angelopoulos2024theoretical} (see Appendix~\ref{app:hardness_result} for details). Let $\hat{E}_n(f)$ be any function of data $(X_1,Y_1),\dots,(X_n,Y_n)$ (intended to estimate $\lECE(f)$ for some pretrained predictor $f$). Assume the range of $f$ contains some positive-width interval, $\{f(x):x\in\cX\} \supseteq (c-\epsilon,c+\epsilon)$ for some $c\in(0,1),\epsilon>0$. Then 
    \[{\textstyle\sup_P} \,\left| \E[\hat{E}_n(f)] - \lECE(f)\right|\geq \frac{c(1-c)}{2},\]
    where the supremum is taken over all distributions $P$ on $(X,Y)\in\cX\times\{0,1\}$ (i.e., the expected value $\E[\hat{E}_n(f)]$ and the miscalibration value $\lECE(f)$ are computed with respect to $P$). Note that the right-hand side does not depend on the sample size $n$, and therefore does not vanish as $n\to\infty$, meaning that there is no possibility of a uniformly consistent estimator for $\lECE$.
\end{remark}

\subsection{Testing for Perfect Calibration}\label{sec:perfect_calib}
In the previous sections, we motivated $\rankECE(f)$ as a meaningful measure of calibration for predictive models in binary classification. Given this interpretation, a natural question is whether $\rankECE(f)$ can also be used to test whether a predictive model is perfectly calibrated. The calibration hypothesis requires that \eqref{eq:def_calibration} holds almost surely. Equivalently, we consider the hypothesis testing problem
\begin{align}\label{eq:null_calibration}
    \bm{H}_0: \P\Big(\E[Y\mid f(X)] = f(X)\Big) = 1
    \quad\text{vs.}\quad
    \bm{H}_1: \P\Big(\E[Y\mid f(X)] = f(X)\Big) < 1.
\end{align}
There has been a growing body of recent work on testing this hypothesis using a variety of notions of calibration and the corresponding calibration measures \citep{widmann2019calibration,glaser2023fast,gweon2023power,feng2024model,chatterjee2024kernel,lee2023t}. To leverage $\rankECE(f)$ for the purpose of testing for calibration, in the following result we first show that $\rankECE(f)$ provides an exact characterization of calibration.

\begin{theorem}\label{thm:null_calibration}
Let $n \geq 4$. Then, for any predictor $f:\cX \to [0,1]$, we have $\rankECE(f) = 0$ if and only if $f$ satisfies \eqref{eq:def_calibration}.
\end{theorem}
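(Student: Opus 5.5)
The "if" direction is immediate. If $f$ is calibrated then $r(Z)=\E[Y-Z\mid Z]=0$ almost surely. Conditional on the sorted values $Z_{\pi(1)},\dots,Z_{\pi(n)}$, the labels are independent with $\E[Y_{\pi(i)}-Z_{\pi(i)}\mid Z_{1:n}]=r(Z_{\pi(i)})$, since sorting depends only on the $Z$'s and the pairs are i.i.d. Each summand in \eqref{eq:rank_ECE} therefore has conditional expectation $r(Z_{\pi(i)})\,r(Z_{\pi(i+1)})=0$.

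For the converse, I would use the same conditioning to write
\[
\rankECE(f)=\frac1n\sum_{i=1}^{n-1}\E\bigl[r(Z_{(i)})\,r(Z_{(i+1)})\bigr],
\]
where $Z_{(i)}$ are the order statistics. Next I would reduce to uniforms. Let $F$ be the continuous CDF of $Z$ and set $g=r\circ F^{-1}$ on $(0,1)$, so that $Z_{(i)}\stackrel{d}{=}F^{-1}(U_{(i)})$ and $g\in L^2[0,1]$ with $\int g^2=\lECE(f)$. The joint density of $(U_{(i)},U_{(i+1)})$ at $(s,t)$ with $s<t$ is $\frac{n!}{(i-1)!(n-i-1)!}s^{i-1}(1-t)^{n-i-1}$. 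Summing over $i$ gives a kernel
\[
\sum_{i=1}^{n-1}\E[g(U_{(i)})g(U_{(i+1)})]=n(n-1)\iint_{s<t} g(s)g(t)\,(s+1-t)^{n-2}\,ds\,dt .
\]
By symmetry this equals
\[
\tfrac{n(n-1)}{2}\iint g(s)g(t)\,K(s,t)\,ds\,dt,\qquad K(s,t)=(1-|s-t|)^{n-2}.
\]

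The key step, and the main obstacle, is to show that $K$ is a \emph{strictly} positive definite kernel on $[0,1]$ when $n\ge4$. Then $\rankECE(f)=0$ forces $g=0$ a.e., hence $r(Z)=0$ a.s., which is \eqref{eq:def_calibration}. This also re-derives the lower bound $\rankECE\ge0$ from Proposition~\ref{prop:lower_bound}.

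My plan for this step is to extend $K$ to the translation-invariant function $\phi(u)=(1-|u|)_+^{\,m}$ with $m=n-2$. For $m\ge2$, this is a truncated-power (Askey-type) function, which is positive definite on $\R$. I would show this via its Fourier transform: $\phi$ is a scale mixture of triangle functions $(1-|u|/a)_+$, which can be seen by writing $(1-|u|)_+^m=\int (1-|u|/a)_+\,\mu(da)$ with a nonnegative density obtained by differentiating twice; Pólya's criterion applies because $\phi$ is convex on $[0,1]$. Each triangle has Fourier transform $a\,\mathrm{sinc}^2(a\omega/2)\ge0$. Mixing over a continuum of scales gives a transform that is strictly positive almost everywhere, since the zeros of the $\mathrm{sinc}^2$ factors move with $a$.

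Then, with $\tilde g$ equal to $g$ extended by zero,
\[
\iint g(s)g(t)\phi(s-t)\,ds\,dt=\frac1{2\pi}\int|\hat{\tilde g}(\omega)|^2\hat\phi(\omega)\,d\omega,
\]
and vanishing forces $\hat{\tilde g}=0$ a.e., so $g=0$. The restriction $n\ge4$ enters precisely here. For $n=3$, $\phi$ is a single triangle, whose transform has zeros; it is then not strictly positive definite, and the characterization fails. For $n=2$ the kernel is constant, so only the mean residual is detected.
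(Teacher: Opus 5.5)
Your proof is correct for $n\ge 4$, but it takes a different route from the paper.

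\textbf{The paper's route.} The paper never computes a kernel. After the same reduction to uniforms, it conditions on the flanking order statistics $(U_{(i-1)},U_{(i+2)})$, with $U_{(0)}=0$ and $U_{(n+1)}=1$. Under this conditioning the middle pair is an i.i.d.\ uniform pair on that interval, which gives the sum-of-squares identity
\[
\rankECE(f)=\frac1n\sum_{i=1}^{n-1}\E\bigl[G(U_{(i-1)},U_{(i+2)})^2\bigr],\qquad G(x,y)=\frac{1}{y-x}\int_x^y g(t)\,dt .
\]
The $i=2$ term vanishes, and the density of $(U_{(1)},U_{(4)})$ is positive, so $\int_x^y g=0$ for almost every $x<y$, hence by continuity for every $x<y$. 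A $\pi$-system argument finishes. This route is elementary, and the same conditioning device is reused in Proposition~\ref{prop:lower_bound} and Theorem~\ref{thm:rank_bin_comparison}.

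\textbf{What your route buys.} You obtain the exact form $\rankECE(f)=\iint g(s)g(t)\,k_n(s-t)\,ds\,dt$ with $k_n(u)=\frac{n-1}{2}(1-|u|)_+^{n-2}$. Since $k_n$ is a probability density, $0\le\hat k_n\le 1$, and Plancherel gives both bounds $0\le\rankECE(f)\le\lECE(f)$ at once. Since $\hat k_n\to1$ pointwise, dominated convergence also gives $\rankECE_n(f)\to\lECE(f)$. The price is Fourier machinery (Pólya's criterion, Plancherel) in place of a short conditioning argument.

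\textbf{Your closing remark about $n=3$ is wrong.} The Plancherel step only needs $\hat\phi>0$ Lebesgue-almost everywhere, not everywhere. For $n=3$, $\hat\phi(\omega)=\bigl(\sin(\omega/2)/(\omega/2)\bigr)^2$ vanishes only on the null set $2\pi\mathbb{Z}\setminus\{0\}$. So $\int|\hat{\tilde g}|^2\hat\phi=0$ still forces $\hat{\tilde g}=0$ a.e.\ and hence $g=0$.

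The paper's argument also works at $n=3$. Its $i=1$ term is $\E[G(0,U_{(3)})^2]$, which forces $\int_0^y g=0$ for all $y$. So the characterization holds for every $n\ge3$ and fails only at $n=2$, where the kernel is constant.

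Your scale-mixture argument for strict positivity everywhere is harmless but unnecessary, and your proof in fact only needs $n\ge3$. In the paper, $n\ge4$ is just an artifact of choosing the interior pair $(U_{(1)},U_{(4)})$.
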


We refer the reader to Appendix \ref{sec:proofof_thm_null_calibration} for a proof of Theorem \ref{thm:null_calibration}. The property of $\rankECE(f)$ established in Theorem \ref{thm:null_calibration} mirrors the exact characterization of calibration provided by $\seqsplit{\lECE(f)}$ in \eqref{eq:def_calibration}. In contrast, $\lbinECE(f)$ does not satisfy this property: Example 3.2 of \citet{kumar2019verified} demonstrates that $\lbinECE(f)$ can be zero even when $f$ is miscalibrated. Such a characterization is a desirable property for any calibration measure, as it ensures that small values of the measure provide meaningful guarantees that $f$ satisfies the calibration condition in \eqref{eq:def_calibration}. 

In the remainder of this section, we develop two tests for the calibration hypothesis in \eqref{eq:null_calibration} based on the test statistic $\hatrankECE(f)$. The first is a finite-sample valid test, while the second is an asymptotic test.

\begin{proposition}\label{prop:finite_test}
    Fix $n\geq 4$ and $\alpha\in(0,1)$. For the observed predictions $\bZ=(Z_1,\ldots,Z_n)$, define
    \begin{align*}
        \sigma^2(\bZ)=\frac{1}{n}\sum_{i=1}^{n-1}
        Z_{(i)}(1-Z_{(i)})
        Z_{(i+1)}(1-Z_{(i+1)}),
    \end{align*}
    where $Z_{(1)}\leq \dots \leq Z_{(n)}$ are the order statistics of $Z_1,\dots,Z_n\in[0,1]$.
    Define the test function
    \begin{align*}
        \phi_{n,\mathrm{finite}}
        =
        \one\left\{
        \hatrankECE(f)>
        2\sqrt{\frac{\sigma^2(\bZ)\log(2/\alpha)}{n}}
        +\frac{2\log(2/\alpha)}{3n}
        \right\}.
    \end{align*}
    Then, under $\bm H_0$, $\P_{\bm H_0}\!\left(\phi_{n,\mathrm{finite}}=1\right)\leq\alpha$ and under $\bm H_1$, the test $\phi_{n,\mathrm{finite}}$ is consistent, i.e., $\P_{\bm H_1}\!\left(\phi_{n,\mathrm{finite}}=1\right)\to 1$.
\end{proposition}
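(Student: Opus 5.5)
The plan is to handle validity and consistency separately. For validity I will condition on $\bZ$ and use a Bernstein-type inequality for a sum with dependence along a chain. Under $\bm H_0$, conditional on $\bZ$ the labels $Y_i$ are independent with $\E[Y_i\mid\bZ]=Z_i$. This holds because the pairs $(Y_i,Z_i)$ are i.i.d. and $\E[Y\mid Z]=Z$ almost surely. Write $\varepsilon_i = Y_{\pi(i)}-Z_{\pi(i)}$. These are independent, mean zero, bounded in $[-1,1]$, with variance $Z_{(i)}(1-Z_{(i)})$. The statistic is $n\hatrankECE(f) = \sum_{i=1}^{n-1}\varepsilon_i\varepsilon_{i+1}$.

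To decouple the dependence between neighbouring terms, I split this sum into the odd-indexed terms $S_{\mathrm{odd}}=\sum_{i \text{ odd}}\varepsilon_i\varepsilon_{i+1}$ and the even-indexed terms $S_{\mathrm{even}}$. Within each group the products involve disjoint pairs, so they are independent conditional on $\bZ$. Each product has mean zero, absolute value at most $1$, and variance $Z_{(i)}(1-Z_{(i)})Z_{(i+1)}(1-Z_{(i+1)})$. Let $V_{\mathrm{odd}}$ and $V_{\mathrm{even}}$ denote the sums of these variances; then $V_{\mathrm{odd}}+V_{\mathrm{even}} = n\sigma^2(\bZ)$. I apply the one-sided Bernstein inequality to each group at level $\alpha/2$. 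With probability at least $1-\alpha/2$,
\[S_{\mathrm{odd}}\le \sqrt{2V_{\mathrm{odd}}\log(2/\alpha)}+\tfrac{1}{3}\log(2/\alpha),\]
and the same bound holds for $S_{\mathrm{even}}$. A union bound and the inequality $\sqrt{a}+\sqrt{b}\le\sqrt{2(a+b)}$ then give
\[n\hatrankECE(f)\le 2\sqrt{n\sigma^2(\bZ)\log(2/\alpha)}+\tfrac{2}{3}\log(2/\alpha).\]
Dividing by $n$ yields exactly the threshold in the statement. Integrating over $\bZ$ shows $\P_{\bm H_0}(\phi_{n,\mathrm{finite}}=1)\le\alpha$. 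The main point to check is this constant bookkeeping, in particular that the union bound over the two halves produces the factor $2$ and the $2/3$. The condition $n\ge4$ ensures both groups are nonempty, although an empty group is harmless.

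For consistency, first note that $\sigma^2(\bZ)\le 1/16$ deterministically. Hence the threshold is $O(\sqrt{\log(2/\alpha)/n})\to 0$. By Proposition~\ref{prop:consistency}, $\hatrankECE_n(f)\pto\lECE(f)$. Under $\bm H_1$ we have $\lECE(f)>0$, since $\lECE(f)=0$ exactly when \eqref{eq:def_calibration} holds. Therefore $\P_{\bm H_1}(\hatrankECE_n(f)>\lECE(f)/2)\to1$, and for all large $n$ the threshold is below $\lECE(f)/2$. So the rejection probability tends to $1$.

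I expect no serious obstacle in the consistency part. The delicate step is justifying the conditional independence of the labels given $\bZ$ after sorting. The permutation $\pi$ is $\bZ$-measurable (ties have probability zero by the nonatomic assumption), so sorting does not disturb the conditional independence of the $Y_i$ given $\bZ$.
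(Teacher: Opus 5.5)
Your proposal is correct and follows essentially the same argument as the paper: condition on $\bZ$, split $n\,\hatrankECE(f)$ into the two interleaved sums of products over disjoint consecutive pairs, apply Bernstein's inequality to each at level $\alpha/2$, combine via a union bound and $\sqrt{a}+\sqrt{b}\le\sqrt{2(a+b)}$, and get consistency from Proposition~\ref{prop:consistency} together with $\sigma^2(\bZ)\le 1/16$. On the constant bookkeeping you flagged, the $\tfrac{1}{3}\log(2/\alpha)$ term needs the sub-gamma form of Bernstein's inequality, $\P\bigl(S>\sqrt{2Vt}+t/3\bigr)\le e^{-t}$ for centered summands bounded by $1$; naively inverting the tail bound $\exp\bigl(-t^2/(2(V+t/3))\bigr)$ would only give $\tfrac{2}{3}\log(2/\alpha)$ per half.
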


The proof of Proposition \ref{prop:finite_test} is given in Appendix \ref{sec:proofof_prop_finite_test}. The validity of the test $\phi_{n,\mathrm{finite}}$ follows from an application of Bernstein's inequality conditional on $\bZ$. While this guarantees finite-sample control of the type I error, tests based on concentration inequalities are often conservative, which can lead to a loss of power. We therefore now develop an asymptotically valid test that achieves asymptotically exact level control while remaining consistent. The following result shows that, under the null hypothesis $\bm H_0$, the statistic $\hatrankECE(f)$ is asymptotically standard Gaussian after appropriate normalization.

\begin{theorem}\label{thm:null_asymptotic}
    For predictor $f:\cX\ra[0,1]$, under $\bm H_0$ from \eqref{eq:null_calibration}, it holds that
    \begin{align*}
        \frac{\sqrt{n}\ \hatrankECE_n(f)}{\sqrt{\E\left[Z^2(1-Z)^2\right]}}
        \dto
        \mathrm N(0,1).
    \end{align*}
\end{theorem}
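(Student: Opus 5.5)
Under $\bm H_0$ we have $\E[Y\mid Z]=Z$, so conditionally on $\bZ=(Z_1,\dots,Z_n)$ the residuals $\varepsilon_i=Y_i-Z_i$ are independent, mean zero, with variances $Z_i(1-Z_i)$. Write $\varepsilon_{(i)}=\varepsilon_{\pi(i)}$; since $\pi$ is a function of $\bZ$ alone, the $\varepsilon_{(i)}$ remain conditionally independent given $\bZ$, with $\Var(\varepsilon_{(i)}\mid\bZ)=Z_{(i)}(1-Z_{(i)})$. Hence
\[
\sqrt{n}\,\hatrankECE_n(f)=\frac{1}{\sqrt n}\sum_{i=1}^{n-1}\varepsilon_{(i)}\varepsilon_{(i+1)}
\]
is, conditionally on $\bZ$, a sum of a $1$-dependent sequence of bounded, mean-zero, uncorrelated terms. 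Indeed $\E[\varepsilon_{(i)}\varepsilon_{(i+1)}\varepsilon_{(j)}\varepsilon_{(j+1)}\mid\bZ]=0$ for $i\neq j$, because some index appears exactly once. The conditional variance is therefore exactly $\sigma^2(\bZ)$ from Proposition~\ref{prop:finite_test}.

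The plan is to prove a conditional CLT and then remove the conditioning. Since the summands form a martingale difference sequence with respect to the filtration $\mathcal G_i=\sigma(\bZ,\varepsilon_{(1)},\dots,\varepsilon_{(i+1)})$, I will use the martingale CLT with $\xi_{n,i}=n^{-1/2}\varepsilon_{(i)}\varepsilon_{(i+1)}$. Alternatively, a CLT for $m$-dependent triangular arrays would also work.

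\begin{itemize}
\item \textbf{Lindeberg.} This is trivial, since $|\xi_{n,i}|\le n^{-1/2}$.
\item \textbf{Conditional variance.} We need
\[
V_n=\sum_i\E[\xi_{n,i}^2\mid\mathcal G_{i-1}]=\frac1n\sum_{i=1}^{n-1}\varepsilon_{(i)}^2\,Z_{(i+1)}(1-Z_{(i+1)})\;\pto\;\E[Z^2(1-Z)^2].
\]
\end{itemize}

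I split $V_n=\sigma^2(\bZ)+\frac1n\sum_i\bigl(\varepsilon_{(i)}^2-Z_{(i)}(1-Z_{(i)})\bigr)Z_{(i+1)}(1-Z_{(i+1)})$. Given $\bZ$, the second term is a sum of independent, bounded, mean-zero terms, so its variance is $O(1/n)$ and it tends to $0$ in probability.

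For the main term, let $g(z)=z(1-z)$. Then
\[
\sigma^2(\bZ)=\frac1n\sum_{i} g(Z_{(i)})^2+\frac1n\sum_i g(Z_{(i)})\bigl(g(Z_{(i+1)})-g(Z_{(i)})\bigr).
\]
The first sum is $\frac1n\sum_j g(Z_j)^2\to\E[g(Z)^2]$ by the law of large numbers. For the second, $g$ is $1$-Lipschitz on $[0,1]$ and bounded by $1/4$, and the sorted values are monotone, so the correction is at most $\frac{1}{4n}\sum_i|Z_{(i+1)}-Z_{(i)}|\le\frac{1}{4n}$. This is a deterministic bound, with no regularity needed on the law of $Z$, and it is exactly where sorting pays off. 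Hence $V_n\pto\E[Z^2(1-Z)^2]$.

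The martingale CLT then gives $\sqrt n\,\hatrankECE_n(f)\dto\mathrm N(0,\E[Z^2(1-Z)^2])$. Dividing by the constant finishes the proof, provided $\E[Z^2(1-Z)^2]>0$. This holds because $Z$ is nonatomic and so cannot concentrate on $\{0,1\}$.

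The main care point is the conditioning structure. The filtration must include all of $\bZ$ at time $0$, which is legitimate because under $\bm H_0$ the $\varepsilon$'s are conditionally independent given the whole vector $\bZ$, even after reordering by $\pi$. If one prefers to avoid martingale theory, a fallback is to prove the CLT conditionally on $\bZ$ via Berry--Esseen or Stein bounds for $1$-dependent sums, with $\sigma^2(\bZ)\to\E[g(Z)^2]$ almost surely, and then integrate over $\bZ$ by dominated convergence applied to the characteristic functions.
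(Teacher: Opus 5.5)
Your proposal is correct and follows essentially the same route as the paper. Both use a martingale CLT with the filtration generated by all of $\bZ$ together with $\varepsilon_{(1)},\dots,\varepsilon_{(i+1)}$, a Lindeberg condition that is trivial by boundedness, and convergence of the conditional variance via an $O(1/n)$ conditional fluctuation, the law of large numbers, and the telescoping $1$-Lipschitz bound on $g(z)=z(1-z)$. The one small slip is that $\frac1n\sum_{i=1}^{n-1}g(Z_{(i)})^2$ differs from $\frac1n\sum_{j=1}^n g(Z_j)^2$ by the boundary term $g(Z_{(n)})^2/n\le 1/(16n)$; the paper keeps this term explicitly, and it is negligible.
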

The proof of Theorem \ref{thm:null_asymptotic} in Appendix \ref{sec:proofof_thm_null_asymptotic} proceeds by identifying $\hatrankECE(f)$ as the terminal value of a martingale sequence and applying the martingale CLT. The Gaussian limit established in Theorem \ref{thm:null_asymptotic} naturally leads to an asymptotically valid test, by replacing the normalizing constant $\E\left[Z^2(1-Z)^2\right]$ with its sample estimate (see Appendix \ref{appendix:proofof_cor_rank_test} for proof), as follows:

\begin{corollary}\label{cor:rank_test_calib}
    For any predictor $f:\cX\ra[0,1]$, and any $\alpha\in(0,1)$, define the test function
\begin{align}\label{eq:phi_asymptotic_test}
    \phi_n
    =
    \one\left\{
        \frac{\sqrt{n}\ \hatrankECE_n(f)}
        {\sqrt{\frac{1}{n}\sum_{i=1}^{n}Z_i^2(1-Z_i)^2}}
        > z_{\alpha}
    \right\},
\end{align}
where $z_{\alpha}$ denotes the $(1-\alpha)$-quantile of the standard Gaussian distribution $\mathrm N(0,1)$.
Then the test $\phi_n$  satisfies
    \begin{align*}
        \P_{\bm H_0}(\phi_n=1)\ra\alpha,
        \qquad
        \P_{\bm H_1}(\phi_n=1)\ra 1.
    \end{align*}
\end{corollary}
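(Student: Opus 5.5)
The corollary has two parts: asymptotic size $\alpha$ under $\bm H_0$, and consistency under $\bm H_1$. For both, we will handle the random normalizer through Slutsky's lemma. Write $\hat\tau_n^2=\frac1n\sum_{i=1}^n Z_i^2(1-Z_i)^2$ and $\tau^2=\E[Z^2(1-Z)^2]$. The summands are i.i.d. and bounded in $[0,1/16]$, so the weak law of large numbers gives $\hat\tau_n^2\pto\tau^2$.

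\emph{Size under $\bm H_0$.} First assume $\tau^2>0$. Combining Theorem~\ref{thm:null_asymptotic} with Slutsky's lemma, the statistic $\sqrt n\,\hatrankECE_n(f)/\hat\tau_n$ converges in distribution to $\mathrm N(0,1)$. The standard Gaussian distribution function is continuous, so $\P_{\bm H_0}(\phi_n=1)\to \P(\mathrm N(0,1)>z_\alpha)=\alpha$.

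The degenerate case $\tau^2=0$ needs a short separate check. Here $Z\in\{0,1\}$ almost surely, which is ruled out by the standing nonatomic assumption on $Z$. If one drops that assumption, then under $\bm H_0$ we have $Y=Z$ almost surely, so every residual vanishes. The statistic then equals $0$ and is never rejected. This case should be either excluded or noted.

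\emph{Consistency under $\bm H_1$.} By Proposition~\ref{prop:consistency}, $\hatrankECE_n(f)\pto\lECE(f)$. Under $\bm H_1$ we have $\lECE(f)>0$, since the squared residual is positive on a set of positive probability. Also $\hat\tau_n^2\pto\tau^2\le 1/16$, so $\hat\tau_n$ is bounded above by a constant.

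Rejection is implied by the event $\hatrankECE_n(f)>\lECE(f)/2$ together with $\hat\tau_n\le 1/4$. On that event the statistic is at least $\sqrt n\cdot(\lECE(f)/2)\cdot 4$, which tends to infinity. Hence it exceeds the fixed threshold $z_\alpha$ with probability tending to one, and $\P_{\bm H_1}(\phi_n=1)\to1$. The bound $\hat\tau_n\le 1/4$ always holds, so no lower bound on $\hat\tau_n$ is needed. If $\hat\tau_n=0$ while the numerator is positive, we interpret the ratio as $+\infty$, which again means rejection.

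\emph{Main obstacle.} All the real work is in Theorem~\ref{thm:null_asymptotic} and Proposition~\ref{prop:consistency}, which we assume. Given those, the only delicate point is the degenerate normalizer $\tau^2=0$, which the argument above handles or excludes.
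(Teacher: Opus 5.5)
Your proposal is correct and follows essentially the same route as the paper: the law of large numbers plus Slutsky's lemma applied to Theorem~\ref{thm:null_asymptotic} give the asymptotic size, and Proposition~\ref{prop:consistency} together with the boundedness of the normalizer gives consistency. Your explicit bound $\hat\tau_n\le 1/4$ and your remark that $\E[Z^2(1-Z)^2]>0$ follows from the nonatomicity of $Z$ only spell out details the paper leaves implicit.
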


\section{Comparing $\rankECE$ and $\lbinECE$}\label{sec:compare}

As discussed in Section \ref{sec:background}, the predominant approach for estimating $\lECE(f)$ is based on binning. In this section, we formally define the binning-based estimator, study its statistical properties, and, in particular, characterize its inherent bias. We then use these results to demonstrate that $\rankECE(f)$ provides a more faithful approximation to $\lECE(f)$.
\subsection{$\lbinECE$: properties and estimation}

From Section \ref{sec:background} and Remark~\ref{rmk:hardness_result}, recall that $\lECE(f)$ faces unavoidable challenges in practice, due to hardness results that prevent assumption-free consistent estimation. These challenges necessitate the introduction of the binning approach \citep{zadrozny2001obtaining,zadrozny2002transforming,futami2024information}. Recall that the binning-based approximation $\lbinECE(f)$ in \eqref{eq:L2_binned_ECE} can be interpreted as the $\lECE$ of a binned approximation of $f$, i.e., $f_{\mathrm{bin}}(X) = \E[f(X) \mid f(X)\in \cB_j]$ for $f(X)\in\cB_j$, where $\cB_1,\dots,\cB_K$ is a prespecified partition of $[0,1]$. An immediate consequence via Jensen’s inequality (see also Proposition 3.3 in \cite{kumar2019verified}) is that
\begin{align*}
    0 \leq \lbinECE(f) \leq \lECE(f),
\end{align*}
which shows that $\lbinECE(f)$ is a weaker notion of calibration than $\lECE(f)$. In fact, $\seqsplit{\lbinECE(f)}$ can severely underestimate $\lECE(f)$ (see Example 3.2 in \cite{kumar2019verified}). Nevertheless, this approach remains the most widely used approximation to $\lECE(f)$. 

To estimate $\lbinECE$ with a finite sample, given observations $\{(Y_i, Z_i)\}_{i=1}^n$ let
\begin{align*}
\bar Y_j = \frac{1}{n_j}\sum_{i=1}^{n}Y_i\cdot \one\{Z_i \in \cB_j\},
\qquad
\bar Z_j = \frac{1}{n_j}\sum_{i=1}^{n}Z_i\cdot \one\{Z_i \in \cB_j\}
\end{align*}
denote the average values of $Y$ and of $Z$ within each bin $\cB_j$. We then define the binned estimator as
\begin{align}\label{eq:binhatECE}
\binhatECE(f) = \sum_{j=1}^{K}\hat p_j\cdot \left(\bar Y_j - \bar Z_j\right)^2,
\end{align}
for $\hat p_j = n_j/n$, where $n_j = \sum_{i=1}^{n}\one\{Z_i \in \cB_j\}$ denotes the number of samples falling in bin $\cB_j$. (If a bin $\cB_j$ has no samples, $n_j=0$, then this term's contribution to the sum is simply taken to be $0$.)

In order to determine whether $\lbinECE$ offers a practically useful notion of calibration, we need to verify that it can $\lbinECE$ be estimated with a finite sample. Towards this question, the next result shows that $\binhatECE(f)$ from \eqref{eq:binhatECE} concentrates around its expectation (see Appendix \ref{sec:proofof_conc_binhat} for a proof):

\begin{proposition}\label{prop:conc_binhat}
    For any predictor $f:\cX\ra[0,1]$, any prespecified partition $[0,1]=\cB_1\cup\dots\cup\cB_K$, and any $\delta\in (0,1]$,
    \begin{align*}
        \left|\binhatECE(f) - \E\left[\binhatECE(f)\right]\right|\leq \sqrt{\frac{18\log (2/\delta)}{n}}
    \end{align*}
    with probability at least $1-\delta$.
\end{proposition}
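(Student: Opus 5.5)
We will obtain Proposition~\ref{prop:conc_binhat} from McDiarmid's bounded differences inequality, treating $\binhatECE(f)$ as a function of the $n$ i.i.d.\ pairs $(Y_1,Z_1),\dots,(Y_n,Z_n)$. The target bound $\sqrt{18\log(2/\delta)/n}$ equals $\sqrt{\tfrac{1}{2}\sum_i c_i^2\log(2/\delta)}$ when $c_i = 6/n$ for every $i$. So the plan is to show that replacing one observation $(Y_i,Z_i)$ by an arbitrary $(Y_i',Z_i')\in\{0,1\}\times[0,1]$ changes $\binhatECE(f)$ by at most $6/n$. Then the two-sided McDiarmid inequality gives the statement directly.

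First we rewrite the estimator in a form that makes the sensitivity transparent. For a bin with $n_j\ge 1$, let $S_j=\sum_{i:Z_i\in\cB_j}(Y_i-Z_i)$, so that $\hat p_j(\bar Y_j-\bar Z_j)^2 = S_j^2/(n\,n_j)$. Empty bins contribute $0$, which agrees with this formula if we read $0/0$ as $0$. Hence $\binhatECE(f)=\frac1n\sum_j g(S_j,n_j)$ with $g(S,m)=S^2/m$, where each summand $e=Y_i-Z_i\in[-1,1]$ and $|S_j|\le n_j$.

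Next we bound the effect of changing one point. At most two bins are affected: the bin that loses the old point and the bin that gains the new one. If they coincide, only $S_j$ changes, by at most $2$ in absolute value, with $n_j$ fixed. Then $|S^2-S'^2|/m=|S-S'|\,|S+S'|/m\le 2\cdot 2m/m=4$, so the change is at most $4/n$.

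If the bins differ, one bin goes from $(S+e,m+1)$ to $(S,m)$, and we bound
\[
\left|\frac{(S+e)^2}{m+1}-\frac{S^2}{m}\right|
= \left|\frac{2eS+e^2}{m+1}-\frac{S^2}{m(m+1)}\right|.
\]
The terms are at most $\frac{2m+1}{m+1}\le 2$ and $\frac{m}{m+1}\le 1$ respectively. Since the cross term is positive and the other subtracted, the total is at most $\max\{2,1\}+$ slack, and we bound it crudely by $3$. The symmetric bound holds for the bin that gains a point. Together the change is at most $6/n$, which is exactly the constant needed.

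The main obstacle is this one-point-removal bound: the bounds must be uniform in $m$, including the boundary case $m=0$. There the removal leaves an empty bin, and the term goes from $e^2\le 1$ to $0$, so the bound holds trivially. Otherwise we only need to verify $|S|\le m$ and $|e|\le 1$ carefully. Once the bounded-difference constant $6/n$ is established, McDiarmid's inequality yields $\P(|\binhatECE-\E\binhatECE|>t)\le 2\exp(-2t^2/(n\cdot 36/n^2))=2\exp(-nt^2/18)$. Setting this equal to $\delta$ gives $t=\sqrt{18\log(2/\delta)/n}$.
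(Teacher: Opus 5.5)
Your proposal is correct and follows the same route as the paper. Both use McDiarmid's inequality with bounded-difference constant $6/n$, obtained from a $4/n$ bound when the replaced point stays in the same bin, and a $3/n$ bound for each of the two affected bins otherwise, with the singleton-bin case treated separately.

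One small correction: your remark that the cross term $2eS+e^2$ is positive is false in general. For example, $e=1$, $S=-m$ gives a change of $(3m-1)/(m+1)$, close to $3$. The bound $3$ should therefore be justified simply by the triangle inequality $2+1$, which is the bound you ultimately use.
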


While Proposition \ref{prop:conc_binhat} shows that $\binhatECE$ concentrates around its expectation, the primary issue is that this expectation may not coincide with $\lbinECE(f)$. Indeed, it has been noted that $\ell_1$-based variants exhibit a statistical bias that cannot be eliminated without additional assumptions \citep{futami2024information}. The following results show that this bias persists in our setting as well (the proof is presented in Appendix \ref{sec:proofof_lemma_stat_bias_bound}).

\begin{lemma}\label{lemma:stat_bias_bound}
Consider any prediction function $f:\cX\ra[0,1]$. Then the empirical calibration error $\binhatECE(f)$ and its population counterpart $\lbinECE(f)$, defined in \eqref{eq:binhatECE} and \eqref{eq:L2_binned_ECE} respectively, satisfy
\begin{align*}
    \left|\E\left[\binhatECE(f)\right]-\lbinECE(f)\right|
    \leq
    \frac{K}{n}.
\end{align*}
\end{lemma}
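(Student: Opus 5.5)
Writing $p_j=\P(Z\in\cB_j)$ and $\mu_j=\E[Y-Z\mid Z\in\cB_j]$, we have $\lbinECE(f)=\sum_j p_j\mu_j^2$. We compute $\E[\binhatECE(f)]$ bin by bin, conditioning on the bin counts $(n_1,\dots,n_K)$, which are multinomial$(n;p_1,\dots,p_K)$. The proof then reduces to a one-bin computation followed by a bound on an explicit expectation over the bin count.

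Fix a bin $j$ and condition on $n_j=m\ge1$. The $m$ samples falling in $\cB_j$ are then i.i.d.\ draws from the conditional law of $(Y,Z)$ given $Z\in\cB_j$. Write $W_i=Y_i-Z_i\in[-1,1]$ for these samples, so that $\E[W_i]=\mu_j$, and set $v_j=\Var(W\mid Z\in\cB_j)\le 1$. Since $\bar Y_j-\bar Z_j=\frac1m\sum_i W_i$, we get $\E[(\bar Y_j-\bar Z_j)^2\mid n_j=m]=\mu_j^2+v_j/m$. Multiplying by $\hat p_j=m/n$ gives
\[
\E\bigl[\hat p_j(\bar Y_j-\bar Z_j)^2\mid n_j=m\bigr]=\frac{m}{n}\mu_j^2+\frac{v_j}{n}.
\]
When $m=0$ the contribution is $0$ by convention.

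Taking expectation over $n_j\sim\mathrm{Bin}(n,p_j)$ gives
\[
\E[\text{term } j]=\mu_j^2\,\E[n_j]/n+\frac{v_j}{n}\P(n_j\ge1)=p_j\mu_j^2+\frac{v_j}{n}\bigl(1-(1-p_j)^n\bigr).
\]
The $m=0$ case is harmless: its first term is already $0$ and it contributes nothing to $\E[n_j]$. Summing over $j$,
\[
\E[\binhatECE(f)]-\lbinECE(f)=\frac1n\sum_{j=1}^K v_j\bigl(1-(1-p_j)^n\bigr)\in\Bigl[0,\frac{K}{n}\Bigr],
\]
using $0\le v_j\le 1$. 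This proves the claim.

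There is no real obstacle in this argument. The only care needed is in justifying the conditional i.i.d.\ structure given the bin counts and in handling empty bins, and both are routine.
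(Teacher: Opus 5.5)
Your proposal is correct and matches the paper's proof essentially step for step. You condition on $n_j=m$, use the conditional i.i.d.\ structure to get $\mu_j^2+\sigma_j^2/m$, average over $n_j\sim\mathrm{Bin}(n,p_j)$ to obtain $p_j\mu_j^2+\frac{\sigma_j^2}{n}\bigl(1-(1-p_j)^n\bigr)$, and bound the variance by $1$; the paper cites the Bhatia--Davis inequality for that last step, where you use the elementary bound for a $[-1,1]$-valued variable.
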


\begin{remark}\label{rmk:stat_bias_bound}
    The bound is indeed tight up to constants independent of $K$ and $n$. To see this, consider the perfectly calibrated setting where $Z = f(X)\sim\textnormal{Unif}[0,1]$ and $Y\mid Z\sim\textnormal{Ber}(Z)$, so that $\lbinECE(f)=0$. Moreover for $1\leq K\leq n$ let $[0,1] = \cB_1\cup\cdots\cup\cB_K$ be a partition of $[0,1]$ into $K$ equal-width bins. Then, for $\binhatECE(f)$ with the above choice of bins, we can show (see Appendix \ref{sec:e_binhat_O1}) that
\begin{align*}
    \E\left[\binhatECE(f)\right]
    =
    \frac{K}{6n}\left(1-\left(1-\frac{1}{K}\right)^n\right)
    \geq
    \frac{1}{6}\left(1-\frac{1}{e}\right)\frac{K}{n}.
\end{align*}
\end{remark}

The key implication of the above result is that $\binhatECE$ possesses an irreducible statistical bias when estimating $\lbinECE$. However, when the number of bins satisfies $K = o(n)$, this bias becomes asymptotically negligible. Consequently, $\lbinECE$ can be estimated in a distribution-free manner only when $K = o(n)$ (for instance, \cite{futami2024information} propose $K\propto n^{1/3}$ as an optimal choice of the number of bins).

\subsection{$\rankECE$ as an alternative to $\lbinECE$}

In this section, we compare our proposed calibration measure $\rankECE(f)$ from \eqref{eq:rank_ECE} with the commonly used binning-based measure $\lbinECE(f)$.
In particular, we are interested in comparing the two in terms of their ability to approximate $\lECE(f)$: while no assumption-free estimation is possible (as discussed in Section~\ref{sec:background} and Remark~\ref{rmk:hardness_result}), how do the two compare in terms of the settings in which they do successfully provide an accurate approximation to $\lECE$? 

Proposition \ref{prop:lower_bound}, together with Proposition 3.3 from \cite{kumar2019verified}, shows that both measures $\rankECE(f)$ and $\lbinECE(f)$ induce strictly weaker notions of calibration than $\lECE(f)$, at least in finite samples. This naturally raises the question of whether one of these measures nevertheless provides a stronger notion of calibration than the other. In the following result, we answer this question in the affirmative by showing that, up to finite-sample correction terms, $\rankECE$ dominates $\lbinECE$. We refer the reader to Appendix \ref{sec:proofof_thm_rank_bin_comparision} for a proof.

\begin{theorem}\label{thm:rank_bin_comparison}
Let $f:\cX\ra[0,1]$ be any predictor. For $K\geq 1$ and $0=a_0<a_1<\cdots<a_K=1$, let $\cB_1=[a_0,a_1), \cB_2=[a_1,a_2),\ldots, \cB_K=[a_{K-1},a_K]$ be a partition of $[0,1]$ into $K$ bins. Consider $\lbinECE(f)$ as defined in \eqref{eq:L2_binned_ECE}, with the choice of bins $\cB_1,\ldots,\cB_K$. Then, for $n\geq 2$,
\begin{align*}
    \rankECE(f) \geq \lbinECE(f) - \frac{4K}{n}.
\end{align*}
\end{theorem}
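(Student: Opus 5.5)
Write $W_i = Y_i - Z_i$ and $r(z)=\E[Y-Z\mid Z=z]$. The plan is to express $\rankECE(f)$ through the residual function $r$ evaluated at consecutive order statistics, and then compare that sum with a binned sum.

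\textbf{Step 1: reduce to a sum over consecutive order statistics.} Conditional on $\bZ$, the labels are independent with $\E[W_i\mid\bZ]=r(Z_i)$. Therefore
\[
\rankECE(f)=\E\Big[\tfrac1n\textstyle\sum_{i=1}^{n-1} r(Z_{(i)})\,r(Z_{(i+1)})\Big].
\]
The algebraic identity $2ab=a^2+b^2-(a-b)^2$ turns each product into squares minus a squared increment. This is the same device that presumably underlies Proposition~\ref{prop:rank_l2_finite_bd}, but there we would pay a total-variation cost, which we must avoid here. So instead of comparing to $\lECE$, we compare directly to a binned quantity.

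\textbf{Step 2: split pairs by bins.} Partition the consecutive pairs $(Z_{(i)},Z_{(i+1)})$ into two kinds.
\begin{itemize}
\item \emph{Within-bin pairs:} both points lie in the same $\cB_j$.
\item \emph{Straddling pairs:} the two points lie in different bins. There are at most $K-1$ of these, since the sorted list crosses each boundary $a_j$ at most once.
\end{itemize}
Each straddling term has magnitude at most $1/n$, so discarding them costs at most $(K-1)/n$. Within bin $j$, suppose the points in $\cB_j$ are $Z^{(j)}_1\le\dots\le Z^{(j)}_{n_j}$. The within-bin contribution is then
\[
\tfrac1n\sum_{k=1}^{n_j-1} r(Z^{(j)}_k)\,r(Z^{(j)}_{k+1}).
\]

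\textbf{Step 3: lower-bound each within-bin sum.} This is the main obstacle, because a chain product like $\sum_k r_k r_{k+1}$ can be negative even when the bin mean $\bar r_j$ is large. My first attempt would avoid the chain structure through exchangeability. Conditional on $n_j$, the points in bin $j$ are i.i.d.\ from $Z\mid Z\in\cB_j$, but the sorted pairing is not a random pairing, so exchangeability does not apply directly.

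A cleaner route is to use the definition directly. Since $\rankECE$ equals the expectation of $\hatrankECE$, I would instead apply Proposition~\ref{prop:lower_bound} bin by bin. Concretely, $\E[\sum_{k} r_k r_{k+1}\mid n_j]$ is exactly $n_j\cdot\rankECE_{n_j}$ of the predictor restricted to the conditional law on $\cB_j$. I would then need the sharper fact that $\rankECE_m\ge$ (squared mean residual) $- O(1/m)$.

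To get that, I would prove the following lemma: for any i.i.d.\ bounded residuals,
\[
\E\Big[\textstyle\sum_{k=1}^{m-1} r_k r_{k+1}\Big]\ \ge\ (m-1)\,\bar r^{\,2}\;-\;c,
\]
with $c$ an absolute constant. The idea is to write $r_k=\bar r+s_k$. This gives
\[
\sum_k r_k r_{k+1}=(m-1)\bar r^{\,2}+\bar r\sum_k (s_k+s_{k+1})+\sum_k s_k s_{k+1}.
\]
The middle term has expectation $\bar r\,(-\E s_1-\E s_m)\ge -2$, using that $\sum_k s_k$ has mean zero and $|s_k|\le 2$. For the last term, I would invoke the same nonnegativity mechanism as in Proposition~\ref{prop:lower_bound}: $\E\sum_k s_ks_{k+1}\ge 0$ for any centered residual, which is precisely Proposition~\ref{prop:lower_bound} applied to the centered residual. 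That application still needs checking, since the proposition is phrased for $Y-Z$. The lemma then gives a loss of order one per bin.

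\textbf{Step 4: assemble.} The bin-$j$ contribution becomes
\[
\E\big[(n_j-1)_+\,\bar r_j^{\,2}/n\big]-O(1/n)=p_j\,\bar r_j^{\,2}-O(1/n),
\]
using $\E[n_j]/n=p_j$ and $|\bar r_j|\le1$, where $\bar r_j^{\,2}$ is exactly the $j$-th term of $\lbinECE$. Summing over the $K$ bins, adding the $(K-1)/n$ straddling loss, and tracking constants should give the total loss $4K/n$.
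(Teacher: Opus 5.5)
Your proposal is correct and follows essentially the same route as the paper's proof. Both arguments reduce to $r$ at consecutive order statistics, drop the at most $K-1$ straddling pairs at cost $(K-1)/n$, split $r=\mu_j+(r-\mu_j)$ within each bin, use nonnegativity of the centered chain product, telescope the linear terms, and finish with $\E[(n_j-1)_+]\ge np_j-1$.

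Two small repairs are needed to make it airtight.

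\begin{itemize}
\item You cannot cite Proposition~\ref{prop:lower_bound} literally for $r-\mu_j$, since that function need not be the residual of any predictor. Its proof still applies verbatim to any bounded function, and this is exactly what the paper does. The proof conditions on $Z_{(i-1)},Z_{(i+2)}$, so that the middle pair consists of two i.i.d.\ draws from the restricted law, and the expected product is then a square.
\item Derive your middle-term bound of $-2$ by rewriting the term as $2\mu_j^2-\mu_j\,\E[r(Z^{(j)}_{(1)})+r(Z^{(j)}_{(m)})]$ and using $|r|\le 1$. Your stated justification $|s_k|\le 2$ only yields $-4$, and that would not recover the constant $4K/n$.
\end{itemize}
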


Recalling that both $\rankECE$ and $\lbinECE$ are proposed as approximations to $\lECE$, we can interpret this result in that regard. In particular, both $\rankECE(f)$ and $\lbinECE(f)$ are  upper bounded by $\lECE(f)$, and therefore we cannot overestimate $\lECE(f)$ with either of these (population) measures; on the other hand, it is possible to substantially underestimate the $\lECE$, but the result above tells us that gaps must satisfy
\[\Big(\lECE(f) - \rankECE(f)\Big) \leq \Big(\lECE(f) - \lbinECE(f)\Big)  + \frac{4K}{n}.\]
That is, for $K=o(n)$, the binned approximation to ECE will underestimate $\lECE(f)$ (nearly) \emph{at least as much} as $\rankECE(f)$. (And, we are forced to choose $K=o(n)$ so that $\lbinECE(f)$ can be accurately estimated without assumptions, as established in Lemma~\ref{lemma:stat_bias_bound} and Remark~\ref{rmk:stat_bias_bound}.)

\begin{remark}\label{remark:rate_of_convg_compare}
    Recall that, when the residual function $r(z)=\E[Y-Z\mid Z=z]$ is $L$-Lipschitz, Proposition \ref{prop:rank_l2_finite_bd} shows that the approximation error of $\rankECE$ converges at the rate $O(L/n)$. See Appendix \ref{sec:remark_lip_rate} for an analogous result for $\lbinECE$.
\end{remark}

To further illustrate the benefit of $\rankECE$ in the assumption-free regime, the following example gives a construction showing a scenario where $\rankECE$ can detect miscalibration, while $\lbinECE$ can fail without careful tuning of the binning scheme based on properties of the unknown residual function.

\begin{figure}[!h]
        \centering
        \includegraphics[width = 0.8\textwidth]{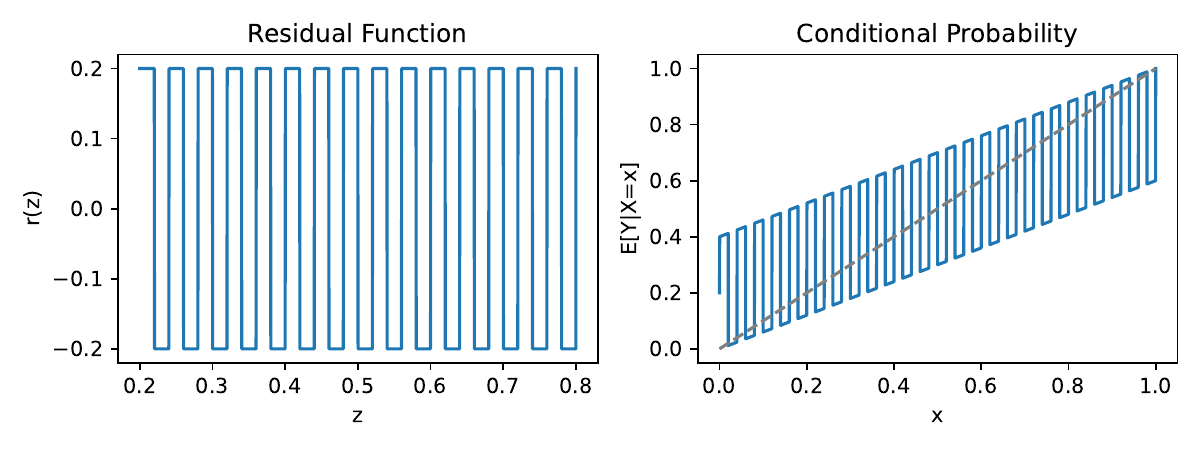}
        \caption{Residual function $r(z) $ and conditional probability $\E[Y\mid X=x] = c\phi_m(x) + x$ from Example \ref{example:compare} with $c = 0.2$ and $m = 25$.}
        \label{fig:function_plot_example_compare}
\end{figure}

\begin{example}\label{example:compare}
We compare our proposed measure $\rankECE$, with the binning-based measure $\lbinECE$. Fix $c\in (0, 1/2)$, and for $m\in \N$ and $x\in [0,1]$ define $\phi_m(x) = \textnormal{sign}(\sin(2\pi mx))$. Generate $X\sim \textnormal{Unif}[0,1]$, and $Y\mid X \sim \textnormal{Ber}\left(c(1-X) + (1-c)X + c\phi_m(X)\right)$; this mean function $X\mapsto c + (1-2c)X + c\phi_m(X)$ is a combination of a linear function ranging from $c$ to $1-c$, offset by a term $c\phi_m(X)$ that oscillates $m$ times. Consider the prediction function $Z = f(X)=c + (1-2c)X$. In this setting, the residual function is given by
\begin{align*}
    r(z) = \E[Y-Z\mid Z=z] = c\phi_m\left(\frac{z-c}{1-2c}\right)
\end{align*}
We plot both the residual function $r$ and conditional probability $\P(Y = 1\mid X) = c\phi_m(X) + X$ for $c = 0.2$ and $m = 25$ in Figure \ref{fig:function_plot_example_compare}. Hence, by definition,
    \[
        \lECE(f)=\E[r^2(Z)]=c^2.
    \]
    Moreover, a direct computation (see Appendix \ref{appendix:proofof_lemma_example_rank_bd}) shows that $\|r\|_{\textnormal{TV}}\leq 4mc$, and hence, by Proposition \ref{prop:rank_l2_finite_bd} and Proposition \ref{prop:lower_bound}, it follows that
    \begin{align*}
    c^2 - \frac{4mc + 1}{n}\leq \rankECE(f)\leq c^2.
    \end{align*}
    On the other hand, for $\lbinECE$, we consider an arbitrary choice of $K$ bins as in Theorem \ref{thm:rank_bin_comparison}. In particular, let $c = a_0<a_1<\cdots<a_{K} = 1-c$ and choose $\cB_1 = [a_0, a_1), \cB_2 = [a_1, a_2),\ldots, \cB_{K} = [a_{K-1}, a_K]$ to be an arbitrary partition of $[0, 1]$ into $K$ bins. It can be shown (see Lemma \ref{lemma:example_bin_bd}) that, for $\lbinECE(f)$ with any such natural choice of bins,
    \begin{align*}
    \lbinECE(f)\leq \frac{c^2K}{m}.
    \end{align*}
    Thus, when the number of bins satisfies $K\ll m\ll n$, it follows that $\rankECE(f)$ can detect non-trivial miscalibration, whereas $\lbinECE(f)$ fails to detect it. In particular, in this regime, $\rankECE(f)\gg\lbinECE(f)$. This example shows that when the residual function exhibits moderate oscillations, $\rankECE$, which intuitively behaves like a data-adaptive binning scheme with the finest possible resolution of $O(1/n)$, can adapt to the residual structure and detect miscalibration. In contrast, $\lbinECE$ can fail to capture such oscillations: we may be unable to choose an appropriate value of $K$ (i.e., here choosing too few bins, $K\ll m$, leads to the lack of detection of miscalibration) without access to knowledge of the (in practice unknown) properties of the residual function $r$.
\end{example}
\begin{remark}

Binning-based strategies have also been used to construct post-hoc transformations of existing predictors $f(X)$, that is, transformations of the form $h(f(X))$ that are guaranteed to satisfy a specified binning-based calibration criterion. In particular, the methods of \citet{gupta2021calibration,zadrozny2001obtaining} use a \emph{data-adaptive} choice of bins, rather than pre-specified bins as in the standard definition of binned ECE. In this sense, our work is related to this literature, since $\rankECE$ may be interpreted as using data-adaptive bins, each containing two data points. Our objective, however, is different: we aim to measure miscalibration rather than to construct a post-hoc calibrated predictor. Relatedly, several works have considered extensions of $\lbinECE$ that use data-adaptive binning schemes \citep{nguyen2015posterior,hendrycks2018deep,nixon2019measuring,futami2024information}. However, such methods have been found to be sensitive to the choice of tuning parameters, potentially leading to inconsistent conclusions \citep{Ashukha2020Pitfalls}.

\end{remark}

\section{Experiments}\label{sec:experiments}
In this section, we conduct experiments to complement our theoretical findings. In particular, in Section~\ref{sec:compare_sims} we consider synthetic settings and compare $\rankECE(f)$ and $\lbinECE(f)$ as approximations to $\lECE(f)$. We then extend this comparison to real, pretrained models for sentiment prediction on Amazon and Yelp reviews in Section~\ref{sec:sentiment_main}. Finally, to assess the practical implications of approximation quality in a downstream task, in Section~\ref{sec:empirical_test_calib} we evaluate hypothesis tests for perfect calibration based on $\rankECE(f)$ and compare them with existing kernel-based tests in terms of statistical power and computational cost.\footnote{Code for all experiments available at \url{https://github.com/anirbanc96/rankece}.}

\subsection{$\rankECE$ and $\lbinECE$ as approximations to $\lECE$}\label{sec:compare_experiments}

In this section, we empirically compare the proposed $\rankECE(f)$ and $\lbinECE(f)$ as approximations to $\lECE(f)$, complementing Theorem~\ref{thm:rank_bin_comparison} through controlled simulations and sentiment prediction experiments. In particular, these experiments provide empirical validation of the theorem's claim that $\rankECE(f)$ provides a closer approximation to $\lECE(f)$ than $\lbinECE(f)$. Both experiments follow a common evaluation protocol, which we describe first to avoid repetition before presenting the experimental details.

\paragraph{Evaluation Protocol.}
We evaluate $\rankECE(f)$ using its empirical estimator $\hatrankECE(f)$ from \eqref{eq:rank_ece_sample}, and $\lbinECE(f)$ using its empirical estimator $\binhatECE(f)$ from \eqref{eq:binhatECE}. For the number of bins, we consider $K\in\{10,\sqrt{n},n^{1/3},n/20\}$. Throughout this section, we assess the quality of these approximations using the ratios $\hatrankECE(f)/\lECE(f)$ and $\seqsplit{\binhatECE(f)/\lECE(f)}$. We describe how $\lECE(f)$ is evaluated for the simulated experiments in Section~\ref{sec:compare_sims} and for the sentiment prediction experiments in Section~\ref{sec:sentiment_main}. For each experiment, we repeat the procedure $T=100$ times and report the mean ratios with $\pm 1$ standard error.

\subsubsection{Simulations}\label{sec:compare_sims}
For simulated experiments to compare $\rankECE$ and $\lbinECE$ we consider conditional probability functions $g(z) = \E[Y\mid Z = z]$ with varying degrees of complexity and report the findings using the above described evaluation protocol. In this section we focus on the following three representative conditional probability functions (additional examples are presented in Appendix \ref{appendix:compare_sims}).

\begin{itemize}
    \item \texttt{Quadratic}: $g(z) = 0.8\,z^2 - 0.2\,z + 0.2$
    \item \texttt{Spikes}: $g(z) = z + \sum_{i=1}^{10} a_i \exp\!\left(-\frac{(z - c_i)^2}{2 (0.02)^2}\right)$ with centers $c_i = 0.01 + \dfrac{0.98\,(i-1)}{9}$ for $i=1,\dots,10$, and alternating amplitudes $a_i \in \{0.3, -0.3\}$.
    \item \texttt{Increasing Frequency}: $g(z) = 0.15 + 0.7 z + 0.5 \sin\!\big(40\pi z^2\big)$
\end{itemize}

\begin{figure}[!h]
    \centering
    \includegraphics[width = \textwidth]{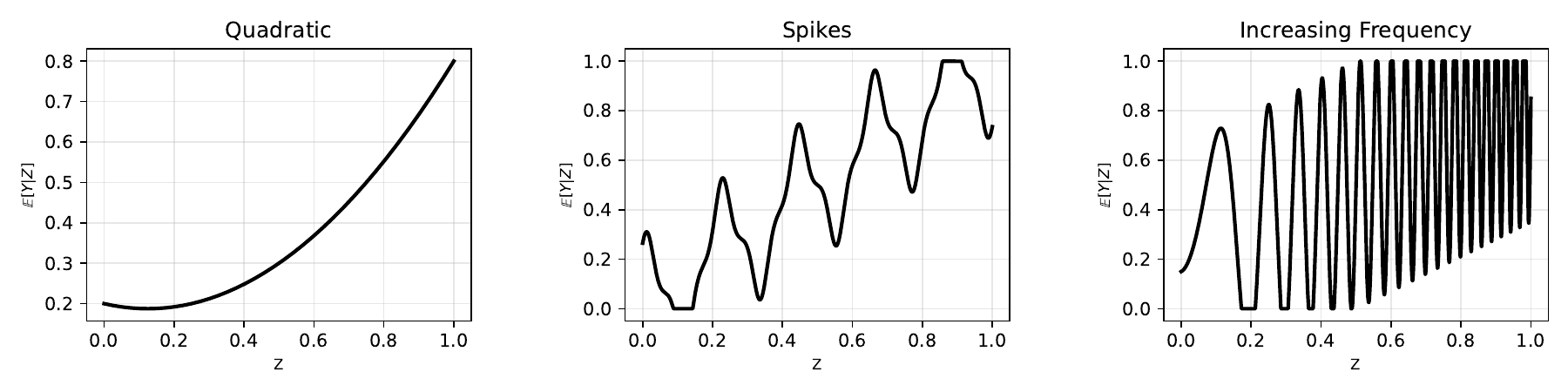}
    \caption{Representative conditional probability functions $g(z)$ \texttt{Quadratic}, \texttt{Spikes}, \texttt{Increasing Frequency} used in the simulation study in Section \ref{sec:compare_sims}.}
    \label{fig:prob_functions_main}
\end{figure}

The three conditional probability functions are shown in Figure~\ref{fig:prob_functions_main}. To ensure that they define valid conditional probabilities, each function is clipped to the interval $[0,1]$. For each conditional probability function $g$, we generate a dataset of size $n$ according to
\begin{align}\label{eq:dgp_compare_sims}
Z_i = f(X_i) \stackrel{\mathrm{i.i.d.}}{\sim} \mathrm{Unif}[0,1], \qquad
Y_i \mid Z_i \sim \mathrm{Ber}\left(g(Z_i)\right) \text{ for all } i=1,\ldots,n.
\end{align}

We follow the evaluation protocol described in Section \ref{sec:compare_experiments}, approximating $\lECE(f)$ numerically using a fine uniform grid on $[0,1]$. We present the results in Figure \ref{fig:compare_sims}, with results for additional probability functions deferred to Appendix \ref{appendix:compare_sims}.

\begin{figure}[h]
    \centering

    \begin{subfigure}{0.32\textwidth}
        \centering
        \includegraphics[width=\linewidth]{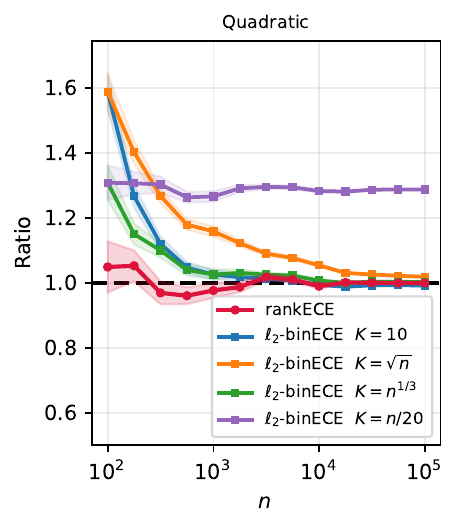}
        \label{fig:first_sims}
    \end{subfigure}
    \hfill
    \begin{subfigure}{0.32\textwidth}
        \centering
        \includegraphics[width=\linewidth]{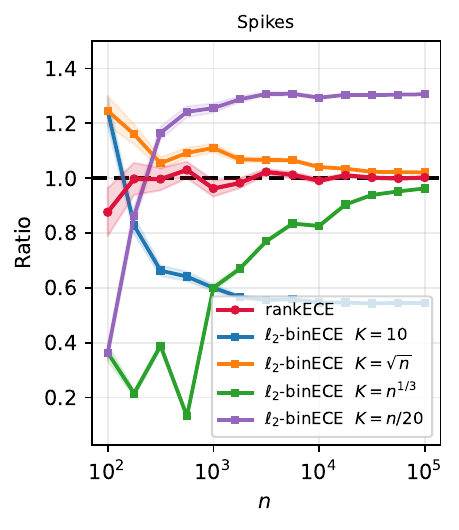}
        \label{fig:second_sims}
    \end{subfigure}
    \hfill
    \begin{subfigure}{0.32\textwidth}
        \centering
        \includegraphics[width=\linewidth]{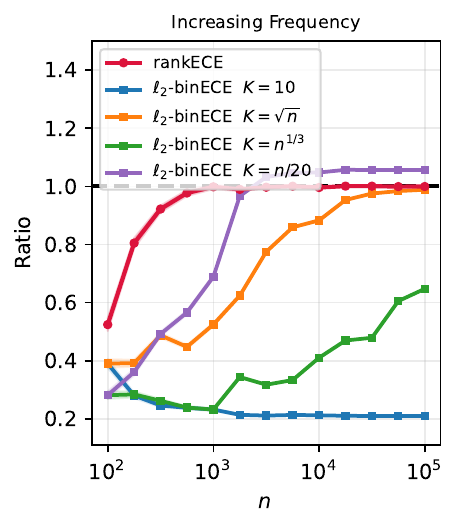}
        \label{fig:third_sims}
    \end{subfigure}

    \caption{Empirical comparison of $\rankECE(f)$ and $\lbinECE(f)$ as approximations to $\lECE(f)$ for the conditional probability functions \texttt{Quadratic}, \texttt{Spikes} and \texttt{Increasing Frequency} from Figure \ref{fig:prob_functions_main}. The horizontal axis shows the sample size $n$, while the vertical axis plots the ratios $\hatrankECE(f)/\lECE(f)$ and $\binhatECE(f)/\lECE(f)$ with $K=10$, $\sqrt{n}$, $n^{1/3}$, and $n/20$ bins.}
    \label{fig:compare_sims}
\end{figure}

The main takeaway from the comparisons in Figure \ref{fig:compare_sims} and Figure \ref{fig:compare_sims_appendix} is that $\rankECE(f)$ consistently provides a closer approximation to $\lECE(f)$ than $\lbinECE(f)$. Across all the considered settings, the ratio $\hatrankECE(f)/\lECE(f)$ converges rapidly to $1$ as the sample size $n$ increases and, moreover, remains substantially closer to $1$ than $\binhatECE(f)/\lECE(f)$ even for moderate sample sizes. In particular, Figure \ref{fig:compare_sims} shows that as the conditional probability function becomes increasingly oscillatory, $\binhatECE(f)$ with a fixed choice of $K=10$ bins fails to provide an accurate approximation of $\lECE(f)$. On the other hand, choosing $K=n/20 = O(n)$ exhibits a clear and persistent bias, in agreement with the discussion in Section \ref{sec:compare}. The intermediate choices $K=\sqrt{n}$ and $K=n^{1/3}$ do eventually provide good approximations as $n$ becomes large. However, as the complexity of the conditional probability function increases, these choices require substantially larger sample sizes before they accurately approximate $\lECE(f)$, whereas $\hatrankECE(f)$ continues to closely track $\lECE(f)$ throughout. Similar trends are observed in the additional simulation studies presented in Appendix \ref{appendix:compare_sims}.

\subsubsection{Sentiment Prediction}\label{sec:sentiment_main}
In this section, we complement the simulation study in Section~\ref{sec:compare_sims} with an empirical comparison of $\rankECE(f)$ and $\lbinECE(f)$ for evaluating the calibration of pretrained sentiment classifiers. We consider the binary-labeled Amazon Review Polarity corpus (\texttt{\seqsplit{https://registry.opendata.aws/fast-ai-nlp/}}), where the task is to predict whether a product review expresses positive or negative sentiment. We additionally consider the binary-labeled Yelp Review Polarity corpus, available at the same source, with further details provided in Appendix~\ref{appendix:sentiment_experiments}.

For sentiment prediction, we consider four pretrained models: \texttt{DistilBERT\_SST2}, \texttt{BERT\_SST2}, \texttt{RoBERTa\_Twitter\_Sentiment}, and \texttt{BERT\_Multilingual\_Stars}, all available through \texttt{Hugging Face}. Further details on the pretrained models are provided in Appendix~\ref{sec:sentiment_details}. For the multiclass models \texttt{RoBERTa\_Twitter\_Sentiment} and \texttt{BERT\_Multilingual\_Stars}, we convert the model outputs to binary predicted probabilities by adding the softmax probabilities assigned to the positive (upper-half) classes.

\begin{figure}[!t]
    \centering

    \begin{subfigure}{0.48\textwidth}
        \centering
        \includegraphics[width=0.9\linewidth]{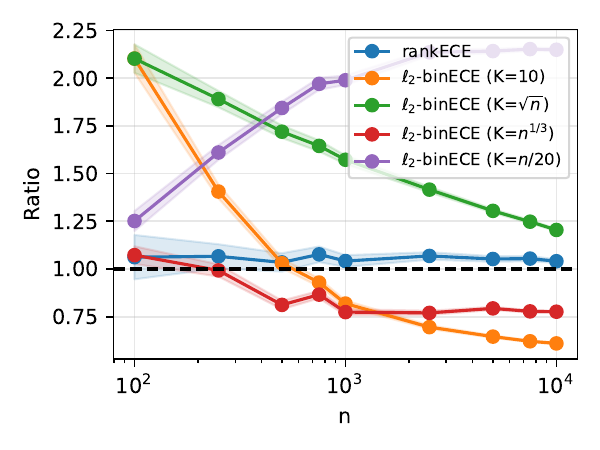}
        \caption{\texttt{BERT\_SST2}.}
        \label{fig:relative_error_bert_sst2}
    \end{subfigure}
    \hfill
    \begin{subfigure}{0.48\textwidth}
        \centering
        \includegraphics[width=0.9\linewidth]{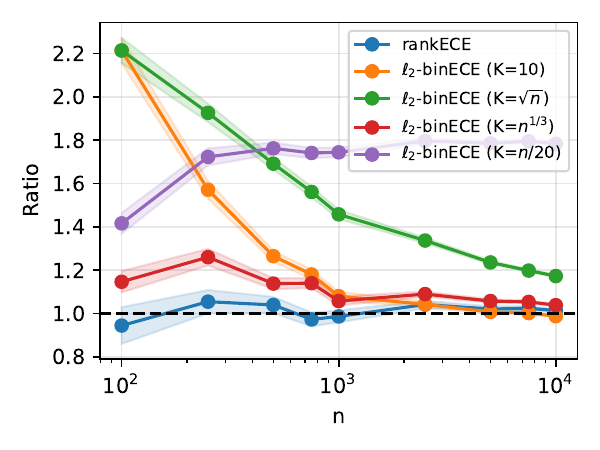}
        \caption{\texttt{BERT\_Multilingual\_Stars}.}
        \label{fig:relative_error_bert_multilingual_stars}
    \end{subfigure}

    \vspace{0.5em}

    \begin{subfigure}{0.48\textwidth}
        \centering
        \includegraphics[width=0.9\linewidth]{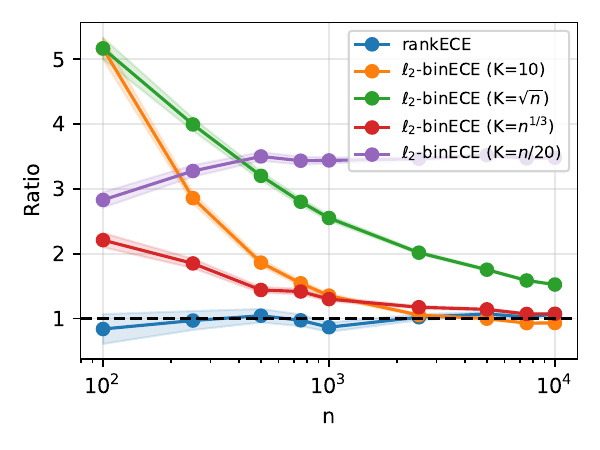}
        \caption{\texttt{RoBERTa\_Twitter\_Sentiment}.}
        \label{fig:relative_error_roberta_twitter}
    \end{subfigure}
    \hfill
    \begin{subfigure}{0.48\textwidth}
        \centering
        \includegraphics[width=0.9\linewidth]{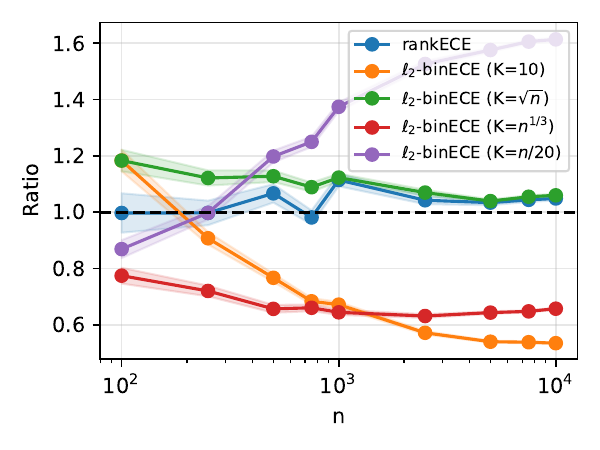}
        \caption{\texttt{DistilBERT\_SST2}.}
        \label{fig:relative_error_distilbert_sst2}
    \end{subfigure}

    \caption{The ratios $\hatrankECE(f)/\lECE(f)$ and $\binhatECE(f)/\lECE(f)$ for pretrained sentiment models: $f\in \{$\texttt{DistilBERT\_SST2}, \texttt{BERT\_SST2}, \texttt{RoBERTa\_Twitter\_Sentiment}, and \texttt{BERT\_Multilingual\_Stars}$\}$.}
    \label{fig:relative_error_models}
\end{figure}

We first randomly subsample $5\times10^4$ observations from the test set of the Amazon Polarity corpus and obtain prediction probabilities from each model on this entire subsample. For each model $f\in \{$\texttt{DistilBERT\_SST2}, \texttt{BERT\_SST2}, \texttt{RoBERTa\_Twitter\_Sentiment}, and \texttt{\seqsplit{BERT\_Multilingual\_Stars}}$\}$, we then repeatedly draw samples of sizes
\[
n\in\{100,250,500,750,1000,2500,5000,7500,10000\}.
\]
For each model $f$ and sample size $n$, we denote the resulting evaluation dataset by
\[
\mathcal{D}_{n,f}=\{(Y_i,Z_i)\}_{i=1}^n,
\]
where $Y_i\in\{0,1\}$ is the ground-truth sentiment label and $Z_i\in[0,1]$ is the binary prediction probability produced by model $f$ for the positive class. We apply the evaluation protocol described in Section~\ref{sec:compare_experiments} separately to each $\mathcal{D}_{n,f}$. We estimate $\lECE(f)$ using the debiased version of $\binhatECE(f)$ proposed by \cite{lee2023t}, computed on the full $5\times10^4$-observation subsample.

The results are presented in Figure \ref{fig:relative_error_models}. We also repeat the same experiment with Yelp Review Polarity corpus (also available at \texttt{https://registry.opendata.aws/fast-ai-nlp/}) in Appendix \ref{appendix:sentiment_experiments}. They are consistent with the findings of Section \ref{sec:compare_sims}. Across all four models, $\rankECE$ provides a substantially more accurate approximation of $\lECE$ than $\lbinECE$. In contrast, the behavior of $\lbinECE$ depends strongly on the choice of the number of bins. Using $K=n/20$ bins yields a persistent positive bias, while using a fixed number of bins ($K=10$) performs well for the \texttt{RoBERTa\_Twitter\_Sentiment} and \texttt{BERT\_Multilingual\_Stars} models but remains inconsistent for the two \texttt{SST2} models. Choosing $K=\sqrt{n}$ produces estimates that gradually approach $\lECE$ as the sample size increases, whereas $K=n^{1/3}$ still exhibits noticeable finite-sample bias for \texttt{BERT\_SST2} and \texttt{DistilBERT\_SST2}. Overall, these experiments highlight the importance of choosing the number of bins appropriately when using $\lbinECE(f)$ as an approximation to $\lECE(f)$. As discussed in Remark \ref{remark:rate_of_convg_compare} and Example \ref{example:compare}, the optimal choice of bins depends on the typically unknown residual function. In contrast, $\rankECE(f)$ consistently provides a robust approximation to $\lECE(f)$ across all models, demonstrating its stability and effectiveness over a diverse collection of real-world prediction models.

\subsection{Testing for Perfect Calibration}\label{sec:empirical_test_calib}

In this section, we empirically evaluate the proposed tests for perfect calibration based on $\rankECE$ from Section \ref{sec:perfect_calib} and compare their performance with tests based on the \emph{Squared Kernel Calibration Error (SKCE)} of \cite{widmann2019calibration}. Specifically, we compare both the statistical power and the computational cost of the competing procedures.

\begin{figure}[ht]
\centering

% First row
\begin{subfigure}{0.48\textwidth}
    \centering
    \includegraphics[width=0.8\linewidth]{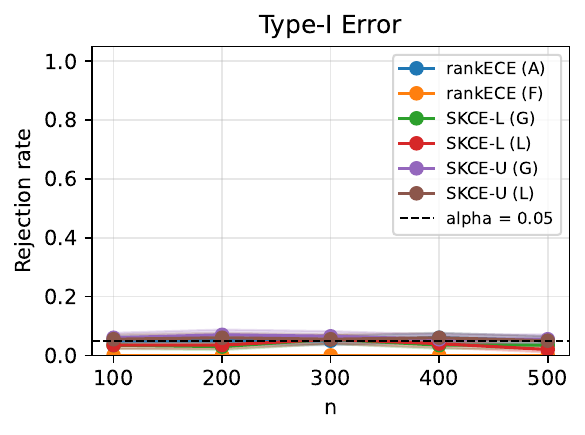}
    \caption{Type-I error vs.\ sample size.}
    \label{fig:first_typeI}
\end{subfigure}
\hfill
\begin{subfigure}{0.48\textwidth}
    \centering
    \includegraphics[width=0.8\linewidth]{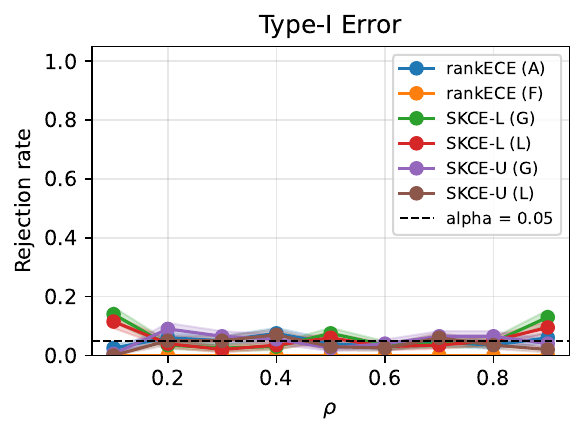}
    \caption{Type-I error vs.\ $\rho$.}
    \label{fig:second_typeI}
\end{subfigure}

\vspace{0.8em}

% Second row
\begin{subfigure}{0.48\textwidth}
    \centering
    \includegraphics[width=0.8\linewidth]{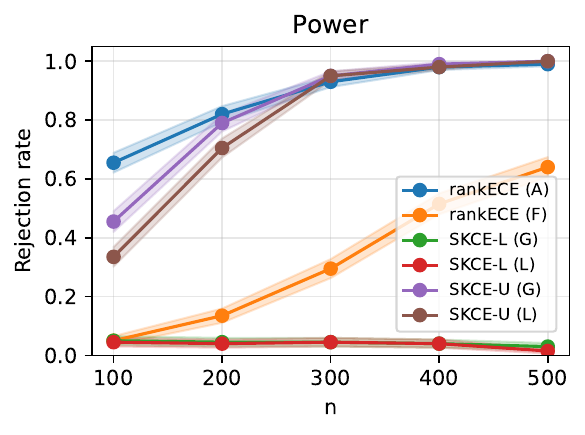}
    \caption{Power vs.\ sample size.}
    \label{fig:first_power}
\end{subfigure}
\hfill
\begin{subfigure}{0.48\textwidth}
    \centering
    \includegraphics[width=0.8\linewidth]{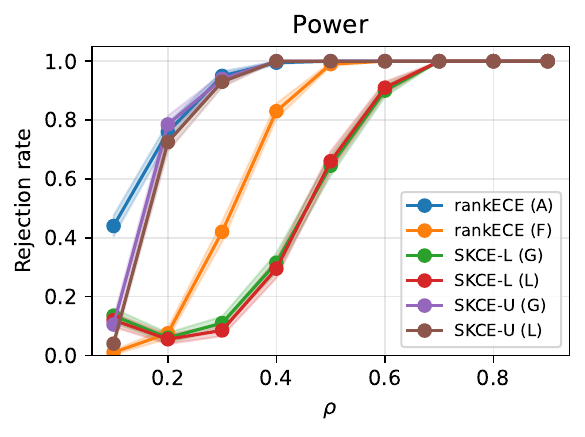}
    \caption{Power vs.\ $\rho$.}
    \label{fig:second_power}
\end{subfigure}

\caption{
Comparison of the $\rankECE$-based tests from Section~\ref{sec:perfect_calib} and the SKCE-based tests of \cite{widmann2019calibration}. The top row shows empirical type-I error, while the bottom row shows empirical power under varying sample sizes and values of $\rho$.
}
\label{fig:typeI_power_compare}

\end{figure}

We consider the following data generation process: Fix $\rho\in (0,1)$ and consider,
\begin{align*}
    Z = f(X)\sim \text{Beta}(\rho, 1-\rho)\text{ and }Y\mid Z\sim \text{Ber}(Z-Z^4).
\end{align*}
For each $\rho\in(0,1)$, given samples $(Y_i,Z_i)_{i=1}^{n}$ generated from the above data-generating process, we test the hypothesis of perfect calibration in \eqref{eq:null_calibration}. This construction induces a misspecified and, therefore, imperfectly calibrated prediction model.

We compare the finite-sample and asymptotic tests based on $\rankECE$ from Section \ref{sec:perfect_calib}, denoted by \texttt{rankECE (F)} and \texttt{rankECE (A)}, respectively, with the SKCE-based tests of \cite{widmann2019calibration}. For SKCE, we consider both the quadratic-time U-statistic estimator, denoted by \texttt{SKCE-U}, and the linear-time estimator, denoted by \texttt{SKCE-L}. For each estimator, we consider both the Gaussian (G) and Laplace (L) kernels, yielding the four variants \texttt{SKCE-U (G)}, \texttt{SKCE-U (L)}, \texttt{SKCE-L (G)}, and \texttt{SKCE-L (L)}. As noted in \cite{widmann2019calibration}, the asymptotic distribution of \texttt{SKCE-U} is intractable, requiring a resampling procedure to calibrate the test. Throughout our experiments, we implement the multiplier bootstrap proposed in Appendix G of \cite{widmann2019calibration} using $B = 100$ bootstrap resamples.

Our first experiment evaluates the type-I error control of all competing tests. To this end, we consider two complementary settings. In Figure \ref{fig:first_typeI}, we fix $\rho = 0.3$ and plot the empirical type-I error as the sample size $n$ increases. In Figure \ref{fig:second_typeI}, we fix $n = 100$ and plot the empirical type-I error as $\rho$ varies over $(0,1)$. The nominal significance level is $\alpha = 0.05$. Both figures demonstrate that all methods achieve the desired type-I error control across the range of experimental settings considered.

We next compare the empirical power and computational cost of the competing procedures. In Figure \ref{fig:first_power}, we fix $\rho = 0.15$ and plot the empirical power of each method as a function of the sample size $n$, estimated from $200$ independent repetitions. In Figure \ref{fig:second_power}, we instead fix $n = 100$ and plot the empirical power as $\rho$ varies over $(0,1)$, again using $200$ independent repetitions. Additional power comparisons under a wider range of experimental settings are provided in Appendix \ref{appendix:empirical_test_calib}. 

As expected, the asymptotic $\rankECE$ test (\texttt{rankECE (A)}) is substantially more powerful than its finite-sample counterpart (\texttt{rankECE (F)}), reflecting the conservativeness of the concentration inequalities used by the latter. The asymptotic test achieves power comparable to that of \texttt{SKCE-U}, while the finite-sample test achieves power comparable to, and in some settings exceeding, that of \texttt{SKCE-L}. From a computational perspective, both $\rankECE$-based tests are significantly faster than the SKCE-based procedures, achieving orders-of-magnitude speedups over \texttt{SKCE-U}. A detailed comparison of the running times of the $\rankECE$- and SKCE-based methods is provided in Appendix~\ref{appendix:empirical_test_calib}.

\section{Discussion}\label{sec:discussions}
In this work, we have developed a new measure of calibration error, the $\rankECE$, that provides a proxy for $\lECE$ by comparing data points with neighboring values of the predicted probabilities $Z=f(X)$. This approach avoids a core challenge of the binned approximation to $\ECE$, where the choice of the number of bins $K$ introduces a tradeoff between underestimating $\ECE$ due to excessive smoothing (when $K$ is low) versus overestimating $\ECE$ due to positive bias in the empirical binned $\ECE$ (when $K$ is large). Our theoretical results demonstrate that $\rankECE$ provides a better proxy for the $\lECE$, as is validated by our experiments; in addition, $\rankECE$ offers a computationally inexpensive test for perfect calibration, that empirically shows comparable or higher power relative to existing SKCE methods.

We next discuss several open questions that are raised by these findings. First, while we see in our experiments that $\rankECE$ offers a good proxy for $\lECE$, our finite-sample guarantee in Proposition~\ref{prop:rank_l2_finite_bd} requires bounded total variation of the underlying miscalibration error $r(X)$. While
the hardness results (see Remark~\ref{rmk:hardness_result}) for estimating $\ECE$ imply that some sort of assumption is unavoidable, it remains an open question as to whether the total variation assumption offers a sharp characterization of the types of miscalibration that can, or cannot, be detected with finite samples. Second, the construction of $\rankECE$ inherently rely on the predictions $Z=f(X)$ taking values in $[0,1]$, i.e., in a one-dimensional domain (so that we can construct a ranking of $Z_1,\dots,Z_n$ in order to calculate $\rankECE$). How can these ideas be extended to settings where miscalibration may be measured in more complex ways---for instance, when making a multivariate prediction due to the presence of multiple outcomes?

\bigskip\small

\noindent\textbf{Acknowledgments.}\\
R.F.B. was partially supported by the Office of Naval Research via grant N00014-24-1-2544.\\

\noindent
\textbf{Use of Generative AI.} During the preparation of this work, the authors used AI tools for coding assistance and experiment development, and for proofreading and improving clarity and presentation of the manuscript.

\bibliographystyle{abbrvnat}
\bibliography{ref}

\normalsize

\appendix
\renewcommand{\thefigure}{\thesection.\arabic{figure}}
\setcounter{figure}{0}

\newpage

\tableofcontents

\newpage

\section{Additional Experiments}
In this section we present additional experiments to complement the ones from Section \ref{sec:experiments}. This section is organised as follows: We present additional experiments for comparing $\rankECE$ and $\lbinECE$ as approximations to $\lECE$ in Appendix \ref{appendix:compare_sims} and Appendix \ref{appendix:sentiment_experiments} and present additional experiments for testing for perfect calibration in Appendix \ref{appendix:empirical_test_calib}.

\subsection{Comparing $\rankECE$ and $\lbinECE$ as approxmations to $\lECE$}\label{appendix:compare_sims}

In this section we recall the experimental setting in Section \ref{sec:compare_sims} for comparing $\rankECE$ and $\lbinECE$. For conditional probability functions we consider the following examples, in addition to the three studied in Section \ref{sec:compare_sims} :

\begin{figure}[!h]
    \centering
    \includegraphics[width = 0.9\textwidth]{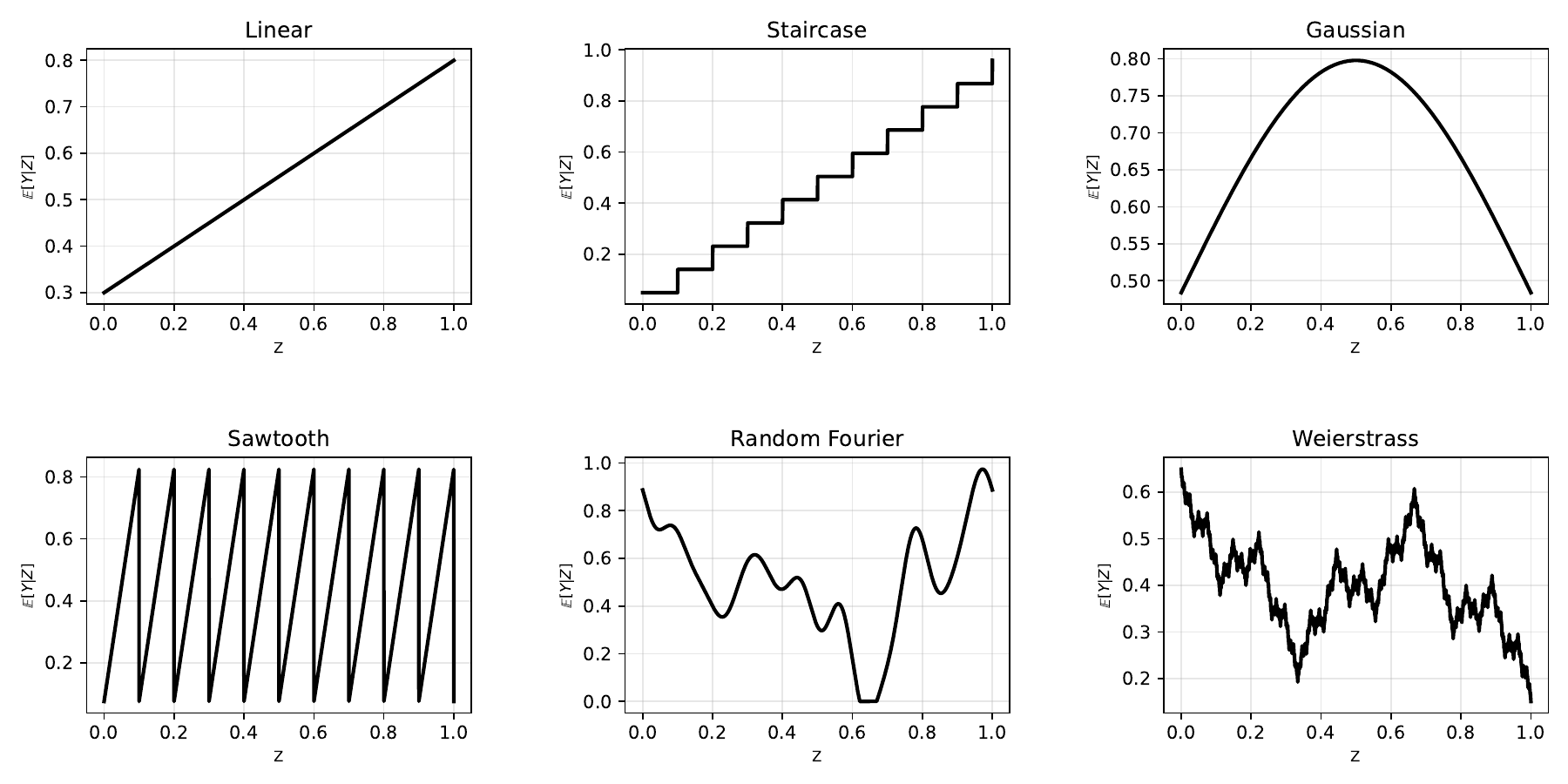}
    \caption{Conditional probability functions $\E[Y\mid Z = z]$ from Appendix \ref{appendix:compare_sims}}
    \label{fig:cond_prob_appendix}
\end{figure}

\begin{itemize}
    \item \texttt{Linear}: $\E[Y\mid Z = z] = 0.5\,z + 0.3$
    \item \texttt{Staircase}: $\E[Y\mid Z = z] = \frac{\lfloor 10 z \rfloor}{11} + 0.05$
    \item \texttt{Gaussian}: $\E[Y\mid Z = z] = \frac{1}{\sqrt{2\pi\sigma^2}}\exp\!\left(-\frac{(z-\mu)^2}{2\sigma^2}\right)$ with $\mu = \sigma = 0.5$.
    \item \texttt{Sawtooth}: $\E[Y\mid Z = z] = 0.75\big(\{10z\} + 0.1\big)$ where $\{\cdot\}$ denotes fractional part of the argument.
    \item \texttt{Random Fourier}: $\E[Y\mid Z = z] = 0.5 + 0.25\sum_{k=1}^{10}\Big[a_k \sin(2\pi k z) + b_k \cos(2\pi k z)\Big]$, where the coefficients are drawn independently as $a_k, b_k \sim \mathcal{N}(0, 1/k^2)$ once and then kept fixed throughout the experiment.
    \item \texttt{Weierstrass}: $\E[Y\mid Z = z] = 0.15 + 0.5 z + 0.25\, W(z)$ where $W(z) = \sum_{k=0}^{9} \left(\frac{1}{2}\right)^{k} \cos\!\big(3^{k}\pi z\big)$
\end{itemize}
The above $6$ conditional probability functions are plotted in Figure \ref{fig:cond_prob_appendix}. We repeat the data generation procedure from \eqref{eq:dgp_compare_sims} with the above choice of conditional probability functions $\E[Y\mid Z = z]$. For evaluation we repeat the evaluation protocol from Section \ref{sec:compare_experiments} and report the comparisions in Figure \ref{fig:compare_sims_appendix}. The empirical results in Figure \ref{fig:compare_sims_appendix} corroborate the conclusions of Section \ref{sec:compare_sims}: across a range of settings, $\rankECE$ provides a more faithful approximation to $\lECE$ than $\lbinECE$.

\begin{figure}[!h]
    \centering

    % Row 1
    \begin{subfigure}{0.32\textwidth}
        \centering
        \includegraphics[width=\linewidth]{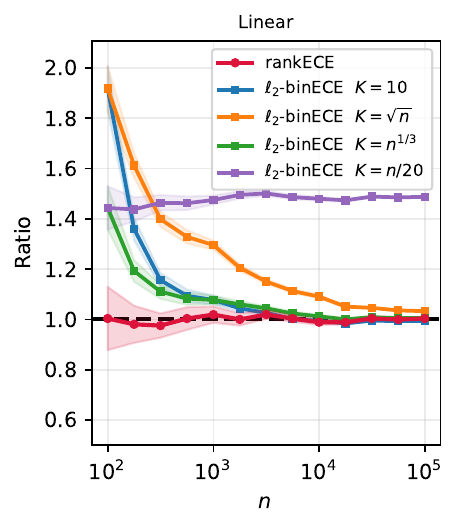}
        \caption{\texttt{Linear}.}
        \label{fig:first_sims_appendix}
    \end{subfigure}
    \hfill
    \begin{subfigure}{0.32\textwidth}
        \centering
        \includegraphics[width=\linewidth]{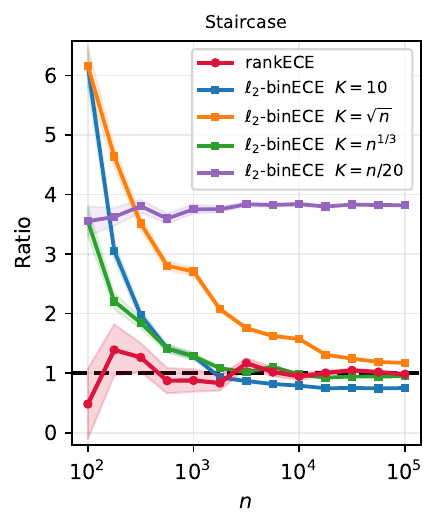}
        \caption{\texttt{Staircase}.}
        \label{fig:second_sims_appendix}
    \end{subfigure}
    \hfill
    \begin{subfigure}{0.32\textwidth}
        \centering
        \includegraphics[width=\linewidth]{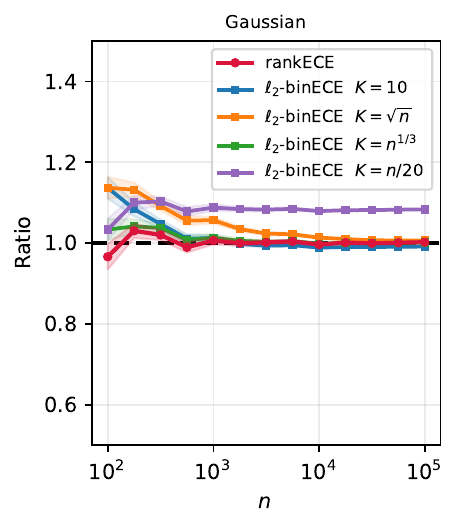}
        \caption{\texttt{Gaussian}.}
        \label{fig:third_sims_appendix}
    \end{subfigure}

    \vspace{0.5em}

    % Row 2
    \begin{subfigure}{0.32\textwidth}
        \centering
        \includegraphics[width=\linewidth]{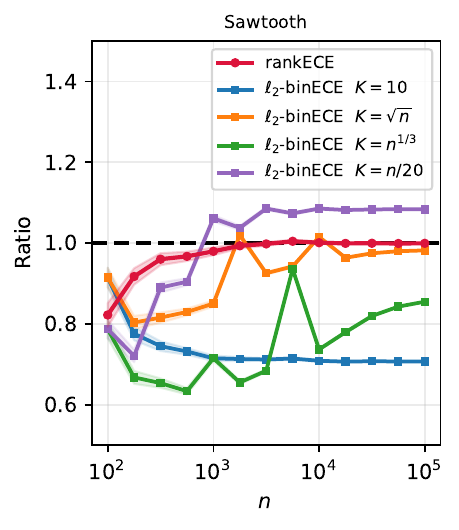}
        \caption{\texttt{Sawtooth}.}
        \label{fig:fourth_sims_appendix}
    \end{subfigure}
    \hfill
    \begin{subfigure}{0.32\textwidth}
        \centering
        \includegraphics[width=\linewidth]{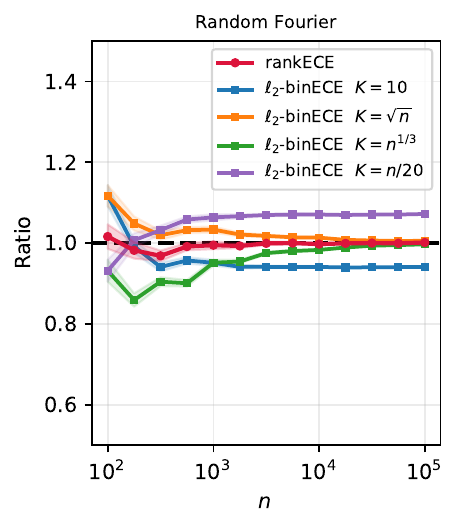}
        \caption{\texttt{Random Fourier}.}
        \label{fig:fifth_sims_appendix}
    \end{subfigure}
    \hfill
    \begin{subfigure}{0.32\textwidth}
        \centering
        \includegraphics[width=\linewidth]{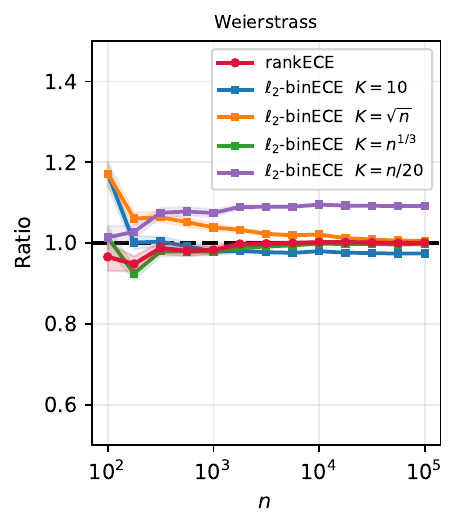}
        \caption{\texttt{Weierstrass}.}
        \label{fig:sixth_sims_appendix}
    \end{subfigure}

    \caption{Empirical comparison of $\rankECE$ and $\lbinECE$ as approximations to $\lECE$ for the conditional probability functions \texttt{Linear}, \texttt{Staircase}, \texttt{Gaussian}, \texttt{Sawtooth}, \texttt{Spikes}, and \texttt{Weierstrass} shown in Figure~\ref{fig:cond_prob_appendix}.}
    \label{fig:compare_sims_appendix}
\end{figure}

\subsection{Comparing $\rankECE$ and $\lbinECE$ in Sentiment Prediction}\label{appendix:sentiment_experiments}
In this section we repeat the empirical comparision between $\rankECE$ and $\lbinECE$ from Section \ref{sec:sentiment_main} for evaluating calibration of pre-trained senitment classifiers using the binary labeled Yelp Review Polarity corpus (\texttt{\seqsplit{https://registry.opendata.aws/fast-ai-nlp/}}). As in Section \ref{sec:sentiment_main} once again the prediction objective is to predict whether a Yelp review expresses positive or negative senitment. We consider the same four pre-trained models from Section \ref{sec:sentiment_main}, namely, \texttt{DistilBERT\_SST2}, \texttt{BERT\_SST2}, \texttt{RoBERTa\_Twitter\_Sentiment}, and \texttt{BERT\_Multilingual\_Stars}. 

\begin{figure}[ht]
    \centering

    % Row 1
    \begin{subfigure}{0.48\textwidth}
        \centering
        \includegraphics[width=0.9\linewidth]{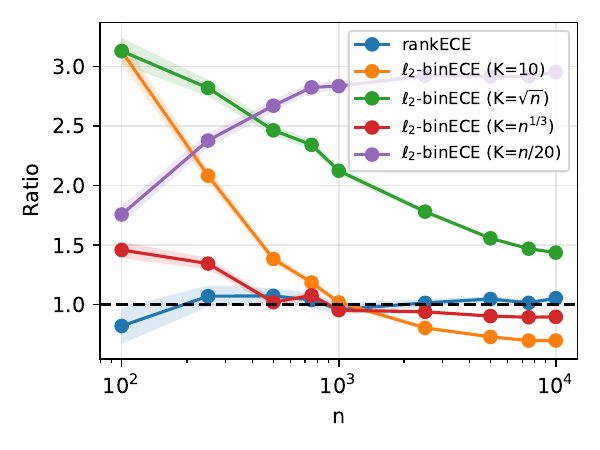}
        \caption{\texttt{BERT\_SST2}.}
        \label{fig:relative_error_yelp_bert_sst2}
    \end{subfigure}
    \hfill
    \begin{subfigure}{0.48\textwidth}
        \centering
        \includegraphics[width=0.9\linewidth]{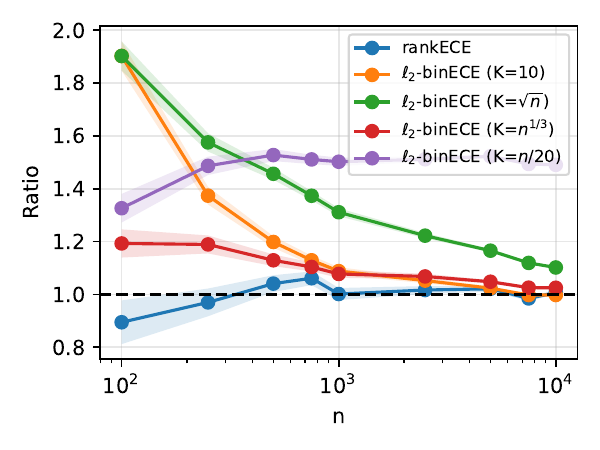}
        \caption{\texttt{BERT\_Multilingual\_Stars}.}
        \label{fig:relative_error_yelp_bert_multilingual_stars}
    \end{subfigure}

    \vspace{0.5em}

    % Row 2
    \begin{subfigure}{0.48\textwidth}
        \centering
        \includegraphics[width=0.9\linewidth]{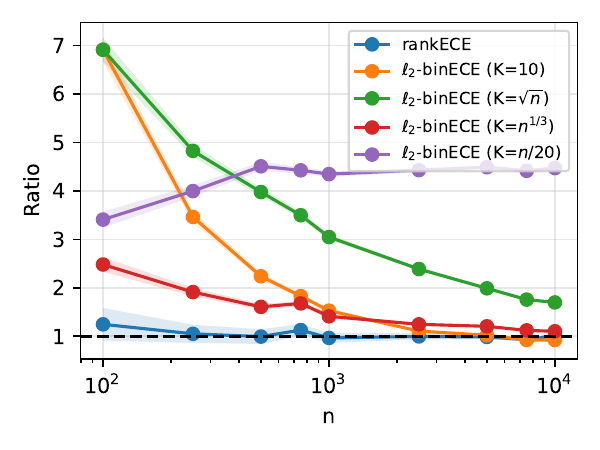}
        \caption{\texttt{RoBERTa\_Twitter\_Sentiment}.}
        \label{fig:relative_error_yelp_roberta_twitter}
    \end{subfigure}
    \hfill
    \begin{subfigure}{0.48\textwidth}
        \centering
        \includegraphics[width=0.9\linewidth]{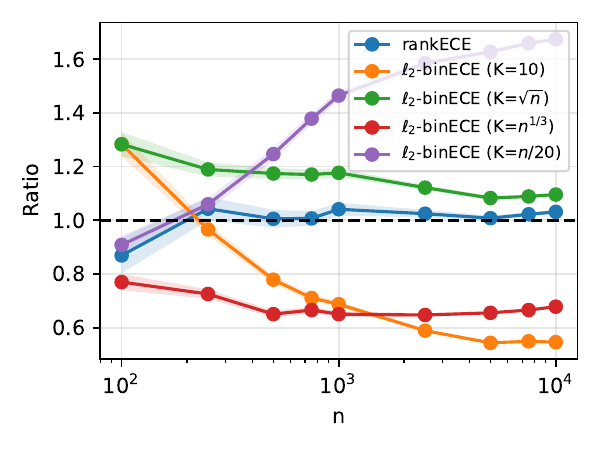}
        \caption{\texttt{DistilBERT\_SST2}.}
        \label{fig:relative_error_yelp_distilbert_sst2}
    \end{subfigure}

    \caption{The ratios $\hatrankECE(f)/\lECE(f)$ and $\binhatECE(f)/\lECE(f)$ for four pretrained sentiment models on the Yelp Review Polarity corpus.}
    \label{fig:relative_error_yelp}
\end{figure}

For this experiment, we use the full training and test sets from the Yelp Review Polarity corpus. We subsample $5\times 10^4$ observations from the combined corpus and repeat the experiment from Section \ref{sec:sentiment_main} using the evaluation protocol described in Section \ref{sec:compare_experiments}. The results are shown in Figure \ref{fig:relative_error_yelp}. The four sentiment models exhibit behavior similar to that in Figure \ref{fig:relative_error_models}, and the main conclusion of Section \ref{sec:sentiment_main} remains unchanged: $\rankECE$ provides a more robust approximation to $\lECE$ than $\lbinECE$. In particular, the performance of $\lbinECE$ depends on the number of bins and the underlying prediction model, whereas $\rankECE$ remains invariant to the choice of prediction function.

\subsection{Testing for perfect calibration}\label{appendix:empirical_test_calib}

In this section, we present additional experiments comparing the proposed $\rankECE$-based tests for perfect calibration with the SKCE-based tests of \cite{widmann2019calibration}. These experiments complement those presented in Section \ref{sec:empirical_test_calib}. We consider the following two conditional probability functions (clipped to lie in $[0,1]$):
\begin{itemize}
\item $g(z) = \E[Y\mid Z = z] = z - z^{15}$.
\item $g(z) = \E[Y\mid Z = z] = z + \frac{1}{4}\operatorname{sign}\left(\sin(10\pi z)\right)$.
\end{itemize}

\begin{figure}[ht]
\centering
\vspace{-5pt}
\begin{subfigure}{0.32\textwidth}
\centering
\includegraphics[width=\linewidth]{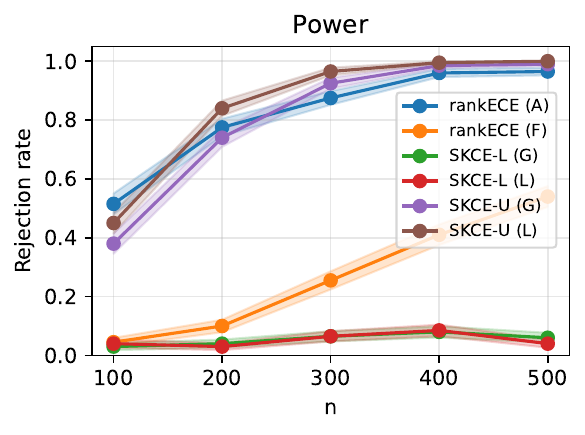}
\caption{}
\label{fig:first_power_15}
\end{subfigure}
\hfill
\begin{subfigure}{0.32\textwidth}
\centering
\includegraphics[width=\linewidth]{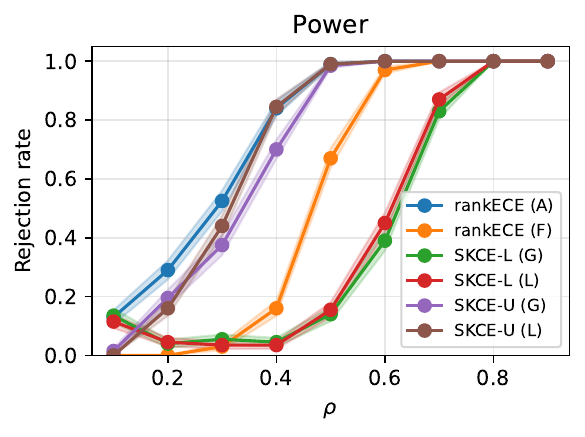}
\caption{}
\label{fig:second_power_15}
\end{subfigure}
\hfill
\begin{subfigure}{0.32\textwidth}
\centering
\includegraphics[width=\linewidth]{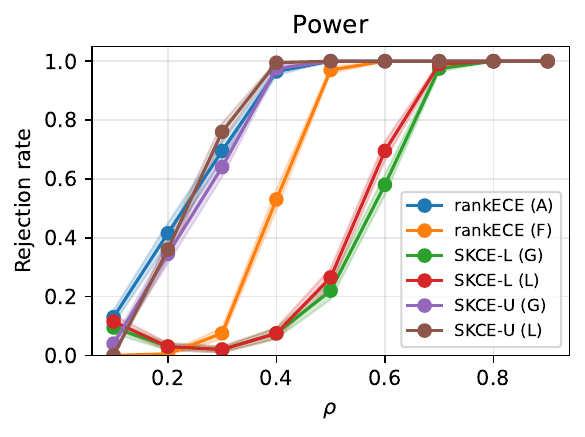}
\caption{}
\end{subfigure}

\caption{Empirical power of the $\rankECE$-based and SKCE-based tests for $g(z)=z-z^{15}$.}
\label{fig:power_compare_15}

\end{figure}

For each choice of $g$, we repeat the experimental setup from Section \ref{sec:empirical_test_calib}. Specifically, for $\rho\in (0,1)$, we generate $Z\sim \textnormal{Beta}(\rho, 1-\rho)$ and subsequently generate $Y\mid Z\sim \textnormal{Ber}(g(Z))$. We compare the empirical power of the $\rankECE$-based tests (\texttt{rankECE (A)} and \texttt{rankECE (F)}) with the SKCE-based tests (\texttt{SKCE-U (G)}, \texttt{SKCE-U (L)}, \texttt{SKCE-L (G)}, and \texttt{SKCE-L (L)}) in two settings: (i) fixing $\rho = 0.3$ while varying the sample size $n$, and (ii) fixing $n = 100, 200$ while varying $\rho$ over $(0,1)$.

\begin{figure}[!h]
    \centering

    \begin{subfigure}{0.32\textwidth}
        \centering
        \includegraphics[width=\linewidth]{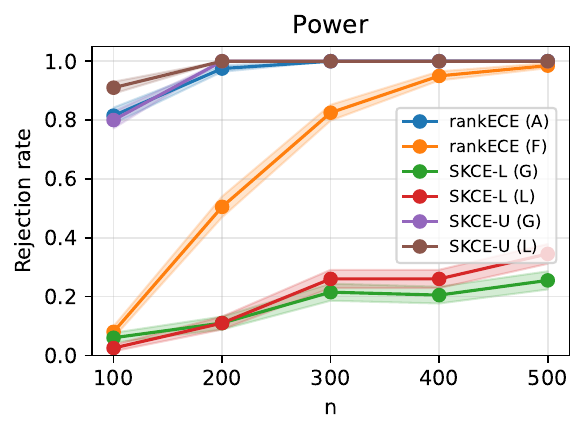}
        \label{fig:first_power_sin}
    \end{subfigure}
    \hfill
    \begin{subfigure}{0.32\textwidth}
        \centering
        \includegraphics[width=\linewidth]{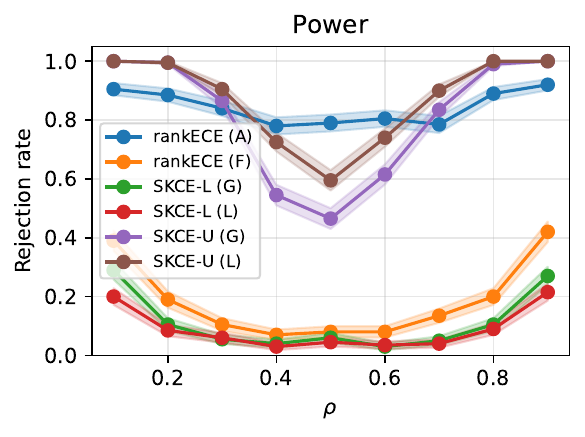}
        \label{fig:second_power_sin}
    \end{subfigure}
    \hfill
    \begin{subfigure}{0.32\textwidth}
        \centering
        \includegraphics[width=\linewidth]{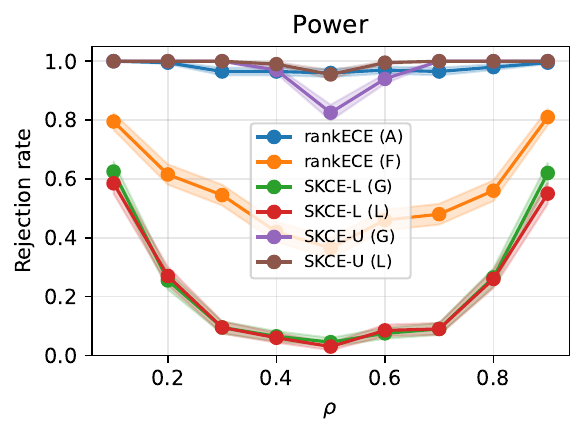}
        \label{fig:third_power_sin}
    \end{subfigure}

    \caption{Empirical power of the $\rankECE$-based and SKCE-based tests for $g(z)=z+\frac{1}{4}\operatorname{sign}\left(\sin(10\pi z)\right)$.}
    \label{fig:power_compare_sin}
\end{figure}

The results are shown in Figures \ref{fig:power_compare_15} and \ref{fig:power_compare_sin}. Consistent with the findings in Section \ref{sec:empirical_test_calib}, the asymptotic $\rankECE$ test (\texttt{rankECE (A)}) achieves power comparable to that of \texttt{SKCE-U}, while the finite-sample test (\texttt{rankECE (F)}) exhibits better power to that of \texttt{SKCE-L}. These experiments demonstrate that the conclusions of Section \ref{sec:empirical_test_calib} remain robust across substantially different forms of model misspecification.

\begin{figure}[!h]
    \centering

    \begin{subfigure}{0.32\textwidth}
        \centering
        \includegraphics[width=\linewidth]{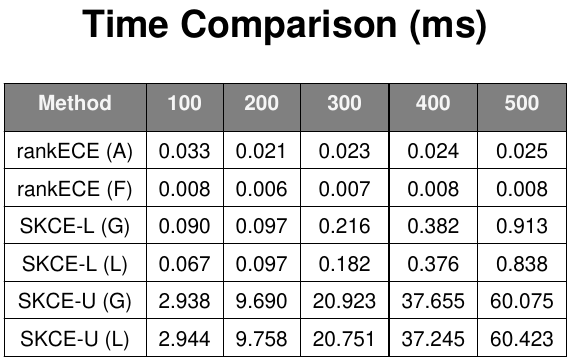}
        \caption*{\small (a)}
    \end{subfigure}
    \hfill
    \begin{subfigure}{0.32\textwidth}
        \centering
        \includegraphics[width=\linewidth]{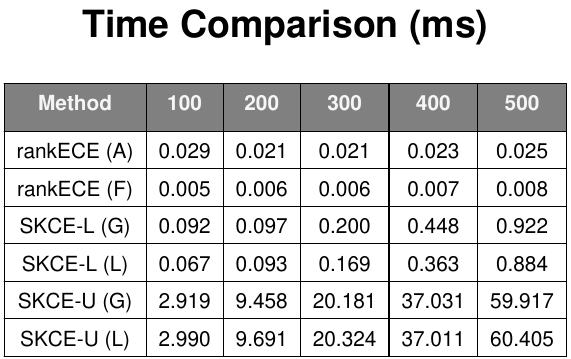}
        \caption*{\small (b)}
    \end{subfigure}
    \hfill
    \begin{subfigure}{0.32\textwidth}
        \centering
        \includegraphics[width=\linewidth]{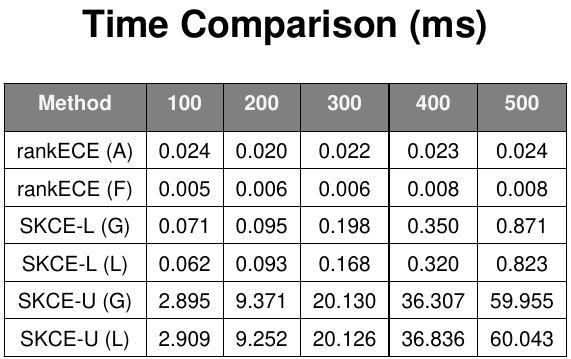}
        \caption*{\small (c)}
    \end{subfigure}

    \caption{Average running time of $\rankECE$-based and SKCE-based tests for (a) $g(z)=z-z^4$, (b) $g(z) = z - z^{15}$ and (c) $g(z) = z+\frac{1}{4}\operatorname{sign}\left(\sin(10\pi z)\right)$}
    \label{fig:time_compare}
\end{figure}

In addition to comparing power, for the same choice of conditional probability functions with $\rho = 0.5$, we compare the average running times of the $\rankECE$-based and SKCE-based tests, computed over $200$ repetitions, as shown in Figure~\ref{fig:time_compare}. These experiments were conducted on a MacBook Pro with an M4 chip. In all considered settings, both \texttt{rankECE (A)} and \texttt{rankECE (F)} are significantly faster than the SKCE-based tests. In particular, they are orders of magnitude faster than \texttt{SKCE-U}, whose quadratic-time test statistic requires resampling to determine the test threshold. This is due to the lack of a tractable asymptotic null distribution; in our experiments, we use the multiplier bootstrap following \cite{widmann2019calibration}.

\section{Testability and Actionability of $\rankECE$}\label{sec:test_action_rankece}
Two key desiderata for a calibration measure are \textit{testability} and \textit{actionability} \citep{rossellini2025can}. A measure should be testable, meaning that it can be reliably estimated from finite samples and used to assess whether a predictor is approximately calibrated. It should also be actionable, meaning that a small calibration error should provide meaningful guarantees for downstream decision-making. While these to requirements are usually in tension, in this section we will show that the proposed $\rankECE$ satisfies both. 

\subsection{Testability of $\rankECE$}
In order for a calibration metric $\Delta$ to be useful for decision-making in practice, it is important to be able to empirically verify whether a predictive model $f$ has small miscalibration $\Delta(f)$, and consequently determine whether the model can be reliably used in downstream tasks. In other words, for a fixed threshold $\tau>0$, the question of \textit{testability} of a calibration measure asks whether, given a predictive function $f$, one can test whether $\Delta(f)\leq\tau$.

The concentration result established in Proposition \ref{prop:test_rank_finite} immediately implies that $\rankECE$ is \textit{testable}. (In fact, a key advantage of having a testable calibration measure is that it enables the design of procedures that return a predictive model $f$ which, with high probability, is guaranteed to have low miscalibration with respect to the given calibration measure. Following similar constructions in \citet{blasiok2023smooth,rossellini2025can}, one can construct a preliminary trained model $f_n$, then use a holdout data set to test whether $\rankECE(f_n)$ is sufficiently small and so it is safe to output $f_n$---if not, then replace $f_n$ with a simpler constant model. We omit the details for brevity.)

\subsection{Actionability of $\rankECE$}
We now examine $\rankECE$ from a decision-theoretic perspective, focusing on the guarantees it provides for forecast quality in downstream decision-making. In a binary prediction setting, let $\tau \in [0,1]$ denote the decision maker's relative tolerance toward false positives and false negatives, and consider the loss function
\begin{align}\label{eq:tau_binary_loss}
    \ell(Y, \hat Y; \tau) = \tau(1-Y)\hat Y + (1-\tau)Y(1-\hat Y),
\end{align}
where $\hat Y$ is the predicted label. Under this loss, the Bayes-optimal decision rule is given by $\one\{\E[Y\mid f(X)]\geq\tau\}$. In practice, however, the conditional expectation $\E[Y\mid f(X)]$ is unknown, and one instead employs the plug-in decision rule $\one\{f(X)\geq \tau\}$. While the plug-in rule may be suboptimal, it is natural to ask whether a transformation $h$ can improve its decision-theoretic risk. Following \cite{rossellini2025can}, we investigate whether a small calibration error, as measured by $\rankECE(f)$, guarantees that the plug-in decision rule is close to optimal, in the sense that transformations from a class $\cH$ cannot substantially improve its risk. To that end consider the risk associated with the loss function $\ell$ from \eqref{eq:tau_binary_loss} for any predictor $f$:
\begin{align*}
    \cR(f; \tau) = \E\left[\ell\left(Y, \one\{f(X)\geq \tau\}; \tau\right)\right].
\end{align*}

In the following result, we show that imposing a shape constraint on the class of transformations yields a direct connection between calibration and decision-theoretic performance. In particular, when $\cH$ is taken to be the class of piecewise monotonic transformations of $f$, the excess risk incurred by the plug-in decision rule relative to the best transformed decision rule is controlled by $\rankECE(f)$.

\begin{theorem}\label{thm:actionable_rankece}
    Let $f:\cX\ra[0,1]$ be a fixed predictor and fix $\tau\in (0,1)$ and $M\in \N$. Let $\cH = \{h:[0,1]\ra[0,1]\text{ such that }h\text{ is $M$-piecewise monotone}\}$. Then for any $n\geq 2$,
    \begin{align*}
        \cR(f;\tau) - \inf_{h\in \cH}\cR(h\circ f;\tau)\leq \sqrt{\rankECE(f) + \frac{8(M+1)}{n}}
    \end{align*} 
\end{theorem}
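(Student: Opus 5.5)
Let $g(z)=\E[Y\mid Z=z]$ and $r(z)=g(z)-z$. The plan has three steps: write the excess risk exactly, bound it by a squared calibration error against piecewise-constant test functions, and compare that quantity to $\rankECE(f)$ through a binning argument in the spirit of Theorem~\ref{thm:rank_bin_comparison}.

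\textbf{Step 1 (excess risk).} For any $h$, conditioning on $Z$ gives
\[
\cR(h\circ f;\tau)=\E\big[\tau(1-g(Z))\one\{h(Z)\ge\tau\}+(1-\tau)g(Z)\one\{h(Z)<\tau\}\big].
\]
Hence
\[
\cR(f;\tau)-\cR(h\circ f;\tau)=\E\big[(\tau-g(Z))(\one\{Z\ge\tau\}-\one\{h(Z)\ge\tau\})\big].
\]
Since $h$ is $M$-piecewise monotone, the set $\{z:h(z)\ge\tau\}$ is a union of at most $O(M)$ intervals. The set $\{z\ge\tau\}$ is one more interval. So the difference of indicators is a function $w(Z)=\sum_j \epsilon_j\one\{Z\in I_j\}$ with $\epsilon_j\in\{-1,0,1\}$, built on a partition of $[0,1]$ into at most $K=O(M+1)$ intervals, which I will choose so that $K\le 2(M+1)$.

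\textbf{Step 2 (reduce to a calibration error).} Write $\tau-g(Z)=(\tau-Z)-r(Z)$. On the set where $w(Z)\neq0$, the term $(\tau-Z)w(Z)\le0$. Indeed, if $w=1$ then $Z\ge\tau$, and if $w=-1$ then $Z<\tau$. Therefore the excess risk is at most $-\E[r(Z)w(Z)]\le\sum_j|\E[r(Z)\one\{Z\in I_j\}]|$. Applying Cauchy--Schwarz across bins, with $p_j=\P(Z\in I_j)$,
\[
\sum_j p_j\,\big|\E[r(Z)\mid Z\in I_j]\big|\le\Big(\sum_j p_j\,\E[r(Z)\mid Z\in I_j]^2\Big)^{1/2}.
\]
The right-hand side is $\sqrt{\lbinECE(f)}$ for the partition $\{I_j\}$. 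That partition is a partition of $[0,1]$ into consecutive intervals. Endpoint conventions (open versus closed) should not matter because $Z$ is nonatomic.

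\textbf{Step 3 (compare to $\rankECE$).} Theorem~\ref{thm:rank_bin_comparison} gives $\lbinECE(f)\le\rankECE(f)+4K/n$. With $K\le 2(M+1)$, this yields
\[
\cR(f;\tau)-\cR(h\circ f;\tau)\le\sqrt{\rankECE(f)+8(M+1)/n}.
\]
Taking the supremum over $h\in\cH$ completes the proof.

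\textbf{Main obstacle.} The delicate step is the interval count. An $M$-piecewise monotone $h$ has a superlevel set with at most $M$ components. Intersecting with the threshold at $\tau$ must still leave a partition with at most $2(M+1)$ cells on which $w$ is constant. The naive count of all endpoints gives roughly $2M+2$ cells, so it must be checked carefully against the constant $8(M+1)$. Theorem~\ref{thm:rank_bin_comparison} is stated for bins of the form $[a_{j-1},a_j)$, so I would also verify that other endpoint conventions are harmless, again using that $Z$ is nonatomic.
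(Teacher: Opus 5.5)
Your proposal is correct and follows the paper's proof essentially step for step. It uses the same reduction of the excess risk to $\E[r(Z)\omega_\tau(Z)]$ with $\omega_\tau=-w$, the same partition of $[0,1]$ on which $w$ is constant, Cauchy--Schwarz to reach a binned ECE on that partition, and a rerun of Theorem~\ref{thm:rank_bin_comparison}. The interval count you flag is not actually delicate: on each of the $M$ monotone pieces $\one\{h(z)\ge\tau\}$ is constant on at most two subintervals, and the cut at $\tau$ adds at most one cell, so $K\le 2M+1\le 2(M+1)$ and $4K/n\le 8(M+1)/n$ as required.
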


Proof of Theorem \ref{thm:actionable_rankece} is deferred to Appendix \ref{sec:proofof_thm_actionable_rankece}. Similar guarantees to those in Theorem \ref{thm:actionable_rankece} also hold for $\lECE(f)$ and the related measure of \textit{cutoff calibration} introduced in \citep{rossellini2025can}. In particular, the guarantee for $\lECE(f)$ does not require any restrictions on the class of transformation functions $h$. Similar guarantees have also been proposed in terms of regret through the lens of omniprediction in \cite{okoroafor2025nearoptimal}.

\begin{remark}
    The actionability guarantees in Theorem \ref{thm:actionable_rankece} are stated in terms of the risk under the loss function $\ell$ in \eqref{eq:tau_binary_loss}. However, analogous guarantees can be established for arbitrary bounded proper scoring rules, following the argument in Appendix B.3 of \cite{rossellini2025can}. This follows from the Schervish representation \citep{schervish1989general}, which expresses every proper scoring rule as a mixture of binary decision losses over different values of $\tau$. We omit the details for brevity.
\end{remark}

\subsection{Proof of Theorem \ref{thm:actionable_rankece}}\label{sec:proofof_thm_actionable_rankece}
To prove Theorem \ref{thm:actionable_rankece} notice that it is enough to prove that for any $h\in \cH$,
\begin{align*}
    \cR(f;\tau) - \cR(h\circ f;\tau)\leq \sqrt{\rankECE(f) + \frac{8(M+1)}{n}}.
\end{align*}
To that end fix $h\in \cH$. Now by Lemma 3.1 from \cite{rossellini2025can} we have the decomposition,
\begin{align*}
    \cR(f;\tau) - \cR(h\circ f; \tau) 
    & = \E\left[(Y-\tau)\one\{Z<\tau, h(Z)\geq \tau\}\right] + \E\left[(\tau-Y)\one\{Z\geq \tau, h(Z)< \tau\}\right]\\
    & \leq \E[(Y-Z)\omega_\tau(Z)]
\end{align*}
where recall that $Z = f(X)$ and $\omega_\tau(\cdot)$ is the weight function defined as,
\begin{align}\label{eq:def_omega}
    \omega_\tau(z) = \one\{z<\tau, h(z)\geq \tau\} - \one\{z\geq\tau, h(z)<\tau\}\text{ for all }z\in [0,1].
\end{align}
Then by the tower property of conditional expectations, $\E[(Y-Z)\omega_\tau(Z)] = \E[r(Z)\omega_\tau(Z)]$, where $r(\cdot)$ is the residual function from Proposition \ref{prop:rank_l2_finite_bd}. Thus we have the bound,
\begin{align*}
    \cR(f;\tau) - \cR(h\circ f; \tau) \leq \E[r(Z)\omega_\tau(Z)].
\end{align*}
In the next result we claim that there exists a partition of $[0,1]$ such that $\omega_\tau$ is piecewise constant on that partition. 

\begin{lemma}\label{lemma:partition_omega}
    There exists an integer $K_\tau \leq 2(M+1)$ and a partition $\cP_{K_\tau} = \{J_1,\ldots,J_{K_\tau}\}$ of $[0,1]$ consisting of subintervals $J_t$ such that $\omega_\tau$ is constant on each $J_t$, $1 \leq t \leq K_\tau$.
\end{lemma}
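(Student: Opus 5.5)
The plan is to show that $\omega_\tau$ changes value only at a bounded number of points, and then cut $[0,1]$ at exactly those points.

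\textbf{Step 1: locate where $\omega_\tau$ can change.} By definition \eqref{eq:def_omega}, $\omega_\tau(z)$ is determined by two indicators, $\one\{z<\tau\}$ and $\one\{h(z)\geq\tau\}$. The first indicator changes value only at $z=\tau$. It therefore suffices to show that $z\mapsto \one\{h(z)\geq \tau\}$ is piecewise constant on at most $2M$ subintervals. Once that is done, we refine this partition by the single extra cut at $\tau$, giving at most $2M+1\leq 2(M+1)$ intervals.

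\textbf{Step 2: decompose using piecewise monotonicity.} Since $h$ is $M$-piecewise monotone, there is a partition of $[0,1]$ into at most $M$ intervals $I_1,\dots,I_M$ such that $h$ is monotone on each $I_s$. Fix one $I_s$ and suppose $h$ is nondecreasing there.
\begin{itemize}
\item The superlevel set $\{z\in I_s: h(z)\geq\tau\}$ is an upper subinterval of $I_s$: if $z$ belongs to it and $z'\in I_s$ with $z'>z$, then $h(z')\geq h(z)\geq\tau$.
\item Hence $I_s$ splits into at most two subintervals: a lower one where the indicator is $0$ and an upper one where it is $1$. Whether the cut point itself lies in the lower or upper piece depends on whether the threshold is attained, but either way the pieces are intervals.
\item If instead $h$ is nonincreasing on $I_s$, the superlevel set is a lower subinterval and the same conclusion holds.
\end{itemize}
This gives at most $2M$ intervals on which $\one\{h\geq\tau\}$ is constant.

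\textbf{Step 3: combine.} Intersect this collection with the two intervals $[0,\tau)$ and $[\tau,1]$.
\begin{itemize}
\item Each cell of the common refinement is an intersection of intervals, so it is an interval (discard empty ones).
\item Adding the single cut at $\tau$ can split at most one existing interval into two, so the count rises by at most $1$, giving $K_\tau\leq 2M+1\leq 2(M+1)$.
\item On each cell both indicators are constant, so $\omega_\tau$ is constant there.
\end{itemize}

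\textbf{Main point requiring care.} The only delicate issue is the definition of ``$M$-piecewise monotone'': the pieces $I_s$ may be open, closed, or half-open, and may be degenerate single points. Step 2 works for any interval, including a singleton, so the argument is unaffected. The slack in the bound $2(M+1)$ comfortably absorbs any convention in which the pieces are described by $M$ breakpoints, i.e.\ $M+1$ pieces. In that convention the count becomes $2(M+1)$ from Step 2, and the cut at $\tau$ adds at most one more interval. If needed, we avoid this extra $+1$ by noting that the $\tau$-cut can be merged with a monotone-piece boundary count; otherwise we read the lemma with the convention of $M$ pieces, which gives $2M+1$.
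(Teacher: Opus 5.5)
Your argument is correct and is essentially the paper's: $\one\{h\geq\tau\}$ switches at most once on each monotone piece, and one extra cut at $\tau$ handles $\one\{z<\tau\}$. You build the intervals directly and get the sharper $2M+1$, while the paper counts switches of $\phi_\tau$ and settles for $2M+2$; the hedging in your last paragraph is unnecessary and its ``merging'' claim is unsupported, so drop it, since the paper's convention is exactly $M$ monotone pieces with breakpoints $0=a_0<a_1<\cdots<a_M=1$.
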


We postpone the proof of Lemma \ref{lemma:partition_omega} to Section \ref{appendix:proofof_partition_omega}. In the following we use this result to complete the proof of Theorem \ref{thm:actionable_rankece}. Considering the partition $\cP_{K_\tau}$ from Lemma \ref{lemma:partition_omega} we can write,
\begin{align*}
    \E[r(Z)\omega_\tau(Z)] = \sum_{k=1}^{K_\tau}p_k\omega_k\mu_k
\end{align*}
where $p_k = \P(Z\in J_k)$, $\mu_k = \E[r(Z)|Z\in J_k]$ and $\omega_k\in \{-1, 0,1\}$ is the constant value of $\omega_{\tau}$ on $J_k$. By Cauchy Schwartz inequality it follows that,
\begin{align}\label{eq:actionable_bin_bd}
    \left|\E[r(Z)\omega_\tau(Z)]\right|\leq \sqrt{\sum_{k=1}^{K_\tau}p_k\omega_k^2}\sqrt{\sum_{k=1}^{K_\tau}p_k\mu_k^2}\leq \sqrt{\sum_{k=1}^{K_\tau}p_k\mu_k^2},
\end{align}
where the last inequality follows by noting that $\omega_k^2\in \{0,1\}$ and $\sum_{k=1}^{K_\tau}p_k = 1$. Now note that $\sum_{k=1}^{K_\tau}p_k\mu_k^2$ is the $\lbinECE$ of $f$ with respect to the partition $\cP_{K_\tau}$. With this observation we can now repeat the proof of Theorem \ref{thm:rank_bin_comparison} with the choice of bins $\cP_{K_\tau}$ and conclude,
\begin{align}\label{eq:actionable_bin_rank_bd}
    \sum_{k=1}^{K_\tau}p_k\mu_k^2\leq \rankECE(f) + \frac{4K_\tau}{n}\leq \rankECE(f) + \frac{8(M+1)}{n}.
\end{align}
The proof of Theorem \ref{thm:actionable_rankece} is now completed by substituting the bound from \eqref{eq:actionable_bin_rank_bd} in the right hand side of \eqref{eq:actionable_bin_bd}.

\subsubsection{Proof of Lemma \ref{lemma:partition_omega}}\label{appendix:proofof_partition_omega}
Consider the function $\phi_\tau(z) = \one\{h(z)\geq \tau\}$ and define,
\begin{align*}
    C_h(\tau) = \sup\left\{\sum_{i=0}^{m}\one\{\phi_\tau(z_i)\neq \phi_\tau(z_{i+1})\}: m\geq 1, z_0 = 0\leq z_1\leq\cdots\leq z_m\leq z_{m+1} = 1\right\}.
\end{align*}
By definition $C_h(\tau)$ counts the number of times $\phi_\tau$ changes values in $[0,1]$. Note that $C_h(\tau)\in \N\cup\{+\infty\}$. Now recall that $h$ is piecewise monotone and consider the partition $0 = a_0<a_1<\cdots<a_M=1$ such that $h$ is monotone in each sub-interval $[a_j, a_{j+1}]$ for all $0\leq j\leq M-1$. Now fix a collection of points $z_0 = 0\leq z_1\leq\cdots\leq z_m\leq z_{m+1} = 1$ and notice,
\begin{align*}
    \sum_{i=0}^{m}\one\{\phi_\tau(z_i)\neq \phi_\tau(z_{i+1})\}\leq \sum_{j=1}^{M}(S_j + 1)
\end{align*}
where $S_j = \sum_{i\in I_j}\one\{\phi_\tau(z_i)\neq \phi_\tau(z_{i+1})\}$ with $I_j = \{i\in [m]: z_i\in [a_{j-1}, a_j)\}$ for all $1\leq j\leq M$. Fix $j\in [M]$. Then recall that by definition $h$ is monotone in $[a_{j-1}, a_j]$ and hence $S_j\leq 1$. Thus,
\begin{align*}
    \sum_{i=0}^{m}\one\{\phi_\tau(z_i)\neq \phi_\tau(z_{i+1})\}\leq 2M.
\end{align*}
Moreover, recall that the collection of points $z_0 = 0\leq z_1\leq\cdots\leq z_m\leq z_{m+1} = 1$ was arbitrary and hence it follows that $C_h(\tau)\leq 2M$. Now to complete the proof of Lemma \ref{lemma:partition_omega} note that by definition from \eqref{eq:def_omega} the function $\omega_\tau$ can change value only at points where either $\one\{z\geq \tau\}$ or $\phi_\tau(z) = \one\{h(z)\geq \tau\}$ changes value. We can find a partition $\{\bar J_1,\bar J_2,\ldots, \bar J_{C_h(\tau)+1}\}$ of $[0,1]$ where each $\bar J_t$ is a sub-interval of $[0,1]$ such that $\phi_\tau$ is constant in each $\bar J_t$ for all $t\in [C_h(\tau) + 1]$. Moreover, the function $\one\{z\geq \tau\}$ has a single jump at $z = \tau$ and hence we can refine the above partition to consider $\cP_{K_\tau} := \{J_1,J_2,\ldots, J_{K_\tau}\}$ with $K_\tau = C_h(\tau) + 2$ such that $\omega_\tau$ is constant on each $J_t$. The proof is completed by noting that $K_\tau\leq 2(M+1)$.

\section{An example comparing $\rankECE$ and $\lbinECE$}\label{appendix:example}

In this section, we formally discuss the setting of Example \ref{example:compare} to compare the proposed measure of calibration $\rankECE$ with $\lbinECE$ (recall \eqref{eq:L2_binned_ECE}). To that end, we begin by recalling the notation. Fix $c\in (0,1/2)$ and, for $m\in \N$, let $\phi_m(z) = \textnormal{sign}(\sin(2\pi mz))$ be a square wave on $[0,1]$. The data-generating process is as follows:
\begin{align*}
    X\sim \textnormal{Unif}[0,1] \text{ and }Y\mid X\sim \textnormal{Ber}(c\phi_m(X) + (1-c)X + c(1-X)).
\end{align*}
We take the prediction function to be the map
\begin{align}\label{eq:example_f}
    Z = f(X) = c(1-X) + (1-c)X = c + (1-2c)X.
\end{align}
With this setting in place, we compare $\rankECE(f)$ and $\lbinECE(f)$ as measures of calibration. To that end we first evaluate $\lECE(f)$. By construction,
\begin{align*}
    \E[Y\mid Z] = Z + c\phi_m\left(\frac{Z - c}{1-2c}\right)
\end{align*}
where $|\phi_m| = 1$. Hence, by definition (recall \eqref{eq:L2_ECE}), it immediately follows that $\lECE(f) = c^2$. Next, in our first result, we analyze $\lbinECE$ as a measure of calibration in this example. For $K\geq 1$ and $0=a_0<a_1<\cdots<a_{K-1}<a_K=1$, let
\begin{align}\label{eq:def_bins_example}
\cB_1=[a_0,a_1),\ \cB_2=[a_1,a_2),\ \ldots,\ \cB_K=[a_{K-1},a_K]
\end{align}
be an arbitrary partition of $[0,1]$ into $K$ bins.
\begin{lemma}\label{lemma:example_bin_bd}
Consider the setting defined above and let $f(x) = c + (1-2c)x$ be the prediction function from \eqref{eq:example_f}. Define $\lbinECE(f)$ with respect to the bins $\cB_1,\ldots,\cB_K$ from \eqref{eq:def_bins_example}. Then, for $m\geq 1$,
\begin{align*}
\lbinECE(f) \leq \frac{c^2K}{m}.
\end{align*}
\end{lemma}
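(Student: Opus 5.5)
The plan is to reduce everything to an elementary estimate on averages of the square wave $\phi_m$ over intervals. By definition \eqref{eq:L2_binned_ECE} and the tower property,
\[
\lbinECE(f)=\sum_{j=1}^K \P(Z\in\cB_j)\,\mu_j^2,\qquad \mu_j=\E[r(Z)\mid Z\in\cB_j].
\]
Bins with $\P(Z\in\cB_j)=0$ contribute nothing. Since $Z=c+(1-2c)X$ with $X\sim\textnormal{Unif}[0,1]$, we have $r(Z)=c\,\phi_m(X)$. Under the affine map $z\mapsto (z-c)/(1-2c)$, each bin $\cB_j$ intersected with $[c,1-c]$ pulls back to an interval $I_j\subseteq[0,1]$ of length $\ell_j$. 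These intervals partition $[0,1]$ up to endpoints, so $\sum_j\ell_j=1$, $\P(Z\in\cB_j)=\ell_j$, and
\[
\mu_j=\frac{c}{\ell_j}\int_{I_j}\phi_m(x)\,\rmd x .
\]
This holds for any partition of $[0,1]$ into intervals, so the argument covers both the general bins of \eqref{eq:def_bins_example} and the bins on $[c,1-c]$ used in Example \ref{example:compare}.

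The key step is to show that for every interval $I=[s,t]\subseteq[0,1]$,
\[
\Big|\int_I\phi_m(x)\,\rmd x\Big|\le \frac{1}{2m}.
\]
The function $\phi_m$ has period $1/m$ and integrates to zero over each full period. We therefore discard whole periods from $I$, which leaves an interval $J$ of length less than $1/m$ with the same integral. On any such interval the integral is at most the length of the part lying in one half-period where $\phi_m=+1$ (respectively $-1$), minus the rest. A short case check on where $J$ starts relative to the sign changes at multiples of $1/(2m)$ then gives absolute value at most $1/(2m)$. This case check is the only real obstacle. An alternative is to write $\int_s^t\phi_m = F(t)-F(s)$, where $F$ is the triangle-wave antiderivative taking values in $[0,1/(2m)]$, which gives the bound immediately.

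Combining this with the trivial bound $|\phi_m|\le1$ gives $|\mu_j|\le c\min\{1,1/(2m\ell_j)\}$. Using $\min\{1,a\}^2\le a$ for $a\ge 0$,
\[
\ell_j\mu_j^2\le c^2\ell_j\cdot\frac{1}{2m\ell_j}=\frac{c^2}{2m}.
\]
Summing over the at most $K$ nonempty bins yields $\lbinECE(f)\le c^2K/(2m)\le c^2K/m$, which is the claim of Lemma \ref{lemma:example_bin_bd}.
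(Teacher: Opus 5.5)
Your proof is correct and follows essentially the same route as the paper: write $\lbinECE(f)=\sum_j p_j\mu_j^2$, pull each bin back to an interval in $x$-space, bound the bin average of the square wave via periodicity combined with the trivial bound $|\mu_j|\le c$, and conclude with an elementary per-bin $\min$ bound before summing over at most $K$ bins. The differences are small improvements: your triangle-wave antiderivative gives $|\int_I\phi_m|\le 1/(2m)$ rather than the paper's $1/m$, so you actually obtain $c^2K/(2m)$, and intersecting the bins with $[c,1-c]$ handles a general partition of $[0,1]$ more carefully than the paper's proof does.
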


We present the proof of Lemma \ref{lemma:example_bin_bd} in Appendix \ref{sec:proofof_lemma_example_bin_bd}. The result of Lemma \ref{lemma:example_bin_bd} confirms that when the residual function $r(Z) = \E[Y-Z\mid Z] = c\phi_m\left(\frac{Z - c}{1-2c}\right)$ exhibits high oscillation, specifically when the oscillation frequency satisfies $m\gg K$, then $\lbinECE(f)$ fails to capture the miscalibration (recall that $\lECE(f) = c^2>0$). In contrast, the next result shows that, in the same example, $\rankECE$ provides a more robust alternative for detecting miscalibration.

\begin{lemma}\label{lemma:example_rank_bd}
    Consider the setting defined above. Then for the prediction function $f(x) = c + (1-2c)x$ from \eqref{eq:example_f},
    \begin{align*}
        c^2 - \frac{4mc+1}{n}\leq \rankECE(f)\leq c^2 \min\left\{ 1, \frac{(n-1)^2}{2m^2}\right\}
    \end{align*}
    for all $n\geq 2$ and $m\geq 1$.
\end{lemma}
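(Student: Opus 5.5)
The lower bound is immediate from earlier results, so I would prove it first and then spend most of the effort on the upper bound $c^2(n-1)^2/(2m^2)$. As computed above, $\lECE(f)=c^2$. The residual is $r(z)=c\,\phi_m\bigl((z-c)/(1-2c)\bigr)$, which has at most $2m$ jumps of size $2c$, so $\|r\|_{\textnormal{TV}}\le 4mc$. Proposition~\ref{prop:rank_l2_finite_bd} then gives $\rankECE(f)\ge c^2-(4mc+1)/n$. Proposition~\ref{prop:lower_bound} gives $\rankECE(f)\le \lECE(f)=c^2$, which is the ``$1$'' inside the minimum.

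For the second upper bound I would first reduce $\rankECE$ to an integral. Conditionally on $Z_1,\dots,Z_n$, the labels are independent with means $Z_i+r(Z_i)$, so
\[\rankECE(f)=\frac1n\sum_{i=1}^{n-1}\E\bigl[r(Z_{(i)})\,r(Z_{(i+1)})\bigr].\]
Next set $U_i=(Z_i-c)/(1-2c)\sim\textnormal{Unif}[0,1]$, so that $r(Z)=c\,\phi_m(U)$. The joint density of the consecutive pair $(U_{(i)},U_{(i+1)})$ at $u<v$ is
\[\frac{n!}{(i-1)!\,(n-i-1)!}\,u^{i-1}(1-v)^{n-i-1}.\]
Summing over $i$ by the binomial theorem gives $n(n-1)(1-(v-u))^{n-2}$, which depends only on the gap $d=v-u$. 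Hence
\[\rankECE(f)=c^2(n-1)\int_0^1 (1-d)^{n-2}A(d)\,\rmd d,\qquad A(d)=\int_0^{1-d}\phi_m(u)\phi_m(u+d)\,\rmd u .\]

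The remaining step is to show $\bigl|(n-1)\int_0^1(1-d)^{n-2}A(d)\,\rmd d\bigr|\le (n-1)^2/(2m^2)$, and I plan to do this by integrating by parts twice in $d$. The function $\phi_m$ has mean zero over each period $1/m$. Let $\Phi(x)=\int_0^x\phi_m$, which takes values in $[0,1/(2m)]$. Then
\[A(d)=\int_0^1\phi_m(u)\,\bigl(\Phi(u+d)-\Phi(u)\bigr)\,\rmd u\]
after accounting for the truncation. Equivalently, $A$ is (up to edge effects) the triangle-wave autocorrelation of the square wave. Its first antiderivative $B(d)=\int_0^d A$ vanishes at $d=0$, and its second antiderivative should be bounded in absolute value by $O(1/m^2)$. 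I would verify this directly via $\int_0^d A(s)\,\rmd s=\int\!\!\int \phi_m(u)\phi_m(w)\one\{0\le w-u\le d\}$, which reduces to $\Phi$-type integrals of size at most $1/(2m)$ each. Integrating by parts twice against $w(d)=(1-d)^{n-2}$, the boundary terms vanish or are controlled since $w(1)=0$ for $n\ge3$. This leaves $\int w''(d)\,C(d)\,\rmd d$, where $C$ is the second antiderivative. Using $\int_0^1|w''|\le$ the total variation of $w'$, which is $2(n-2)$ (or handling the case $n=2$ separately), together with $|C|\le 1/(4m^2)$-type bounds, gives a bound of order $(n-1)^2/m^2$.

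The main obstacle will be pinning the constant to exactly $1/2$. That requires the sharp sup bound on the second antiderivative, including the edge correction from the truncation $u\le 1-d$, and careful bookkeeping of the boundary terms at $d=0$ (where $A(0)=1$ and $B(0)=0$). If the double integration by parts proves too lossy, my fallback is a single integration by parts with the sharp bound $|B(d)|\le 1/(4m)$. Combined with $\int_0^1|w'|=1$, this already gives $(n-1)/(4m)$. I would then improve it by one more order in $n/m$ by expanding $B$ as a periodic triangle-type part plus an $O(d/m)$ correction.
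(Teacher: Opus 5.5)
Your lower bound and the $c^2$ term of the minimum are exactly the paper's argument. Your reduction $\rankECE(f)=c^2(n-1)\int_0^1(1-d)^{n-2}A(d)\,\rmd d$, obtained by summing the consecutive-spacing densities, is also correct. The gap is in the step that carries the $(n-1)^2/(2m^2)$ bound. You assert that the second antiderivative $C(d)=\int_0^d\int_0^s A$ is $O(1/m^2)$ uniformly in $d$, but you never prove it. Your pointer to the double-integral representation is the right starting point, but it is not carried out, and the other ingredients you offer are wrong or misleading.

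First, the displayed identity $A(d)=\int_0^1\phi_m(u)(\Phi(u+d)-\Phi(u))\,\rmd u$ is false. Its right side vanishes at $d=0$, whereas $A(0)=1$; it is in fact the antiderivative of the periodic autocorrelation, i.e.\ a periodic version of $B$. Second, the truncation $u\le 1-d$ is not an edge effect. Since $u\mapsto\phi_m(u)\phi_m(u+d)$ has period $1/(2m)$ and mean $T(d)$ (the triangle wave), one gets $A(d)=(1-d)T(d)+\rho(d)$ with $|\rho|\lesssim 1/m$. The double antiderivative of $\rho$ is only $O(1/m)$ unless you exhibit further cancellation. Third, your fallback gives $\rankECE(f)\le c^2(n-1)/(4m)$. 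This implies the lemma only when $m\le 2(n-1)$, so it misses exactly the regime $m\gg n$ that the upper bound is meant to capture.

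Your route can be closed, and it then becomes a genuinely different and sharper proof. Write $C(d)=\iint_{u<v}(d-(v-u))_+\,\rmd\Phi(u)\,\rmd\Phi(v)$. Integrate by parts once in $u$ and once in $v$, using $\Phi(0)=\Phi(1)=0$, to get
\[C(d)=\int_0^1\Phi(v)^2\,\rmd v-\int_d^1\Phi(v)\Phi(v-d)\,\rmd v .\]
Hence $0\le C(d)\le\int_0^1\Phi^2=1/(12m^2)$, using $\Phi\ge0$ for the upper bound and Cauchy--Schwarz for the lower bound. With $w(d)=(1-d)^{n-2}$ and $n\ge3$, you have $\int_0^1 wA=-w'(1)C(1)+\int_0^1 w''C$. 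For $n\ge4$ the boundary term vanishes, $w''\ge0$ and $\int_0^1 w''=n-2$ (not $2(n-2)$). For $n=3$, $w''\equiv0$ and only $C(1)$ remains. For $n=2$, $\int_0^1A=\Phi(1)^2/2=0$. Altogether this gives $\rankECE(f)\le c^2(n-1)(n-2)/(12m^2)$ for all $n\ge2$.

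The paper argues locally instead. It conditions on $(X_{(i-1)},X_{(i+2)})$ and uses Lemma~1 of \citet{gupta2021distribution}. Each summand then equals the squared average of $\phi_m$ over $[X_{(i-1)},X_{(i+2)}]$, which is at most $m^{-2}(X_{(i+2)}-X_{(i-1)})^{-2}$. Since this gap is $\mathrm{Beta}(3,n-2)$ with $\E[(X_{(i+2)}-X_{(i-1)})^{-2}]=n(n-1)/2$, summing gives $c^2(n-1)^2/(2m^2)$.

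The paper's route is three lines and reuses the conditioning device from Proposition~\ref{prop:lower_bound}. Yours needs the exact spacing density and exploits $\Phi\ge0$ and $\Phi(1)=0$, but in return yields a better constant.
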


We defer the proof of the above result to Appendix \ref{appendix:proofof_lemma_example_rank_bd}. Indeed, by Lemma \ref{lemma:example_rank_bd}, we see that $\rankECE$ is able to detect miscalibration when $m\ll n$. In particular, Lemma \ref{lemma:example_bin_bd} shows that for $\lbinECE$, for which the optimal choice of the number of bins is almost always $o(n)$, one must choose the number of bins carefully and in a way that adapts to the oscillations of the (unknown) underlying residual function $r$. By contrast, $\rankECE$ adapts to these oscillations and detects miscalibration whenever they are reasonably small compared to the sample size. Of course, Lemma \ref{lemma:example_rank_bd} also shows that $\rankECE$ fails to detect miscalibration when $m\gg n$, but this is not surprising, since $O(1/n)$ is the finest resolution that can be resolved from $n$ observations.

\begin{figure}[ht]
    \centering

    \begin{subfigure}{0.48\textwidth}
        \centering
        \includegraphics[width=\linewidth]{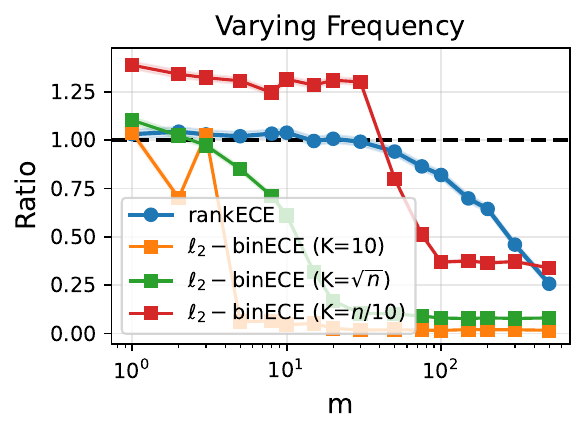}
        \label{fig:first}
    \end{subfigure}
    \hfill
    \begin{subfigure}{0.48\textwidth}
        \centering
        \includegraphics[width=\linewidth]{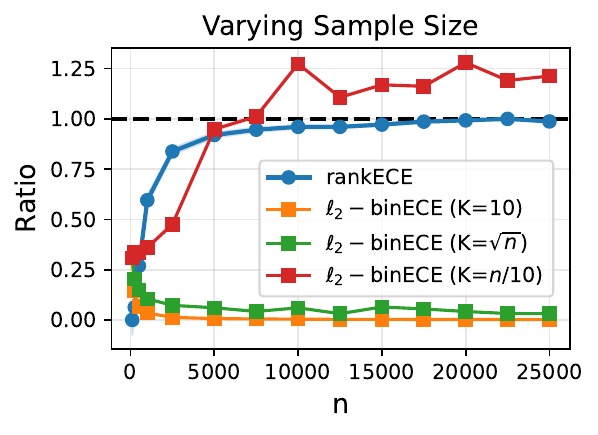}
        \label{fig:second}
    \end{subfigure}

    \caption{\small Empirical comparison of $\rankECE(f)$ and $\lbinECE(f)$ in the setting of Appendix~\ref{appendix:example} (see also Example~\ref{example:compare}) with $c=0.2$. The vertical axes show $\hatrankECE(f)/\lECE(f)$ and $\binhatECE(f)/\lECE(f)$ for $K\in\{10,\sqrt{n},n/10\}$. (Left) Comparison as the frequency $m$ increases, with $n=2000$ fixed. (Right) Comparison as the sample size $n$ increases, with $m=100$ fixed.}
    \label{fig:example_compare}
\end{figure}
\subsection{Empirical Comparison of $\lbinECE$ and $\rankECE$ in Example \ref{example:compare}}
We validate the above comparisons empirically in Figure \ref{fig:example_compare}. Throughout, recall that $f$ is the prediction function from \eqref{eq:example_f}. We fix $c=0.2$ and plot the ratios $\rankECE(f)/\lECE(f)$ and $\lbinECE(f)/\lECE(f)$, where $\rankECE(f)$ and $\lbinECE(f)$ are estimated using their empirical counterparts, $\hatrankECE(f)$ and $\binhatECE(f)$, respectively. Reported values are averaged over $50$ repetitions. For the binning estimator, we consider three choices for the number of bins: $K=10$, $K=\sqrt{n}$, and $K=n/10$.

In Figure \ref{fig:example_compare} (a), we fix $n=2000$ and increase the oscillation frequency $m$. In contrast, Figure \ref{fig:example_compare} (b) fixes the oscillation frequency at $m=100$ and increases the sample size $n$. Figure \ref{fig:example_compare} (a) shows that $\rankECE(f)$ provides a substantially better approximation to $\lECE(f)$ for moderate values of $m$. As $m$ increases, however, the quality of the approximation gradually deteriorates, consistent with the prediction of Lemma \ref{lemma:example_rank_bd}. By contrast, Figure \ref{fig:example_compare} (b) shows that, for a fixed oscillation frequency, $\rankECE(f)$ rapidly becomes a close approximation to $\lECE(f)$ as $n$ increases, whereas the approximations produced by $\lbinECE(f)$ remain noticeably worse even for large sample sizes. The performance of $\lbinECE(f)$ with $K=10$ and $K = \sqrt{n}$ remains consistently poor, while the estimator with $K=n/10$ exhibits a persistent bias, as predicted by \cite{futami2024information} and discussed in Section \ref{sec:compare}.

\subsection{Proof of Lemma \ref{lemma:example_bin_bd}}\label{sec:proofof_lemma_example_bin_bd}
By definition from \eqref{eq:L2_binned_ECE} and recalling that here the prediction function is the map $Z = f(X) = (1-c)X + c(1-X)$,
\begin{align}\label{eq:example_binece_decomp}
\lbinECE(f) = \sum_{j=1}^{K}p_j\mu_j^2
\end{align}
where for all $j\in [K]$, $p_j = \P(Z\in \cB_j)$ and $\mu_j = \E[r(Z)\mid Z\in \cB_j]$ with $r(Z) = \E[Y-Z\mid Z] = c\phi_m\left(\frac{Z-c}{1-2c}\right)$ denoting the residual function. Moreover $Z\sim \textnormal{Unif}[c,1-c]$ and,
\begin{align*}
\mu_j = \frac{\E\left[r(Z)\one\{Z\in \cB_j\}\right]}{\P(Z\in \cB_j)} = \frac{c}{(1-2c)p_j}\int_{\cB_j}\phi_m\left(\frac{z-c}{1-2c}\right)\rmd z = \frac{c}{p_j}\int_{\bar{\cB}_j}\phi_m(x)\rmd x
\end{align*}
where $\bar{\cB}_j = \left[\frac{a_{j-1} - c}{1-2c}, \frac{a_j-c}{1-2c}\right)$. We can decompose $|\bar{\cB}_j| = q\frac{1}{m} + s$ for some $q\in\N$ and $s\in [0, 1/m)$. Since the function $\phi_m$ is a square wave with period $1/m$, it follows from the above decomposition that there exists an interval $R_j\subseteq \cB_j$ such that $|R_j| = s$ and
\begin{align*}
|\mu_j| = \left|\frac{c}{p_j}\int_{R_j}\phi_m(x)\rmd x\right|\leq \min\left\{\frac{c}{p_j}|R_j|, c\right\}\leq \min\left\{\frac{c}{mp_j}, c\right\}
\end{align*}
where the first inequality follows by noticing that $|r|\leq c$. Substituting the upper bound in \eqref{eq:example_binece_decomp},
\begin{align*}
    \lbinECE(f)\leq \sum_{j=1}^{K}\min\left\{\frac{c^2}{m^2p_j}, c^2p_j\right\} 
    & = \frac{c^2}{m}\sum_{j=1}^{K}\min\left\{\frac{1}{mp_j}, mp_j\right\}\\
    &\leq \frac{c^2K}{m}
\end{align*}
which completes the proof.

\subsection{Proof of Lemma \ref{lemma:example_rank_bd}}\label{appendix:proofof_lemma_example_rank_bd}
We divide the proof in two parts for the upper and lower bounds. We begin the proof with the lower bound and present the proof of the upper bound later. 

\paragraph{\textbf{Proof for the lower bound.}} The lower bound follows by recalling that $\lECE(f) = c^2$ and applying Proposition \ref{prop:rank_l2_finite_bd}. Indeed, to complete the proof it is enough to show that,
\begin{align*}
    \|r\|_{\textnormal{TV}}\leq 4mc,
\end{align*}
where $r(z) = \E[Y-Z\mid Z = z] = c\phi_m\left(\frac{z-c}{1-2c}\right)$ is the residual function. To that end, notice that the domain of $r$ can be restricted to $[c, 1-c]$. The total variation of $r$ (recall \eqref{eq:def_TV}) is defined as,
\begin{align*}
    \|r\|_{\textnormal{TV}} = \sup\left\{\sum_{i = 0}^{j-1}\left|r(t_{i+1}) - r(t_i)\right|: c = t_0<t_1<\cdots<t_j = 1-c, j\geq 1\right\}.
\end{align*}
Since $r$ is a square wave, the above supremum is attained by choosing the partition points to be the consecutive sign-change points. By definition, $r(z)=c\phi_m\left(\frac{z-c}{1-2c}\right)$, and hence sign changes can occur only at the zeros of $\phi_m$, which are given by $t/2m$ for all $t\in\mathbb{Z}$. Since $\frac{z-c}{1-2c}\in[0,1]$ for $z\in[c,1-c]$, the number of zeros of $\phi_m$ in $[0,1]$, and therefore the number of sign changes of $r$, is bounded above by $2m$. It follows immediately that,
\begin{align*}
    \|r\|_{\textnormal{TV}}\leq 2m\cdot 2c = 4mc.
\end{align*}

\paragraph{\textbf{Proof of the upper bound.}}
First, we have $\rankECE(f) \leq \lECE(f) = c^2$, which verifies the first term of the minimum. Now we prove that the second term also provides an upper bound.

We begin by recalling the expression of $\rankECE(f)$ from \eqref{eq:rank_ECE}. Take observations $\{(Y_i, Z_i): 1\leq i\leq n\}$ where $Z_i = f(X_i)$ with $f$ defined in \eqref{eq:example_f}. Then by tower property of conditional expectations (also see \eqref{eq:rank_r_equal})
\begin{align*}
    \rankECE(f) = \frac{1}{n}\sum_{i=1}^{n-1}\E\left[r\left(Z_{\pi(i)}\right)r\left(Z_{\pi(i+1)}\right)\right] = \frac{c^2}{n}\sum_{i=1}^{n-1}\E\left[\phi_m\left(\frac{Z_{\pi(i)}-c}{1-2c}\right)\phi_m\left(\frac{Z_{\pi(i+1)}-c}{1-2c}\right)\right]
\end{align*}
where $\pi$ is the permutation from \eqref{eq:rank_ece_sample}. By definition of $f$ from \eqref{eq:example_f} note that,
\begin{align*}
    X_j = \frac{Z_j-c}{1-2c}\text{ for all }j \in [n],
\end{align*}
where $X_1,\ldots, X_n$ are independent samples from $\textnormal{Unif}[0,1]$. Hence,
\begin{align*}
    \rankECE(f) = \frac{c^2}{n}\sum_{i=1}^{n-1}\E\left[\phi_m\left(X_{(i)}\right)\phi_m\left(X_{(i+1)}\right)\right]
\end{align*}
where $X_{(1)}\leq X_{(2)}\leq \cdots\leq X_{(n)}$ are the order statistics of $X_1, X_2,\ldots, X_n$. By the tower property of conditional expectations,
\begin{align}\label{eq:example_ub_tower}
    \E\left[\phi_m\left(X_{(i)}\right)\phi_m\left(X_{(i+1)}\right)\right] = \E\left[\E\left[\phi_m\left(X_{(i)}\right)\phi_m\left(X_{(i+1)}\right) \mid X_{(i-1)}, X_{(i+2)}\right]\right]
\end{align}
with the boundary convention $X_{(0)} = 0$ and $X_{(n+1)} = 1$. Fix $1\leq i\leq n-1$. Applying Lemma 1 from \cite{gupta2021distribution} we know that,
\begin{align*}
    X_{(i)}, X_{(i+1)}\mid X_{(i-1)}, X_{(i+2)} \overset{d}{=} W_{(1)}, W_{(2)}
\end{align*}
where $W_{(1)}\leq W_{(2)}$ are the order statistics of $W_1, W_2\sim \textnormal{Unif}[X_{(i-1)}, X_{(i+2)}]$. Hence
\begin{align}\label{eq:example_cond_exp_bd_1}
    \E\left[\phi_m\left(X_{(i)}\right)\phi_m\left(X_{(i+1)}\right) \mid X_{(i-1)}, X_{(i+2)}\right] 
    & = \E\left[\phi_m(W_{(1)})\phi_m(W_{(2)})\right] = \E\left[\phi_m(W_1)\right]^2\nonumber\\
    & = \left(\frac{1}{X_{(i+2)} - X_{(i-1)}}\int_{X_{(i-1)}}^{X_{(i+2)}}\phi_m(t)\rmd t\right)^2.
\end{align}
To further upper bound the right hand side, recall that by construction $\phi_m$ is a square wave with period $1/m$. Hence there exists an interval $I\subseteq [X_{(i-1)}, X_{(i+2)}]$ such that $|I|<1/m$ and,
\begin{align*}
    \left|\int_{X_{(i-1)}}^{X_{(i+2)}}\phi_m(t)\rmd t\right| = \left|\int_{I}\phi_m(t)\rmd t\right|\leq \frac{1}{m}.
\end{align*}
Substituting this upper bound in \eqref{eq:example_cond_exp_bd_1} along with \eqref{eq:example_ub_tower},
\begin{align}\label{eq:example_ub_tower_2}
    \rankECE(f)\leq \frac{c^2}{nm^2}\sum_{i=1}^{n-1}\E\left[\frac{1}{(X_{(i+2)} - X_{(i-1)})^2}\right].
\end{align}
To complete the proof, recall that $X_1,\ldots, X_n\sim\textnormal{Unif}[0,1]$ implying $X_{(i+2)} - X_{(i-1)}\sim \textnormal{Beta}(3, n-2)$. Hence,
\begin{align*}
    \E\left[\frac{1}{(X_{(i+2)} - X_{(i-1)})^2}\right] = \frac{n(n-1)}{2}.
\end{align*}
The proof is now completed by substituting the above identity in \eqref{eq:example_ub_tower_2}.

\section{Proof of Proposition \ref{prop:test_rank_finite} and results from Section \ref{sec:rank_prop}}

\subsection{Proof of Proposition \ref{prop:test_rank_finite}}\label{sec:proofof_prop_test_rank_finite}
The proof proceeds by a careful application of McDiarmid's bounded difference inequality. To that end, for pairs $\bw_i = (y_i, z_i)\in \{0,1\}\times [0,1]$ we define
\begin{align*}
    g\left(\bw_1,\ldots, \bw_n\right) = \frac{1}{n}\sum_{i=1}^{n-1}\left(y_{\pi(i)} - z_{\pi(i)}\right)\left(y_{\pi(i+1)} - z_{\pi(i+1)}\right),
\end{align*}
where $\pi:[n]\ra[n]$ is a permutation such that $z_{\pi(1)}\leq z_{\pi(2)}\leq\cdots\leq z_{\pi(n)}$. We can observe that
\begin{align}\label{eq:Rnf=g}
    \hatrankECE(f) = g\left(\bW_1,\ldots,\bW_i,\ldots, \bW_n\right)
\end{align}
where $\bW_i = (Y_i, Z_i)$ for all $1\leq i\leq n$. To apply McDiarmid's bounded difference inequality, consider a new pair $\bw_i^\prime\in \{0,1\}\times [0,1]$. Then, by Lemma~\ref{lem:g_diff_mcdiarmid} below,
\begin{align*}
    \left|g\left(\bw_1,\ldots,\bw_i,\ldots, \bw_n\right) - g\left(\bw_1,\ldots,\bw_i^\prime,\ldots, \bw_n\right)\right|\leq \frac{2.25}{n}.
\end{align*}
Hence, an application of McDiarmid's bounded difference inequality now shows that for all $t>0$,
\begin{align*}
    \P\left(\left|g\left(\bW_1,\ldots,\bW_i,\ldots, \bW_n\right) - \E\left[g\left(\bW_1,\ldots,\bW_i,\ldots, \bW_n\right)\right]\right|>t\right)\leq 2\exp\left(-\frac{32}{81}nt^2\right).
\end{align*}
The proof is now completed by recalling the identity from \eqref{eq:Rnf=g} and by definition $\rankECE(f) = \E[\hatrankECE(f)]$.

\begin{lemma}\label{lem:g_diff_mcdiarmid}
    Let $g$ be defined as above. Then, for any $i\in[n]$ and any $\bw_1,\dots,\bw_n,\bw'_i\in\{0,1\}\times[0,1]$,
    \[\left|g\left(\bw_1,\ldots,\bw_i,\ldots, \bw_n\right) - g\left(\bw_1,\ldots,\bw_i^\prime,\ldots, \bw_n\right)\right|\leq \frac{2.25}{n}.\]
\end{lemma}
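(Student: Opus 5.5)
The plan is to write the change in $g$ caused by replacing $\bw_i$ as the difference of two ``local'' terms, one from deleting the old point and one from inserting the new point into a common sorted list, and then to bound each local term using the ordering of the $z$'s.

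\textbf{Reduction to local terms.} Write $a_j = y_j - z_j$. Note that $a_j\in[0,1]$ when $y_j=1$ and $a_j\in[-1,0]$ when $y_j=0$; in both cases $|a_j| = |y_j-z_j|$. Remove $\bw_i$ and let $g_0$ denote $\frac1n$ times the consecutive-product sum over the remaining $n-1$ points in sorted order (ties broken consistently). Inserting a point with residual $a$ between sorted neighbours with residuals $b$ and $c$ changes the sum by
\[
\varphi(a;b,c) = a(b+c) - bc .
\]
At an endpoint, with only one neighbour $b$, the change is $\varphi = ab$. Hence
\[
g(\dots,\bw_i,\dots) - g(\dots,\bw_i',\dots) = \tfrac1n\big[\varphi(a_i;b,c) - \varphi(a_i';d,e)\big],
\]
where $(b,c)$ and $(d,e)$ are the neighbours of $\bw_i$ and of $\bw_i'$ in the same list of $n-1$ points. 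The lemma therefore follows if every admissible local term lies in an interval of length at most $9/4$. I will aim to prove
\[
-\tfrac54 \le \varphi \le 1 .
\]

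\textbf{Upper bound $\varphi\le 1$.} This needs no ordering. Since $\varphi$ is linear in $a$, its maximum over $|a|\le1$ is attained at $a=\pm1$. There
\[
\varphi = 1-(1\mp b)(1\mp c) \le 1,
\]
because $|b|,|c|\le 1$ makes both factors nonnegative. The endpoint case gives $|ab|\le 1$ directly.

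\textbf{Lower bound $\varphi\ge -5/4$: the main obstacle.} Without constraints, $\varphi$ can reach $-3$ (take $a=1$, $b=c=-1$), so the ordering $z_b\le z_a\le z_c$ must be used. Such configurations are impossible: $b=-1$ forces $z_b=1$, which forces $z_a=1$ and so $a\in\{0,-1\}$. I would do a case analysis on the labels $(y_b,y_a,y_c)\in\{0,1\}^3$. In each case $\varphi$ becomes an explicit polynomial in $z_b\le z_a\le z_c$ on $[0,1]$, and I minimize it over that region.

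\begin{itemize}
\item Since $\varphi$ is linear in $z_a$, the minimum sits at $z_a\in\{z_b,z_c\}$. This collapses each case to two variables.
\item The remaining expressions are of the form $-\big(t+s-ts\big)$ or $-(t+t(1-t))$-type quadratics. I expect their minima to come out as $-1-\tfrac14$, attained at $t=1/2$; this is where the constant $9/4 = 1 + 5/4$ should originate.
\item The endpoint case $ab$ with ordered $z$'s gives $ab\ge -\tfrac14$ or $\ge -1$ in the worst case, which is within range.
\end{itemize}

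If some label pattern yields a value below $-5/4$, I would instead bound the difference $\varphi(a_i;b,c)-\varphi(a_i';d,e)$ jointly. The reason is that the two local terms share the same underlying sorted list, and extreme configurations for the two terms may be mutually incompatible.

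With $c_i = 2.25/n$ for every $i$, McDiarmid gives exponent $2t^2/\sum_i c_i^2 = 32nt^2/81$. This is consistent with the proof of Proposition~\ref{prop:test_rank_finite}.
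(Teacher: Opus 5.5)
Your proposal is correct and takes essentially the paper's approach. The paper also writes the change in $g$ as the difference of two local insertion terms of the form $ab+bc-ac$ (your $\varphi$), relative to the same list of $n-1$ points, and bounds each in $[-1.25,1]$; the only difference is that it encodes the ordering through the residual constraint $c-1\le b\le a+1$ rather than your label-by-label case analysis. That case analysis does go through as you expect: the value $-5/4$ occurs only for the label patterns $(y_b,y_a,y_c)=(0,1,0)$ and $(1,0,1)$ (e.g.\ $z_b=z_a=1/2$, $z_c=1$ for the first), every other pattern gives at least $-1/4$, and so the joint-bounding fallback is unnecessary.
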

\begin{proof}
    Without loss of generality we can assume $z_1 \leq  \dots \leq z_n$. Find index $j$ such that $z_j\leq z'_i \leq z_{j+1}$. (Recall that we are assuming no ties among $Z$ values, almost surely, throughout for simplicity.) Then
    \begin{multline*}
    g\left(\bw_1,\ldots,\bw_i,\ldots, \bw_n\right) - g\left(\bw_1,\ldots,\bw_i^\prime,\ldots, \bw_n\right)\\
        = \left[(y_{i-1}-z_{i-1})(y_i-z_i) + (y_i-z_i)(y_{i+1}-z_{i+1}) - (y_{i-1}-z_{i-1})(y_{i+1}-z_{i+1})\right]\\
        - \left[(y_j-z_j)(y'_i-z'_i) + (y'_i - z'_i)(y_{j+1}-z_{j+1}) - (y_j-z_j)(y_{j+1}-z_{j+1})\right].
    \end{multline*}
    Finally, for each of the two terms in square brackets, the value must lie in $[-1.25,1]$, since it holds that
    \[-1.25\leq ab + bc - ac\leq 1\textnormal{ for all $a,b,c\in[-1,1]$ with $c-1\leq b\leq a+1$}\]
    (where for instance, if we take $a=y_{i-1}-z_{i-1}$, $b=y_i-z_i$, and $c=y_{i+1}-z_{i+1}$, we must have $c-1\leq b\leq a+1$ since $z_{i-1}\leq z_i\leq z_{i+1}$).
\end{proof}

\subsection{Proof of Proposition \ref{prop:lower_bound}}\label{sec:proofof_prop_lower_bound}
Recall the residual function $r(z) = \E[Y-Z\mid Z = z]$ defined in Proposition \ref{prop:rank_l2_finite_bd}. With this notation in place, we now proceed to prove the upper bound.

\paragraph{\textbf{Proof of upper bound.}}
Note that, conditioning on the order statistics $Z_{\pi(1)},Z_{\pi(2)},\ldots,Z_{\pi(n)}$ and using the conditional independence of concomitant pairs, one immediately obtains
\begin{align}
    \rankECE(f) 
    & = \E\left[\frac{1}{n}\sum_{i=1}^{n-1}r\left(Z_{\pi(i)}\right)r\left(Z_{\pi(i+1)}\right)\right]\label{eq:rank_r_equal}\\
    &\leq \E\left[\frac{1}{n}\sum_{i=1}^{n-1}\left|r\left(Z_{\pi(i)}\right)\right|\left|r\left(Z_{\pi(i+1)}\right)\right| + \frac{1}{n}\left|r(Z_{\pi(n)})\right|\left|r(Z_{\pi(1)})\right|\right]\nonumber\\
    &\leq \E\left[\frac{1}{n}\sum_{i=1}^{n}r^2(Z_i)\right]\nonumber,
\end{align}
where the final inequality follows from the rearrangement inequality \citep{hardy1952inequalities}. The proof of the upper bound is now completed by observing that $\E[r^2(Z)] = \lECE(f)$. 

\paragraph{\textbf{Proof of lower bound.}}
To being with we define the hypothetical order statistics $Z_{\pi(0)} = 0$ and $Z_{\pi(n+1)} = 1$ to make notations easier. Then using the equality from \eqref{eq:rank_r_equal} and tower property of conditional expectations we get,
\begin{align}\label{eq:rank_lb_1}
    \rankECE(f) = \frac{1}{n}\sum_{i=1}^{n-1}\E\left[\E\left[r\left(Z_{\pi(i)}\right)r\left(Z_{\pi(i+1)}\right)\mid Z_{\pi(i-1)}, Z_{\pi(i+2)}\right]\right].
\end{align}
Fix $i\in [n-1]$. Then note that conditional on $Z_{\pi(i-1)}$ and $Z_{\pi(i+2)}$ the pair $\left(Z_{\pi(i)}, Z_{\pi(i+1)}\right)$ is distributed as the order statistics of two samples $\bar Z_1,\bar Z_2$ generated independently from the conditional distribution $P_{Z\mid Z\in \left(Z_{\pi(i-1)}, Z_{\pi(i+2)}\right)}$ (see Lemma 1 from \cite{gupta2021distribution}). Then,
\begin{align}\label{eq:rank_lb_2}
    \E\left[r\left(Z_{\pi(i)}\right)r\left(Z_{\pi(i+1)}\right)\mid Z_{\pi(i-1)}, Z_{\pi(i+2)}\right] = \E\left[r\left(\bar Z_{(1)}\right)r\left(\bar Z_{(2)}\right)\right]
\end{align}
where $\bar Z_{(1)}\leq \bar Z_{(2)}$ are the order statistics of $\bar Z_1, \bar Z_2$. Next, recall that $\bar Z_1,\bar Z_2$ are independent samples from $P_{Z\mid Z\in \left(Z_{\pi(i-1)}, Z_{\pi(i+2)}\right)}$. Hence,
\begin{align}\label{eq:rank_lb_3}
    \E\left[r\left(\bar Z_{(1)}\right)r\left(\bar Z_{(2)}\right)\right] = \E\left[r(\bar Z_1)r(\bar Z_2)\right] = \E[r(\bar Z_1)]^2\geq 0.
\end{align}
The proof is now completed by combining the conclusions from \eqref{eq:rank_lb_1}, \eqref{eq:rank_lb_2}, \eqref{eq:rank_lb_3} and recalling that $i\in [n-1]$ was chosen arbitrarily.

\subsection{Proof of Proposition \ref{prop:consistency}}\label{sec:proofof_prop_consistency}
By the concentration result from Proposition \ref{prop:test_rank_finite}, it is enough to show that
\begin{align*}
    \rankECE(f)\ra \lECE(f).
\end{align*}
Recalling the residual function $r(z) = \E[Y-Z\mid Z = z]$ and the identity in \eqref{eq:rank_r_equal}, we know that
\begin{align*}
    \rankECE(f) = \E\left[\frac{1}{n}\sum_{i=1}^{n-1}r\left(Z_{\pi(i)}\right)r\left(Z_{\pi(i+1)}\right)\right].
\end{align*}
The proof now proceeds by replacing the residual function $r$ with a continuous approximation. To that end, fix $\vep>0$. Applying Lusin's theorem together with the Tietze extension theorem \citep{dugundji1951extension,mcshane1934extension}, there exists a continuous function $\bar r_\vep$ and a compact set $A_\vep\subseteq[0,1]$ such that $r = \bar r_\vep$ on $A_\vep$ and $\P(Z\in [0,1]\setminus A_\vep)<\vep$. Moreover, since the interval $[0,1]$ is compact, $\bar r_\vep$ is uniformly continuous. By a simple decomposition,
\begin{align}\label{eq:rankECE_decomp_consistency}
    \rankECE(f) = \E\left[r(Z)^2\right] + \underbrace{\frac{1}{n}\E\left[\sum_{i=1}^{n-1}r\left(Z_{\pi(i)}\right)\left(r\left(Z_{\pi(i+1)}\right) - r\left(Z_{\pi(i)}\right)\right)\right]}_{S_n} + O\left(\frac{1}{n}\right).
\end{align}
Note that $\E[r(Z)^2] = \lECE(f)$, and hence to complete the proof it suffices to control the second term $S_n$. To that end, using the fact that $|r|\leq 1$, we obtain the upper bound
\begin{align}\label{eq:S_n_bd_1}
    |S_n|
    & \leq \frac{1}{n}\E\left[\sum_{i=1}^{n-1}\left|r\left(Z_{\pi(i+1)}\right) - r\left(Z_{\pi(i)}\right)\right|\right].
\end{align}
Now to further decompose the above upper bound consider the collection of indices $\cA_\vep\subseteq [n-1]$ such that,
\begin{align*}
    \cA_\vep = \left\{i\in [n-1]: Z_{\pi(i)},Z_{\pi(i+1)}\in A_\vep\right\}.
\end{align*}
With the above notation we can write \eqref{eq:S_n_bd_1} as,
\begin{align*}
    |S_n|
    &\leq \frac{1}{n}\E\left[\sum_{i\in \cA_\vep}\left|r\left(Z_{\pi(i+1)}\right) - r\left(Z_{\pi(i)}\right)\right| + \sum_{i\not\in\cA_\vep}\left|r\left(Z_{\pi(i+1)}\right) - r\left(Z_{\pi(i)}\right)\right|\right]\\
    &\leq \frac{1}{n}\E\left[\sum_{i\in \cA_\vep}\left|\bar r_\vep\left(Z_{\pi(i+1)}\right) - \bar r_\vep\left(Z_{\pi(i)}\right)\right| + \left|\cA_\vep^c\right|\right]
\end{align*}
where $|\cA^c_\vep|$ denotes the number of elements in $\cA^c_\vep$ and the last inequality follows by recalling that $|r|\leq 1$, the definition of $\cA_\vep$ and noting that $r = \bar r_\vep$ on $A_\vep$. Next consider the set $\cN_\vep = \{i\in [n]: Z_i\in A_\vep^c\}$. Then recalling that $\pi$ is a permutation of $[n]$, it immediately follows that,
\begin{align*}
    |\cA_\vep^c|\leq 2|\cN_\vep|.
\end{align*}
Hence,
\begin{align}\label{eq:S_n_bd_Sn_bar}
    |S_n|
    &\leq \underbrace{\frac{1}{n}\E\left[\sum_{i=1}^{n-1}\left|\bar r_\vep\left(Z_{\pi(i+1)}\right) - \bar r_\vep\left(Z_{\pi(i)}\right)\right|\right]}_{\bar S_n} + \vep
\end{align}
To further bound the right-hand side, we now leverage the uniform continuity of $\bar r_\vep$. By uniform continuity, given $\vep>0$, there exists $\delta>0$ such that
\begin{align*}
    \left|\bar r_\vep(x) - \bar r_\vep(y)\right|< \vep\textnormal{ whenever }|x-y|< \delta.
\end{align*}
Define the set of indices
\[
\Delta = \left\{i\in [n-1]: \left|Z_{\pi(i+1)} - Z_{\pi(i)}\right|<\delta\right\}.
\]
Then for any $i\in \Delta$, by the uniform continuity of $\bar r_\vep$ we immediately have
\begin{align*}
    \left|\bar r_\vep\left(Z_{\pi(i+1)}\right) - \bar r_\vep\left(Z_{\pi(i)}\right)\right|<\vep.
\end{align*}
Hence we may upper bound $\bar S_n$ as
\begin{align}\label{eq:upper_bd_barSn_1}
    \bar S_n \leq \vep + \E\left[\frac{1}{n}\sum_{i\not\in \Delta}\left|\bar r_\vep\left(Z_{\pi(i+1)}\right) - \bar r_\vep\left(Z_{\pi(i)}\right)\right|\right].
\end{align}
To complete the proof, we now bound the cardinality of $\{i\in [n-1]:i\not\in \Delta\}$. Observe that
\begin{align*}
    \sum_{i=1}^{n-1}\left|Z_{\pi(i+1)} - Z_{\pi(i)}\right| = \sum_{i=1}^{n-1}Z_{\pi(i+1)} - Z_{\pi(i)}\leq 1.
\end{align*}
Hence it follows that
\begin{align*}
    \left|\{i\in [n-1]:i\not\in \Delta\}\right| = |\{i\in [n-1]: \left|Z_{\pi(i+1)} - Z_{\pi(i)}\right|\geq \delta\}|\leq 1/\delta.
\end{align*}
Recall that the existence of $\bar r_\vep$ was justified using the Tietze extension theorem, and hence $\|\bar r_\vep\|_\infty\leq 1$. Therefore, recalling \eqref{eq:upper_bd_barSn_1}, we obtain the upper bound
\begin{align*}
    \bar S_n \leq \vep + \frac{1}{n\delta}. 
\end{align*}
The proof is now completed by recalling the identity $\E[r(Z)^2] = \lECE(f)$, together with the identities and bounds from \eqref{eq:rankECE_decomp_consistency}, \eqref{eq:S_n_bd_Sn_bar}, and \eqref{eq:upper_bd_barSn_1}, and observing that $\vep>0$ was chosen arbitrarily.

\subsection{Proof of Proposition \ref{prop:rank_l2_finite_bd}}\label{sec:proofof_prop_rank_l2_finite_bd}
The bound trivially holds true if $\|r\|_{\textnormal{TV}} = \infty$. Hence for the rest of the proof we assume $\|r\|_{\textnormal{TV}}<\infty$. We begin by once again recalling the identity $\E[r(Z)^2] = \lECE(f)$. Equivalently,
\begin{align*}
    \lECE(f) = \frac{1}{n}\sum_{i=1}^{n}\E\left[r(Z_i)^2\right] = \E\left[\frac{1}{n}\sum_{i=1}^{n}r\left(Z_{\pi(i)}\right)^2\right].
\end{align*}
Then the gap  $\Delta_{\textnormal{rank}} = \left|\rankECE(f) - \lECE(f)\right|$ can be decomposed as,
\begin{align*}
    \Delta_{\textnormal{rank}} 
    & = \left|\E\left[\frac{1}{n}\sum_{i=1}^{n-1}r\left(Z_{\pi(i)}\right)\left(r\left(Z_{\pi(i+1)}\right) - r\left(Z_{\pi(i)}\right)\right) - \frac{r\left(Z_{\pi(n)}\right)^2}{n}\right]\right|\\
    & \leq \left|\E\left[\frac{1}{n}\sum_{i=1}^{n-1}r\left(Z_{\pi(i)}\right)\left(r\left(Z_{\pi(i+1)}\right) - r\left(Z_{\pi(i)}\right)\right)\right]\right| + \frac{1}{n}
\end{align*}
where the last inequality follows by recalling that $|r|\leq 1$. Once again using the same bound on $r$ we get,
\begin{align*}
    \Delta_{\textnormal{rank}} \leq \E\left[\frac{\sum_{i=0}^{n}\left|r\left(Z_{\pi(i+1)}\right) - r\left(Z_{\pi(i)}\right)\right|}{n}\right] -\E\left[\frac{\left|r\left(Z_{\pi(1)}\right) - r\left(0\right)\right| + \left|r\left(1\right) - r\left(Z_{\pi(n)}\right)\right|}{n}\right] + \frac{1}{n}.
\end{align*}
where we use the convention $Z_{\pi(0)} = 0$ and $Z_{\pi(n+1)} = 1$. Using the definition of total variation $\|r\|_{\textnormal{TV}}$ we can now immediately show,
\begin{align*}
    \Delta_{\textnormal{rank}}\leq \frac{\|r\|_{\textnormal{TV}}}{n} + \frac{1}{n},
\end{align*}
which concludes the proof.

\subsection{Proof of Theorem \ref{thm:null_calibration}}\label{sec:proofof_thm_null_calibration}
The proof proceeds by first establishing the claim under the assumption that the underlying distribution of $Z$ is $\textnormal{Unif}[0,1]$, and then transferring the result to the general case via the inverse probability transform. To this end, let $F$ denote the CDF of $Z$. By the non-atomicity of $Z$, there exist independent random variables $U_1, \ldots, U_n \sim \textnormal{Unif}[0,1]$ such that
\[
    Z_i = F^{-1}(U_i), \qquad 1 \leq i \leq n.
\]
where
\begin{align*}
    F^{-1}(u) = \inf\{z\in[0,1]: F(z)\geq u\}\text{ for all }u\in [0,1]
\end{align*}
is the generalized inverse of $F$. Recalling the identity in \eqref{eq:rank_r_equal} and using the monotonicity of $F^{-1}$, we obtain
\begin{align}\label{eq:rank_uniform}
    \rankECE(f)
    =
    \E\left[\frac{1}{n}\sum_{i=1}^{n-1}\bar r(U_{(i)})\bar r(U_{(i+1)})\right],
\end{align}
where $\bar r(\cdot) = r(F^{-1}(\cdot))$ with $r$ being the residual function from Proposition \ref{prop:rank_l2_finite_bd}, and
\[
    U_{(1)} \leq U_{(2)} \leq \cdots \leq U_{(n)}
\]
denote the order statistics of $U_1, \ldots, U_n$. Note that to prove Theorem \ref{thm:null_calibration}, it suffices to show that $\seqsplit{\rankECE(f) = 0}$ if and only if $r = 0$ almost surely. By the definition of $\bar r$, we have that $r = 0$ almost surely with respect to $F$ if and only if $\bar r = 0$ almost everywhere with respect to Lebesgue measure on $[0,1]$. 
Therefore, to complete the proof, it suffices to verify that $\rankECE(f) = 0$ if and only if $\bar r = 0$ almost everywhere with respect to Lebesgue measure on $[0,1]$. This is shown in the following lemma:

\begin{lemma}\label{lemma:rank_uniform_0}
    Let $n \geq 4$, and let $U_1, \ldots, U_n$ be independent $\textnormal{Unif}[0,1]$ random variables with order statistics
    \[
        U_{(1)} \leq U_{(2)} \leq \cdots \leq U_{(n)}.
    \]
    Let $g:[0,1]\ra\R$ be a bounded function. Then,
    \begin{align*}
        \E\left[\frac{1}{n}\sum_{i=1}^{n-1}
        g\left(U_{(i)}\right)g\left(U_{(i+1)}\right)\right] = 0
    \end{align*}
    if and only if $g = 0$ almost everywhere with respect to Lebesgue measure on $[0,1]$.
\end{lemma}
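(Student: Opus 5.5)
The ``if'' direction is immediate: if $g=0$ almost everywhere, then almost surely no $U_{(i)}$ lands in the null set where $g\neq 0$, so every summand vanishes. For the converse, I will write the expectation in closed form using the joint density of consecutive uniform order statistics. The pair $(U_{(i)},U_{(i+1)})$ has density
\[
\frac{n!}{(i-1)!\,(n-i-1)!}\,s^{i-1}(1-t)^{n-i-1}\quad\text{on } 0\le s\le t\le 1 .
\]
Summing over $i=1,\dots,n-1$ and using the binomial theorem gives
\begin{align*}
\E\left[\sum_{i=1}^{n-1} g(U_{(i)})g(U_{(i+1)})\right]
= n(n-1)\int_{0\le s\le t\le 1} g(s)g(t)\,(1-(t-s))^{n-2}\,\rmd s\,\rmd t .
\end{align*}
Define $k(u)=(1-|u|)_+^{n-2}$ on $\R$. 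By symmetry, the condition becomes
\[
Q(g):=\iint g(s)g(t)\,k(t-s)\,\rmd s\,\rmd t=0,
\]
where $g$ is extended by zero outside $[0,1]$. This extension is harmless because $|t-s|\le 1$ on the square, so $k$ agrees there with $(1-|t-s|)^{n-2}$.

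The main step is to show that $k$ is a strictly positive definite kernel, i.e.\ that its Fourier transform $\hat k$ is nonnegative and vanishes only on a null set. Parseval then gives $Q(g)=\frac{1}{2\pi}\int |\hat g(\omega)|^2 \hat k(\omega)\,\rmd\omega$, which is zero only if $\hat g=0$ almost everywhere, i.e.\ $g=0$ almost everywhere. Here $g\in L^1\cap L^2$ since it is bounded with compact support, so Parseval applies.

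Positive definiteness of $(1-|u|)_+^{m}$ on $\R$ for integer $m\ge 1$ is the classical Askey/Pólya-type fact. I would prove it directly by writing $(1-|u|)_+^{m}$ as a nonnegative mixture of triangle functions $(1-|u|/a)_+$. Pólya's criterion applies because for $m\ge1$ the function is convex and decreasing on $[0,\infty)$. Each triangle has Fourier transform a positive multiple of $\mathrm{sinc}^2(a\omega/2)$, which is nonnegative and vanishes only at isolated points. Integrating against a mixing measure with a density on $(0,1)$ (given by the second derivative) then yields $\hat k(\omega)>0$ for all $\omega$: for each fixed $\omega$, the zeros in $a$ form a discrete set. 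This is where $n\ge 4$ enters: $m=n-2\ge 2$ makes $k$ $C^1$ with an absolutely continuous derivative and $k''\ge0$, so the mixture representation $k(u)=\int_0^1 a\,k''(a)(1-|u|/a)_+\,\rmd a$ holds cleanly.

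The main obstacle is justifying strict positivity of $\hat k$. Computing $\hat k$ explicitly, e.g.\ for $m=2$ one gets $\propto(\omega-\sin\omega)/\omega^3>0$, is a fallback check. I would rely on the mixture argument as the general route.
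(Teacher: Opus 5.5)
Your proof is correct, and it follows a genuinely different route from the paper.

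\textbf{The paper's route.} The paper never computes the expectation in closed form. It conditions each summand on the outer neighbours $U_{(i-1)},U_{(i+2)}$. Given these, $(U_{(i)},U_{(i+1)})$ is distributed as the order statistics of two i.i.d.\ uniforms on $[U_{(i-1)},U_{(i+2)}]$. Hence the expectation equals $\frac1n\sum_i\E[G(U_{(i-1)},U_{(i+2)})^2]$, where $G(x,y)=\frac{1}{y-x}\int_x^y g$. If the expectation vanishes, then $G(U_{(1)},U_{(4)})=0$ a.s.; this is where $n\ge 4$ is used. It follows that $\int_x^y g=0$ for a.e.\ pair, and by continuity for all pairs $0\le x<y\le 1$. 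A $\pi$-system argument then gives $g=0$ a.e.

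\textbf{Your route.} You sum the exact consecutive-pair densities via the binomial theorem, which gives
\[
\E\Big[\frac1n\sum_{i=1}^{n-1}g(U_{(i)})g(U_{(i+1)})\Big]=\frac{n-1}{2}\iint_{[0,1]^2}g(s)g(t)\,(1-|t-s|)^{n-2}\,\rmd s\,\rmd t .
\]
You then prove strict positive definiteness of this Askey-type kernel by a P\'olya mixture of triangles. The individual computations check out: the pair density, the mixture identity $\int_{|u|}^1(a-|u|)k''(a)\,\rmd a=k(u)$, the $\mathrm{sinc}^2$ transforms, and the $m=2$ value $4(\omega-\sin\omega)/\omega^3$.

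\textbf{What each approach buys.}
\begin{itemize}
\item The paper's argument is more elementary. It reuses the same conditioning device that drives the proofs of Proposition~\ref{prop:lower_bound}, Theorem~\ref{thm:rank_bin_comparison} and Lemma~\ref{lemma:example_rank_bd}, and it works directly for non-uniform conditional distributions such as those within bins.
\item Your argument uses heavier Fourier machinery and is tied to exact uniform order-statistic densities. In return, nonnegativity comes for free from $\hat k\ge 0$.
\item Your spectral form $\frac{n-1}{4\pi}\int|\hat g|^2\hat k$ also shows which frequencies rankECE detects. The weight $\frac{n-1}{2}\hat k(\omega)$ is close to $1$ for $|\omega|\ll n$ and decays like $(n-1)(n-2)/\omega^2$ for $|\omega|\gg n$. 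This makes the ``resolution $1/n$'' phenomenon of Lemma~\ref{lemma:example_rank_bd} precise.
\end{itemize}

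\textbf{Two small remarks.}
\begin{itemize}
\item $k$ is not $C^1$ on $\R$, since it has a kink at $u=0$. What your mixture argument uses, and what holds for $m\ge 2$, is that the profile $a\mapsto(1-a)_+^m$ is $C^1$ on $(0,\infty)$.
\item Your argument does not actually need $n\ge 4$. For $m=1$ the kernel is itself a triangle with $\hat k(\omega)=\mathrm{sinc}^2(\omega/2)$, which vanishes only on a discrete set. That already forces $\hat g=0$ a.e.
\end{itemize}
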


\subsubsection{Proof of Lemma \ref{lemma:rank_uniform_0}}\label{sec:proofof_lemma_rank_uniform_0}
To simplify notation define,
\begin{align*}
    M_n := \frac{1}{n}\sum_{i=1}^{n-1}
        g\left(U_{(i)}\right)g\left(U_{(i+1)}\right).
\end{align*}
Note that the \textit{if} direction follows immediately from the definition of $M_n$. In the following we prove the \textit{only if} direction. To that end assume that $\E[M_n] = 0$. For $0\leq x<y\leq 1$ let $\bar g(x,y) = \int_{x}^{y}g(t)\rmd t$ and consider,
    \begin{align}\label{eq:defG}
        G(x,y) = \E\left[g(U)\mid x<U<y\right] = \frac{1}{y-x}\int_{x}^{y}g(t)\rmd t = \frac{\bar{g}(x,y)}{y-x}
    \end{align}
    Next we simplify the expression $\E[M_n]$. By tower property of conditional expectations,
    \begin{align*}
        \E[M_n] = \frac{1}{n}\sum_{i=1}^{n-1}\E\left[\E\left[g\left(U_{(i)}\right)g\left(U_{(i+1)}\right)\mid U_{(i-1)}, U_{(i+2)}\right]\right]
    \end{align*}
    where we use the boundary convention $U_{(0)} = 0$ and $U_{(n+1)} = 1$. Note that conditioned on $U_{(i-1)}=a$ and $U_{(i+2)} = b$ the pair $\left(U_{(i)}, U_{(i+1)}\right)$ has the same distribution as the order statistics $\left(\bar{U}_{(1)},\bar{U}_{(2)}\right)$ of $\bar{U}_1,\bar{U}_2$ generated independently from $\textnormal{Unif}[a,b]$ (see Lemma 1 from \cite{gupta2021distribution}). Then,
    \begin{align*}
        \E\left[g\left(U_{(i)}\right)g\left(U_{(i+1)}\right)\mid U_{(i-1)}, U_{(i+2)}\right] = G\left(U_{(i-1)}, U_{(i+2)}\right)^2.
    \end{align*}
    Consequently,
    \begin{align*}
        \E[M_n] = \frac{1}{n}\sum_{i=1}^{n-1}\E\left[G\left(U_{(i-1)}, U_{(i+2)}\right)^2\right].
    \end{align*}
    Recalling that $\E[M_n] = 0$, it immediately follows that $G\left(U_{(i-1)}, U_{(i+2)}\right) = 0$ almost surely for all $1\leq i\leq n-1$. In particular $G\left(U_{(1)}, U_{(4)}\right) = 0$ almost surely. Moreover, a direct computation shows that the density of $\left(U_{(1)}, U_{(4)}\right)$ is strictly positive on $S:=\left\{(x,y): 0< x<y< 1\right\}$. Therefore $G(x,y) = 0$ for Lebesgue almost every $(x,y)\in S$. Recalling the defintion of $G$ from \eqref{eq:defG}, $\bar g(x,y) = 0$ for Lebesgue almost every $(x,y)\in S$. Next we will show that this conclusion indeed holds for every $(x,y)\in S$. To that end fix $(x,y)\in S$ and consider a sequence $(x_k,y_k)$ such that $x_k\ra x$ and $y_k\ra y$ as $k\ra\infty$. For all $k$ large enough, $x_k<y_k$ and,
    \begin{align}\label{eq:continuity_bar_g}
        \left|\bar g(x,y) - \bar g(x_k, y_k)\right| = \left|\int_{x_k}^{y_k}g(t)\rmd t - \int_{x}^{y}g(t)\rmd t\right|\lesssim \left|x_k - x\right| + \left|y_k - y\right|\ra 0
    \end{align}
    where the last inequality follows by recalling that $g$ is bounded. Now consider $0< x<y< 1$ such that $\bar g(x,y)\neq 0$. Then by the continuity established in \eqref{eq:continuity_bar_g} there exists $\vep>0$ such that $\bar g\neq 0$ in $B_{\vep}((x,y))\subseteq S$, where $B_{\vep}((x,y))$ denotes a ball of radius $\vep$ around $(x,y)$. However, Lebesgue measure of $B_{\vep}((x,y))$ is positive contradicting the fact $\bar g(x,y) = 0$ for Lebesgue almost every $(x,y)\in S$. Hence we conclude $\bar g(x,y) = 0$ for all $(x,y)\in S$. Moreover, by the same continuity argument as above we can conclude that $\bar g(x,y) = 0$ for all $0\leq x<y\leq 1$. Finally note that the collection,
    \begin{align*}
        \cS := \left\{(a,b]: 0\leq a<b\leq 1\right\}\bigcup\{\emptyset\}
    \end{align*}
    is a $\pi$-system that generates the Borel-sigma algebra on $[0,1]$. The proof is now completed by recalling the definition of $\bar g$ and applying Theorem 2.6.2 from \cite{gut2006probability}.

\subsection{Proof of Proposition \ref{prop:finite_test}}\label{sec:proofof_prop_finite_test}

The proof of Proposition \ref{prop:finite_test} proceeds by decomposing $\hatrankECE(f)$ into two terms, each of which can be expressed as a sum of conditionally independent random variables given $\bZ$. Bernstein's inequality is then applied to each term separately to obtain the desired concentration bound. 

We begin by recalling the definition of $\hatrankECE(f)$ from \eqref{eq:rank_ece_sample}. For notational convenience, let $W_i = Y_{\pi(i)} - Z_{\pi(i)}$ for all $1 \leq i \leq n$, and define $m_1 = \left\lceil \frac{n-1}{2}\right\rceil$ and $m_2 = \left\lfloor \frac{n-1}{2}\right\rfloor$. With this notation, we decompose $\hatrankECE(f)$ into two terms as follows:
\begin{align*}
    n\cdot\hatrankECE(f)
    =
    \underbrace{\sum_{i=1}^{m_1} V_{1,i}}_{R_1}
    +
    \underbrace{\sum_{i=1}^{m_2} V_{2,i}}_{R_2},
\end{align*}
where $V_{1,i} = W_{2i-1}W_{2i}$ for all $1 \leq i \leq m_1$, and $V_{2,i} = W_{2i}W_{2i+1}$ for all $1 \leq i \leq m_2$. We now apply Bernstein's inequality to both $R_1$ and $R_2$ conditional on $\bZ$. To this end, we first establish the required conditional independence structure.

Conditional on $\bZ$, the random variables $\{W_i\}_{i=1}^n$ are mutually independent. Since each $V_{1,i}=W_{2i-1}W_{2i}$ depends only on the pair $(W_{2i-1},W_{2i})$, and these pairs are disjoint across $i$, it follows that the collection $\{V_{1,i}\}_{i=1}^{m_1}$ consists of conditionally independent random variables. An identical argument shows that $\{V_{2,i}\}_{i=1}^{m_2}$ is also a collection of conditionally independent random variables. Consequently, both $R_1$ and $R_2$ are sums of conditionally independent random variables.

Under the null hypothesis $\bm H_0$, we have $\E[W_i\mid\bZ]=0$ for all $1\leq i\leq n$. By conditional independence, this implies $\E[V_{j,i}\mid\bZ]=0$, and consequently $\E[R_j\mid\bZ]=0$ for $j=1,2$. Finally, since the summands $\{V_{j,i}\}$ are conditionally independent and centered under $\bm H_0$,
\begin{align*}
    \sigma_j^2(\bZ)
    :=\Var_{\bm H_0}(R_j\mid\bZ)
    =\sum_{i=1}^{m_j}\Var\!\left(V_{j,i}\mid\bZ\right),
    \qquad j=1,2.
\end{align*}
A direct computation then yields
\begin{align*}
    \sigma_1^2(\bZ)
    &=
    \sum_{i=1}^{m_1}
    Z_{\pi(2i-1)}\!\left(1-Z_{\pi(2i-1)}\right)
    Z_{\pi(2i)}\!\left(1-Z_{\pi(2i)}\right),\\
    \sigma_2^2(\bZ)
    &=
    \sum_{i=1}^{m_2}
    Z_{\pi(2i)}\!\left(1-Z_{\pi(2i)}\right)
    Z_{\pi(2i+1)}\!\left(1-Z_{\pi(2i+1)}\right).
\end{align*}
Finally, observe that $|V_{j,i}|\leq 1$ for all $i\in[m_j]$ and $j=1,2$. Therefore, conditional on $\bZ$, Bernstein's inequality implies that, for each $j=1,2$,
\begin{align*}
    \P\left(R_j\leq \sqrt{2\sigma_j^2(\bZ)\log(2/\alpha)} + \frac{1}{3}\log(2/\alpha)\,\middle|\,\bZ\right)\geq 1-\alpha/2.
\end{align*}
Applying the union bound yields
\begin{align*}
    \P\left(\hatrankECE(f)\leq \frac{\sqrt{2\sigma_1^2(\bZ)\log(2/\alpha)} + \sqrt{2\sigma_2^2(\bZ)\log(2/\alpha)} + \frac{2}{3}\log(2/\alpha)}{n}\,\middle|\, \bZ\right)\geq 1-\alpha.
\end{align*}
Next, using the elementary inequality $\sqrt{a}+\sqrt{b}\leq \sqrt{2(a+b)}$, we obtain
\begin{align*}
    \sqrt{2\sigma_1^2(\bZ)\log(2/\alpha)}
    +\sqrt{2\sigma_2^2(\bZ)\log(2/\alpha)}
    &\leq
    2\sqrt{\left(\sigma_1^2(\bZ)+\sigma_2^2(\bZ)\right)\log(2/\alpha)}\\
    &=2\sqrt{n\sigma^2(\bZ)\log(2/\alpha)}.
\end{align*}
Combining the preceding displays, we conclude that
\begin{align*}
    \P\left(\hatrankECE(f)\leq 2\sqrt{\frac{\sigma^2(\bZ)\log(2/\alpha)}{n}}+\frac{2\log(2/\alpha)}{3n}\,\middle|\,\bZ\right)\geq 1-\alpha.
\end{align*}
The proof of validity is completed by tower property of conditional expectations. The proof of consistency follows immediately by the convergence of $\hatrankECE(f)$ to $\lECE(f)$ from Proposition \ref{prop:consistency}, and by noticing that $\sigma^2(\bZ)\leq 1/16$.

\subsection{Proof of Theorem \ref{thm:null_asymptotic}}\label{sec:proofof_thm_null_asymptotic}
The proof proceeds by identifying $\hatrankECE(f)$ as the terminal value of a martingale. In particular, the cross-products of the residuals form a martingale difference sequence, which allows us to apply a martingale central limit theorem. We begin by introducing the necessary notation.  Let $\sZ_n = \{Z_1,\ldots, Z_n\}$ and $\vep_i = Y_{\pi(i)} - Z_{\pi(i)}$ for all $1\leq i\leq n$. To construct the martingale define the filtration $\cF_{n,i} = \sigma\left\{\sZ_n, \vep_1,\ldots, \vep_{i+1}\right\}$ for all $0\leq i\leq n-1$. Moreover let $D_{n,0} = 0$ and $D_{n,i} = \frac{1}{\sqrt{n}}\vep_i\vep_{i+1}$ for all $1\leq i\leq n-1$. With the above notations define,
\begin{align*}
    S_{n,i}(f) = \sum_{j=0}^{i}D_{n,j}, \text{ for all }0\leq i\leq n-1.
\end{align*}
By definition note that we can identify $S_{n,n-1} = \sqrt{n}\ \hatrankECE(f)$. Next we identify the underlying martingale. By definition note that $|D_{n,i}|\leq 4/\sqrt{n}\leq 4$ for all $1\leq i\leq n, n\geq 1$ and moreover $\E\left[D_{n,i}\mid \cF_{i-1}\right] = \frac{1}{\sqrt{n}}\vep_i\E\left[\vep_{i+1}\mid \cF_{i-1}\right] = 0$, where the last equality follows by definition of concomitant pairs. Hence $\{\left(S_{n,i}(f), \cF_{n,i}\right):0\leq i\leq n-1\}$ forms a zero mean, square integrable martingale with martingale difference $\{D_{n,i}:0\leq i\leq n-1\}$. To apply the Martingale CLT (see Corollary 3.1 and the following remark from \cite{hall2014martingale}) we need to verify the conditional Lindeberg condition (see \eqref{eq:cond_lind}) and convergence of the conditional variance (see \eqref{eq:cond_var_convg}). We begin with simplifying the conditional variance. Let,
\begin{align*}
    V_n = \sum_{i=0}^{n-1}\E\left[D_{n,i}^2\mid\cF_{n,i-1}\right] = \frac{1}{n}\sum_{i=1}^{n-1}\vep_{i}^2Z_{\pi(i+1)}(1-Z_{\pi(i+1)})
\end{align*}
be the conditional variance. The final equality follows by recalling that under null $\E[Y|Z] = Z$ almost surely. Conditional on $Z_{\pi(1)},\ldots, Z_{\pi(n)}$ we get,
\begin{align*}
    \E\left[V_n\mid Z_{\pi(1)},\ldots, Z_{\pi(n)}\right] = \frac{1}{n}\sum_{i=1}^{n-1}Z_{\pi(i)}(1-Z_{\pi(i)})Z_{\pi(i+1)}(1-Z_{\pi(i+1)}),
\end{align*}
and,
\begin{align}\label{eq:cond_var_0}
    \Var\left[V_n\mid Z_{\pi(1)},\ldots, Z_{\pi(n)}\right] 
    & = \frac{1}{n^2}\sum_{i=1}^{n-1}\Var\left[\vep_{i}^2\mid Z_{\pi(1)},\ldots, Z_{\pi(n)}\right]\left[Z_{\pi(i+1)}(1-Z_{\pi(i+1)})\right]^2\nonumber\\
    &\lesssim \frac{1}{n}\ra 0
\end{align}
where the first equality follows by noticing that $\vep_1,\ldots, \vep_n$ are independent given $Z_{\pi(1)},\ldots, Z_{\pi(n)}$. To prove convergence of $V_n$ we first notice that $g(z) = z(1-z)$ is Lipschitz on $[0,1]$ with Lipschitz constant $1$. Then,
\begin{align}\label{eq:EVn_decomp}
    \E\left[V_n\mid Z_{\pi(1)},\ldots, Z_{\pi(n)}\right] = \frac{1}{n}\sum_{i=1}^{n-1}g(Z_i)^2 - \frac{g(Z_{\pi(n)})^2}{n} + \frac{1}{n}\sum_{i=1}^{n-1}g(Z_{\pi(i)})\left(g\left(Z_{\pi(i+1)}\right) - g\left(Z_{\pi(i)}\right)\right).
\end{align}
Note that we can apply law of large number on the first term and by definition of $g$, the second term is $O(1/n)$. Hence, to prove convergence of the conditional variance $V_n$ we first control the third term on right hand side of \eqref{eq:EVn_decomp} as follows,
\begin{align*}
    \left|\frac{1}{n}\sum_{i=1}^{n-1}g(Z_{\pi(i)})\left(g\left(Z_{\pi(i+1)}\right) - g\left(Z_{\pi(i)}\right)\right)\right|
    &\leq \frac{1}{4n}\sum_{i=1}^{n-1}\left|g\left(Z_{\pi(i+1)}\right) - g\left(Z_{\pi(i)}\right)\right|\\
    &\leq \frac{1}{4n}\sum_{i=1}^{n-1}\left|Z_{\pi(i+1)} - Z_{\pi(i)}\right|
    \leq \frac{1}{4n}\left(Z_{\pi(n)} - Z_{\pi(1)}\right)\ra 0.
\end{align*}
Hence by law of large numbers from \eqref{eq:EVn_decomp},
\begin{align*}
    \E\left[V_n\mid Z_{(1)},\ldots, Z_{(n)}\right] = \E\left[g^2(Z)\right] + o_p(1).
\end{align*}
Now recalling the convergence from \eqref{eq:cond_var_0} and using Markov's inequality we conclude,
\begin{align}\label{eq:cond_var_convg}
    V_n\pto \E\left[Z^2(1-Z)^2\right]>0,
\end{align}
where the final inequality follows from the non-atomicity of $Z$. Finally to apply Martingale CLT in the following we verify the conditional Lindeberg condition as follows. Now fix $\delta>0$ and recall that $|D_{n,i}|\leq 4/\sqrt{n}$. Then,
\begin{align}\label{eq:cond_lind}
    \sum_{i=0}^{n-1}\E\left[D_{n,i}^2\one\left\{|D_{n,i}|>\delta\right\}\mid \cF_{n,i-1}\right] = 0
\end{align}
for all large enough $n$. Hence we can now apply Martingale CLT \citep{hall2014martingale} to conclude,
\begin{align*}
    \sqrt{n}\ \hatrankECE(f) = S_{n,n-1}\dto\mathrm{N}\left(0, \E\left[Z^2(1-Z)^2\right]\right).
\end{align*}

\subsection{Proof of Corollary \ref{cor:rank_test_calib}}\label{appendix:proofof_cor_rank_test}
By the Law of Large Numbers and Slutsky's lemma, along with the distributional convergence established in Theorem \ref{thm:null_asymptotic}, under $\bm H_0$,
\begin{align*}
\frac{\sqrt{n}\ \hatrankECE(f)}{\sqrt{\frac{1}{n}\sum_{i=1}^{n}Z_i^2(1-Z_i)^2}}\dto \mathrm{N}(0,1).
\end{align*}
Consistency of the test follows by noticing that under the alternative $\bm H_1$, $\hatrankECE(f)\pto \lECE(f)>0$ and by the non-atomicity of $Z$, $\frac{1}{n}\sum_{i=1}^{n}Z_i^2(1-Z_i)^2\pto \E[Z^2(1-Z)^2]\leq 1/4$.

\subsection{Proof of Remark~\ref{rmk:hardness_result}}\label{app:hardness_result}
The claim stated here is different from the result stated in \citet[Theorem 12.5]{angelopoulos2024theoretical}, but follows a similar construction for the proof. Here we give details for completeness. 

Fix any $\epsilon>0$. Fix any distribution $Q_X$ on $X\in\cX$ such that the induced distribution of $f(X)$ is nonatomic and is supported on $[c-\epsilon,c+\epsilon]$, and let $Y\mid X \sim\textnormal{Bernoulli}(f(X))$. Let $Q$ denote the resulting joint distribution. Then $f$ is perfectly calibrated for this distribution, i.e., $\lECE_Q(f)=0$ (where we use the subscript for the remainder of proof to help clarify the underlying distribution). 
Then
\begin{equation}\label{eqn:bound1_proof_rmk_hardness}\left|\E_{Q^n}[\hat{E}_n(f)] \right|  = \left|\E_{Q^n}[\hat{E}_n(f)] - \lECE_Q(f)\right| \leq {\textstyle\sup_P} \,\left| \E[\hat{E}_n(f)] - \lECE(f)\right|.\end{equation}

Next,
for a large integer $M\gg n$, let $(X^{(1)},Y^{(1)}),\dots,(X^{(M)},Y^{(M)})\sim Q$, and let $\widehat{Q}_M$ denote the empirical distribution of these data points. Now let $(X_1,Y_1),\dots,(X_n,Y_n)$ be sampled i.i.d.\ from $\widehat{Q}_M$ (i.e., sampled with replacement from the finite collection of $M$ data points).
Exactly as in the proof of \citet[Theorem 12.5]{angelopoulos2024theoretical}, under the distribution $\widehat{Q}_M$, conditional on $Z=f(X^{(i)})$ we have $Y=Y^{(i)}$ (i.e., the conditional distribution is a point mass), and therefore
\[\lECE_{\widehat{Q}_M}(f) = \frac{1}{M}\sum_{i=1}^M \big(Y^{(i)} - f(X^{(i)})\big)^2,\]
on the (almost sure) event that $f(X^{(1)}),\dots,f(X^{(M)})$ are distinct. Conditional on $\widehat{Q}_M$, therefore,
\begin{align*}
    \left|\E_{\widehat{Q}_M}[\hat{E}_n(f)] - \frac{1}{M}\sum_{i=1}^M \big(Y^{(i)} - f(X^{(i)})\big)^2 \right|  
    & = \left|\E_{\widehat{Q}_M}[\hat{E}_n(f)] - \lECE_{\widehat{Q}_M}(f)\right|\\
    &\leq {\textstyle\sup_P} \,\left| \E[\hat{E}_n(f)] - \lECE(f)\right|.
\end{align*}
Now Let $\tilde{Q}$ denote the distribution of data points $(X_1,Y_1),\dots,(X_n,Y_n)\sim \widehat{Q}_M$ after marginalizing over  $\widehat{Q}_M$. Then, marginalizing over $\widehat{Q}_M$ and applying Jensen's inequality,
\begin{multline*}
\E\left[\left|\E_{\widehat{Q}_M}[\hat{E}_n(f)] - \frac{1}{M}\sum_{i=1}^M \big(Y^{(i)} - f(X^{(i)})\big)^2 \right|\right]
\geq \left|\E_{\tilde{Q}}[\hat{E}_n(f)] - \E\left[\frac{1}{M}\sum_{i=1}^M \big(Y^{(i)} - f(X^{(i)})\big)^2 \right]\right|
\\
= \left|\E_{\tilde{Q}}[\hat{E}_n(f)] - \E_Q[f(X)(1-f(X)]\right|,
\end{multline*}
where the last step holds since, for $(X^{(i)},Y^{(i)})\sim Q$,
\[\E[\big(Y^{(i)} - f(X^{(i)})\big)^2] = \E\left[\E\left[\big(Y^{(i)} - f(X^{(i)})\big)^2\mid X^{(i)}\right]\right] = \E[f(X^{(i)})(1-f(X^{(i)}))] ,\]
since $Y^{(i)}\mid X^{(i)}\sim\textnormal{Bernoulli}(f(X^{(i)}))$. Since $f(X^{(i)})\in(c-\epsilon,c+\epsilon)$, we also have
\[\left|f(X^{(i)})(1-f(X^{(i)}))  - c(1-c)\right|\leq \epsilon,\]
and therefore,
\[\left|\E_{\tilde{Q}}[\hat{E}_n(f)] - c(1-c)\right|  \leq \epsilon + {\textstyle\sup_P} \,\left| \E[\hat{E}_n(f)] - \lECE(f)\right|.\]

Finally, again following the proof of \citet[Theorem 12.5]{angelopoulos2024theoretical}, we have $d_{\mathrm{TV}}(Q^n,\tilde{Q})\leq \frac{n(n-1)}{2M}$, where $d_{\mathrm{TV}}$ denotes total variation distance between distributions.
Then
\[\left|\E_{Q^n}[\hat{E}_n(f)]  - \E_{\tilde{Q}}[\hat{E}_n(f)]\right| \leq d_{\mathrm{TV}}(Q^n,\tilde{Q})\leq \frac{n(n-1)}{2M},\]
since we can assume $\hat{E}_n(f)$ takes values in $[0,1]$ without loss of generality, and so
\begin{equation}\label{eqn:bound2_proof_rmk_hardness}\left|\E_{Q^n}[\hat{E}_n(f)] - c(1-c)\right| \leq \epsilon + \frac{n(n-1)}{2M} + {\textstyle\sup_P} \,\left| \E[\hat{E}_n(f)] - \lECE(f)\right|.\end{equation}

Combining~\eqref{eqn:bound1_proof_rmk_hardness} and~\eqref{eqn:bound2_proof_rmk_hardness}, we have shown that
\begin{align*}
    {\textstyle\sup_P} \,\bigg| \E[\hat{E}_n(f)] 
    & - \lECE(f)\bigg|\\
    & \geq \frac{1}{2}\left|\E_{Q^n}[\hat{E}_n(f)] \right|  + \frac{1}{2}\left(\left|\E_{Q^n}[\hat{E}_n(f)] - c(1-c)\right| - \epsilon - \frac{n(n-1)}{2M}\right)\\
    & \geq \inf_{t\in[0,1]} \left\{\frac{1}{2}\left|t \right|  + \frac{1}{2}\left(\left|t - c(1-c)\right| - \epsilon - \frac{n(n-1)}{2M}\right)\right\} = \frac{c(1-c)}{2} - \frac{\epsilon}{2} - \frac{n(n-1)}{4M}.
\end{align*}

Since $\epsilon>0$ can be taken arbitrarily small, and $M\gg n$ can be taken to be arbitrarily large, this completes the proof.

\section{Proof of results from Section \ref{sec:compare}}

\subsection{Proof of Proposition \ref{prop:conc_binhat}}\label{sec:proofof_conc_binhat}

The proof of Proposition \ref{prop:conc_binhat} proceeds by an application of McDiarmid's Bounded Difference inequality. To that end, for notational convenience let $\bW_i = (Y_i, Z_i)$ for all $1\leq i\leq n$ and define (recall \eqref{eq:binhatECE}),
\begin{align}\label{eq:def_g_binhat}
    g(\bW_1,\ldots, \bW_n) = \binhatECE(f) = \frac{1}{n}\sum_{j=1}^{K} \frac{D_j^2}{n_j}
\end{align}
where $D_j = \sum_{i:Z_i\in \cB_j}\left(Y_i - Z_i\right)$ for all $j\in [K]$. Fix $i\in [n]$. In the following we replace $\bW_i$ by an independent copy $\bW_i^\prime = (Y_i^\prime, Z_i^\prime)$ and bound the difference
\begin{align*}
    \left|g(\bW_1,\ldots, \bW_{i-1}, \bW_i,\bW_{i+1}, \ldots, \bW_n) - g(\bW_1,\ldots, \bW_{i-1}, \bW_i^\prime,\bW_{i+1}, \ldots, \bW_n)\right|.
\end{align*}
Next, suppose that $Z_i\in \cB_\alpha$ and $Z_i^\prime\in \cB_\beta$ for some $\alpha,\beta\in [K]$. We proceed with the proof by considering two cases:

\paragraph{Case 1: $\alpha = \beta$.} Notice that in this case, after replacement the number of samples $n_\alpha$ in bin $\cB_\alpha$ remains unaffected. Since samples in all other bins remain unaffected then,
\begin{align*}
    \big|g(\bW_1,\ldots, \bW_{i-1}, \bW_i,\bW_{i+1}, \ldots, \bW_n) 
    & - g(\bW_1,\ldots, \bW_{i-1}, \bW_i^\prime,\bW_{i+1}, \ldots, \bW_n)\big|\\
    & \leq \frac{1}{nn_\alpha}\left|(T_\alpha + (Y_i-Z_i))^2 - (T_\alpha + (Y_i^\prime-Z_i^\prime))^2\right|
\end{align*}
where $T_\alpha = D_\alpha - (Y_i-Z_i)$. A direct computation shows,
\begin{align*}
    \frac{\left|(T_\alpha + (Y_i-Z_i))^2 - (T_\alpha + (Y_i^\prime-Z_i^\prime))^2\right|}{nn_\alpha} = \frac{|(Y_i-Z_i) - (Y_i^\prime - Z_i^\prime)||2T_\alpha + (Y_i-Z_i) + (Y_i^\prime - Z_i^\prime)|}{nn_\alpha}.
\end{align*}
Since $Y_t-Z_t\in [-1,1]$ for all $t\in [n]$, then we immediately get $|(Y_i-Z_i) - (Y_i^\prime - Z_i^\prime)|\leq 2$ and $|2T_\alpha + (Y_i-Z_i) + (Y_i^\prime - Z_i^\prime)|\leq 2(n_\alpha-1) + 2 = 2n_\alpha$. Then in this case we conclude,
\begin{align}\label{eq:bound_g_diff_1}
    \big|g(\bW_1,\ldots, \bW_{i-1}, \bW_i,\bW_{i+1}, \ldots, \bW_n) 
    & - g(\bW_1,\ldots, \bW_{i-1}, \bW_i^\prime,\bW_{i+1}, \ldots, \bW_n)\big|\leq \frac{4}{n}.
\end{align}

\paragraph{Case 2: $\alpha\neq \beta$.} In this case only samples in bins $\cB_\alpha$ and $\cB_\beta$ are affected. Hence,
\begin{align}\label{eq:diff_g_unequal}
    \big|g(\bW_1,\ldots, \bW_{i-1}, \bW_i,\bW_{i+1}, \ldots, \bW_n) 
    & - g(\bW_1,\ldots, \bW_{i-1}, \bW_i^\prime,\bW_{i+1}, \ldots, \bW_n)\big|\nonumber\\
    & = \underbrace{\left|\left(\frac{D_\alpha^2}{nn_\alpha} - \frac{\bar D_\alpha^2}{n\bar n_\alpha}\right)\right|}_{\Delta_\alpha} + \underbrace{\left|\left(\frac{D_\beta^2}{nn_\beta} - \frac{\bar D_\beta^2}{n\bar n_\beta}\right)\right|}_{\Delta_\beta}
\end{align}
where $\bar D_{\alpha}$ and $\bar D_{\beta}$ are defined analogously to $D_\alpha$ and $D_\beta$, respectively, and $\bar n_\alpha$ and $\bar n_\beta$ are defined analogously to $n_\alpha$ and $n_\beta$ using the samples $\bW_1,\ldots, \bW_{i-1}, \bW_i^\prime,\bW_{i+1}, \ldots, \bW_n$. We first bound the first term in right hand side of \eqref{eq:diff_g_unequal}. Note that if $n_\alpha = 1$, then $\Delta_\alpha = (Y_i - Z_i)^2/n\leq 1/n$. If $n_\alpha\geq 2$, then
\begin{align*}
    \Delta_\alpha = \frac{1}{n}\left|\frac{T_\alpha^2}{n_\alpha-1} - \frac{(T_\alpha + Y_i-Z_i)^2}{n_\alpha}\right| = \left|\frac{T_\alpha^2 - 2(n_\alpha-1)T_\alpha(Y_i-Z_i) - (n_\alpha-1)(Y_i-Z_i)^2}{nn_\alpha(n_\alpha-1)}\right|,
\end{align*}
where the first equality follows by observing that by definition $\bar D_\alpha = T_\alpha$. Using the bounds $T_\alpha^2\leq (n_\alpha-1)^2$, $|2(n_\alpha-1)T_\alpha(Y_i-Z_i)|\leq 2(n_\alpha-1)^2$ and $(n_\alpha-1)(Y_i-Z_i)^2\leq n_\alpha-1$ we conclude,
\begin{align*}
    \Delta_\alpha \leq \frac{3(n_\alpha-1)^2 + (n_\alpha-1)}{nn_\alpha(n_\alpha-1)}\leq \frac{3}{n}
\end{align*}
Similarly one can show that $\Delta_\beta\leq \frac{3}{n}$. Then combining with \eqref{eq:diff_g_unequal} we get,
\begin{align}\label{eq:bound_g_diff_2}
    \big|g(\bW_1,\ldots, \bW_{i-1}, \bW_i,\bW_{i+1}, \ldots, \bW_n) 
    & - g(\bW_1,\ldots, \bW_{i-1}, \bW_i^\prime,\bW_{i+1}, \ldots, \bW_n)\big|\leq \frac{6}{n}.
\end{align}

The proof of Proposition \ref{prop:conc_binhat} is now completed by plugging the bounds from \eqref{eq:bound_g_diff_1} and \eqref{eq:bound_g_diff_2} in McDiarmid's bounded difference inequality. 

\subsection{Proof of Lemma \ref{lemma:stat_bias_bound}}\label{sec:proofof_lemma_stat_bias_bound}
Recalling the expression of $\binhatECE(f)$ from \eqref{eq:binhatECE} note that,
\begin{align}\label{eq:exp_binhat}
    \E\left[\binhatECE(f)\right] = \sum_{j=1}^{K}\E\left[\frac{n_j}{n}\left(\bar Y_j - \bar Z_j\right)^2\one\{n_j\geq 1\}\right]
\end{align}
Fix $j\in [K]$. Then note that conditional on $n_j = m\geq 1$ the samples in $\cB_j$ are drawn independently from the distribution $P_{Y,Z\mid Z\in \cB_j}$. A direct computation then shows,
\begin{align*}
    \E\left[\left(\bar Y_j - \bar Z_j\right)^2\one\{n_j\geq 1\}\mid n_j=m\right] = \left(\mu_j^2 + \frac{\sigma_j^2}{m}\right)\one\{m\geq 1\},
\end{align*}
where $\mu_j = \E[Y-Z\mid Z\in \cB_j]$ and $\sigma_j^2 = \Var\left(Y-Z\mid Z\in \cB_j\right)$. Then using tower property of conditional expectation the identity of \eqref{eq:exp_binhat} becomes,
\begin{align}\label{eq:Ebin_expansion}
    \E\left[\binhatECE(f)\right] 
    & = \frac{1}{n}\sum_{j=1}^{K}\E\left[(n_j\mu_j^2 + \sigma_j^2)\one\{n_j\geq 1\}\right] = \frac{1}{n}\sum_{j=1}^{K}\E\left[n_j\mu_j^2 + \sigma_j^2\one\{n_j\geq 1\}\right]\nonumber\\
    & = \sum_{j=1}^{K}p_j\mu_j^2 + \frac{\sigma_j^2}{n}\left(1-(1-p_j)^n\right)
\end{align}
where $p_j = \P(Z\in \cB_j)$ for all $j\in [K]$. To complete the proof note that $Y-Z\in [-1,1]$ and hence by Theorem 2 from \cite{bhatia2000better} we conclude $\sigma_j^2\leq 1$ for all $j\in [K]$. Hence using the identity from \eqref{eq:Ebin_expansion} we conclude,
\begin{align*}
    \left|\E\left[\binhatECE(f)\right] - \lbinECE(f)\right|\leq \frac{K}{n}.
\end{align*}

\subsection{Proof of Theorem \ref{thm:rank_bin_comparison}}\label{sec:proofof_thm_rank_bin_comparision}

Given samples $Z_1,\ldots, Z_n$ let $n_k$ denote the number of samples in bin $\cB_k$ and let us denote the samples in $\cB_k$ as $Z_{1}^{k}, \ldots, Z_{n_k}^{k}$. Moreover we consider the ordered version of samples in bin $\cB_k$ as $Z_{(1)}^{k}\leq \ldots\leq Z_{(n_k)}^{k}$. Splitting the consecutive pairs $(Z_{\pi(i)}, Z_{\pi(i+1)})$ in the global ordering into intra-bin and boundary pairs, noting that bin $\cB_k$ contributes intra-bin pairs only when $n_k\geq 2$, and using the identity from \eqref{eq:rank_r_equal} we have
\begin{align*}
    \rankECE(f) = \underbrace{\sum_{k=1}^{K}\E\left[\frac{1}{n}\sum_{i=1}^{n_k-1}r\left(Z_{(i)}^{k}\right)r\left(Z_{(i+1)}^k\right)\one\{n_k\geq 2\}\right]}_{T_n} + B_n
\end{align*}
where $B_n$ collects the contributions from consecutive pairs whose two endpoints fall in different bins and $r$ is the residual function from Proposition \ref{prop:rank_l2_finite_bd}. The number of such boundary pairs equals the number of non-empty bins minus one, which is at most $K-1$, and each summand is bounded in absolute value by $1$. Hence
\begin{align}\label{eq:Bn_bound}
    |B_n|\leq \frac{K-1}{n}.
\end{align}
We now lower bound $T_n$. For $k\in [K]$ let $\mu_k = \E\left[r(Z)\mid Z\in \cB_k\right]$ and define
\begin{align*}
    r_k^\circ(\cdot) = r(\cdot) - \mu_k\text{ for all }k\in [K].
\end{align*}
Expanding the product,
\begin{align}\label{eq:r_prod_expansion}
    r\left(Z_{(i)}^k\right)r\left(Z_{(i+1)}^k\right) = r_k^\circ\left(Z_{(i)}^k\right)r_k^\circ\left(Z_{(i+1)}^k\right) + \mu_k\left(r\left(Z_{(i)}^k\right) + r\left(Z_{(i+1)}^k\right)\right) - \mu_k^2.
\end{align}
Now note that conditional on $n_k = m$ with $m\geq 2$, $Z_1^{k}, \ldots, Z_{m}^{k}$ has the same joint distribution as $L_1,\ldots, L_{m}$ generated independently from $P_{Z\mid Z\in B_k}$. Then for $1\leq i\leq m-1$,
\begin{align*}
    \E\left[r_k^\circ\left(Z_{(i)}^k\right)r_k^\circ\left(Z_{(i+1)}^k\right)\mid n_k = m\right] = \E\left[\E\left[r_k^\circ\left(L_{(i)}\right)r_k^\circ\left(L_{(i+1)}\right)\mid L_{(i-1)}, L_{(i+2)}\right]\right]
\end{align*}
where $L_{(1)}\leq L_{(2)}\leq \cdots\leq L_{(m)}$ are the order statistics of $L_1,\ldots, L_m$ and we take the boundary cases $L_{(0)} = a_{k-1}$ and $L_{(m+1)} = a_{k}$. One again applying Lemma 1 from \cite{gupta2021distribution}, conditional on $L_{(i-1)}, L_{(i+2)}$, the pair $(L_{(i)}, L_{(i+1)})$ is distributed as the order statistics of two independent samples $M_1, M_2$ generated from $P_{L\mid L\in (L_{(i-1)}, L_{(i+2)})}$. Let $M_{(1)}\leq M_{(2)}$ denote the order statistics of $M_1, M_2$. Since the product is symmetric in its two arguments, $r_k^\circ(M_{(1)})r_k^\circ(M_{(2)}) = r_k^\circ(M_1)r_k^\circ(M_2)$ and so by independence
\begin{align*}
    \E\left[r_k^\circ\left(M_{(1)}\right)r_k^\circ\left(M_{(2)}\right)\mid L_{(i-1)}, L_{(i+2)}\right] = \left(\E\left[r_k^\circ(M_1)\mid L_{(i-1)}, L_{(i+2)}\right]\right)^2 \geq 0.
\end{align*}
Hence the cross term in \eqref{eq:r_prod_expansion} is non-negative in expectation, and so
\begin{align}\label{eq:lb_tn_1}
    T_n \geq \sum_{k=1}^{K}\E\left[\frac{1}{n}\sum_{i=1}^{n_k-1}\left(\mu_k\left(r(Z_{(i)}^k) + r(Z_{(i+1)}^k)\right) - \mu_k^2\right)\one\{n_k\geq 2\}\right].
\end{align}
On $\{n_k\geq 2\}$ the sum telescopes:
\begin{align*}
    \sum_{i=1}^{n_k-1}\left(r(Z_{(i)}^k) + r(Z_{(i+1)}^k)\right) = 2\sum_{i=1}^{n_k}r(Z_i^k) - r(Z_{(1)}^k) - r(Z_{(n_k)}^k),
\end{align*}
and conditional on $n_k$ we have $\E[\sum_{i=1}^{n_k}r(Z_i^k)\mid n_k] = n_k\mu_k$. Substituting in \eqref{eq:lb_tn_1},
\begin{align}\label{eq:lb_tn_2}
    T_n \geq \frac{1}{n}\sum_{k=1}^{K}\E\left[\left((n_k+1)\mu_k^2 - \mu_k\left(r(Z_{(1)}^k) + r(Z_{(n_k)}^k)\right)\right)\one\{n_k\geq 2\}\right].
\end{align}
We lower bound the two pieces on the right hand side separately. For the first piece, write $p_k = \P(Z\in \cB_k)$ and note that
\begin{align*}
    \E\left[(n_k+1)\one\{n_k\geq 2\}\right]
    & = \E[n_k+1] - \E\left[(n_k+1)\one\{n_k\leq 1\}\right]\\
    & = (np_k + 1) - \P(n_k = 0) - 2\P(n_k = 1)\geq np_k - 1
\end{align*}
where the last inequality uses $\P(n_k = 0) + 2\P(n_k = 1)\leq 2$. Summing over $k$ and using $\mu_k^2\leq 1$,
\begin{align}\label{eq:first_piece}
    \frac{1}{n}\sum_{k=1}^{K}\mu_k^2\E\left[(n_k+1)\one\{n_k\geq 2\}\right]\geq \lbinECE(f) - \frac{K}{n}.
\end{align}
For the second piece on the right hand side in \eqref{eq:lb_tn_2}, since $|r|\leq 1$ and $|\mu_k|\leq 1$,
\begin{align}\label{eq:second_piece}
    \left|\frac{1}{n}\sum_{k=1}^{K}\E\left[\mu_k\left(r(Z_{(1)}^k) + r(Z_{(n_k)}^k)\right)\one\{n_k\geq 2\}\right]\right|\leq \frac{2K}{n}.
\end{align}
Substituting the bounds from \eqref{eq:first_piece} and \eqref{eq:second_piece} in \eqref{eq:lb_tn_2},
\begin{align*}
    T_n\geq \lbinECE(f) - \frac{3K}{n}.
\end{align*}
Finally, putting this together with \eqref{eq:Bn_bound} we complete the proof,
\begin{align*}
    \rankECE(f) = T_n + B_n \geq \lbinECE(f) - \frac{3K}{n} - \frac{K-1}{n}\geq \lbinECE(f) - \frac{4K}{n}.
\end{align*}

\subsection{Justification of Remark \ref{rmk:stat_bias_bound}}\label{sec:e_binhat_O1}
From the proof of Lemma \ref{lemma:stat_bias_bound}, and in particular from \eqref{eq:Ebin_expansion}, recall that
\begin{align}\label{eq:Ebin_expand_perfect}
\E\left[\binhatECE(f)\right] = \sum_{j=1}^{K}p_j\mu_j^2 + \frac{\sigma_j^2}{n}\left(1-(1-p_j)^n\right)
\end{align}
where $p_j = \P(Z\in \cB_j)$, $\mu_j = \E[Y-Z\mid Z\in \cB_j]$, and $\sigma_j^2 = \Var(Y-Z\mid Z\in \cB_j)$, with $\cB_1,\ldots,\cB_K$ denoting the bins. Under the perfect calibration assumption, namely $\E[Y|Z] = Z$, it follows immediately that $\mu_j = 0$ for all $j\in [K]$. Furthermore, since $Z\sim \textnormal{Unif}[0,1]$, we have $Z\mid Z\in \cB_j\sim \textnormal{Unif}[(j-1)/K, j/K]$ for every $j\in [K]$. Consequently, $p_j = \frac{1}{K}$ for all $j\in[K]$. Substituting these identities into \eqref{eq:Ebin_expand_perfect} yields
\begin{align*}
\E\left[\binhatECE(f)\right]
& = \frac{1}{n}\left(1-(1-1/K)^n\right)\sum_{j=1}^{K}\sigma_j^2\\
& = \frac{1}{n}\left(1-\left(1-\frac{1}{K}\right)^n\right)\sum_{j=1}^{K}\E\left[Z(1-Z)\mid Z\in \cB_j\right]\\
& \geq \frac{K}{6n}\left(1-\left(1-\frac{1}{K}\right)^n\right)\geq \frac{K}{6n}\left(1-\frac{1}{e}\right)
\end{align*}

\subsection{Justification of Remark \ref{remark:rate_of_convg_compare}}\label{sec:remark_lip_rate}
Similar to Remark \ref{rmk:stat_bias_bound} in this section we consider $[0,1] = \cB_1\cup\cdots\cup\cB_K$ to be a partition of $[0,1]$ into $K$ equal-width bins.
Recall the residual function $r(z) = \E[Y-Z|Z=z]$ from Proposition \ref{prop:rank_l2_finite_bd}. By assumption $r$ is Lipschitz continuous with Lipschitz constant $L$. Recalliing definitions from \eqref{eq:L2_binned_ECE} and \eqref{eq:L2_ECE} note that,
\begin{align*}
    \left|\lbinECE(f) - \lECE(f)\right| 
    & = \left|\sum_{j=1}^{K}p_j\left(\E[r(Z)^2\mid Z\in \cB_j] - \E[r(Z)\mid Z\in \cB_j]^2\right)\right|\\
    & = \sum_{j=1}^{K}p_j\Var(r(Z)\mid Z\in \cB_j)
\end{align*}
where $p_j = \P(Z\in \cB_j)$. To bound the right hand side in the above identity we will once again invoke Theorem 2 from \cite{bhatia2000better} in each bin $\cB_j$. To that end note that $\inf_{z\in \cB_j}r(z)\leq r(z)\leq \sup_{z\in \cB_j}r(z)$ for all $z\in \cB_j$ and,
\begin{align*}
    \sup_{z\in \cB_j}r(z) - \inf_{z\in \cB_j}r(z)\leq \sup_{z_1, z_2\in \cB_j}\left|r(z_1) - r(z_2)\right|\leq \frac{L}{K}
\end{align*}
where the final inequality follows by using the Lipschitz continuity of $r$ and bin width $1/K$. Now we can apply the bound on $\Var(r(Z)\mid Z\in \cB_j)$ from Theorem 2 in \cite{bhatia2000better} to show that,
\begin{align*}
    \left|\lbinECE(f) - \lECE(f)\right|\leq\sum_{j=1}^{K}p_j\frac{L^2}{4K^2} \leq \frac{L^2}{4K^2}.
\end{align*}

\section{On non-atomicity of $f(X)$}\label{appendix:atomic}
In Section \ref{sec:def_rank_ece}, recall that, for a prediction function $f:\cX\ra[0,1]$ and covariate $X$, we defined the measure $\rankECE(f)$ under the convenient assumption that $Z=f(X)$ is nonatomic, i.e., $\P(Z=t)=0$ for all $t\in[0,1]$. This assumption was made to avoid ties among the predictions $Z_1,\ldots,Z_n$. In this section, we show that the nonatomicity assumption is merely a technical convenience that simplifies the exposition and avoids notational complexities. We therefore generalize the definition of $\rankECE$ to an assumption-free setting. 

Following the order of definitions in Section \ref{sec:def_rank_ece}, we first introduce $\hatrankECE$ in an assumption-free setting. Consider observations $(Y_1,Z_1),\ldots,(Y_n,Z_n)$, where $Z_i=f(X_i)$ for each $i\in[n]$. Independently for each $i\in[n]$, draw $U_i\sim\textnormal{Unif}[0,1]$ and define $\bW_i=(Z_i,U_i)$. Let $\pi:[n]\ra[n]$ be the permutation obtained by sorting $\bW_1,\ldots,\bW_n$ in lexicographic order; equivalently,
\begin{align}\label{eq:def_perm_sigma}
    \bW_{\pi(i)}\prec \bW_{\pi(j)}
    \iff
    \begin{cases}
        Z_{\pi(i)} < Z_{\pi(j)}, & \text{if } Z_{\pi(i)}\neq Z_{\pi(j)},\\
        U_{\pi(i)} < U_{\pi(j)}, & \text{if } Z_{\pi(i)}=Z_{\pi(j)}.
    \end{cases}
\end{align}
By construction, $\pi$ is unique. We can now define the empirical $\hatrankECE$ as
\begin{align}\label{eq:hat_rank_sigma}
    \hatrankECE(f)
    =
    \frac{1}{n}\sum_{i=1}^{n-1}
    \left(Y_{\pi(i)}-Z_{\pi(i)}\right)
    \left(Y_{\pi(i+1)}-Z_{\pi(i+1)}\right).
\end{align}
Taking expectations on both sides of \eqref{eq:hat_rank_sigma} we can now define $\rankECE(f)$ as,
\begin{align}\label{eq:rank_ece_sigma}
    \rankECE(f)
    =
    \E\left[
    \frac{1}{n}\sum_{i=1}^{n-1}
    \left(Y_{\pi(i)}-Z_{\pi(i)}\right)
    \left(Y_{\pi(i+1)}-Z_{\pi(i+1)}\right)
    \right].
\end{align}

Up to minor modifications, all the results established above for $\rankECE$ (in the nonatomic setting) hold in this more general setting as well. To formally verify some of the most important properties, the next result (with proof presented in Appendix \ref{appendix:proofof_rank_proper_sigma}) shows that Proposition~\ref{prop:lower_bound}, Theorem~\ref{thm:null_calibration}, and Theorem~\ref{thm:rank_bin_comparison} all hold in this setting.

\begin{theorem}\label{thm:rank_sigma_proper}
    Let $f:\cX\ra[0,1]$ be any predictor and for $K\geq 1$ and $0=a_0<a_1<\cdots<a_K=1$, let $\cB_1=[a_0,a_1), \cB_2=[a_1,a_2),\ldots, \cB_K=(a_{K-1},a_K]$ be a partition of $[0,1]$ into $K$ bins. Take $\rankECE(f)$ defined in \eqref{eq:rank_ece_sigma}, $\lECE(f)$ defined in \eqref{eq:L2_ECE} and $\lbinECE(f)$ defined in \eqref{eq:L2_binned_ECE} with the bins $\cB_1,\ldots, \cB_K$. Then the following holds:
    \begin{itemize}
        \item [(a)] For $n\geq 2$, $0\leq \rankECE(f)\leq \lECE(f)$.
        \item [(b)] For $n\geq 4$, $\rankECE(f) = 0$ if and only if $f$ satisfies \eqref{eq:def_calibration}. 
        \item [(c)] For $n\geq 2$, $\rankECE(f)\geq \lbinECE(f) - \frac{4K}{n}$.
    \end{itemize}
\end{theorem}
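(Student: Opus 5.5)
The idea is to reduce the tie-broken setting to the nonatomic one by lifting to the pairs $\bW_i=(Z_i,U_i)$. Order them lexicographically; this is a total order with no ties almost surely, since the $U_i$ are continuous. Let $\tilde Z$ denote the pair $(Z,U)$ and define the residual on pairs by
\[\tilde r(z,u)=\E[Y-Z\mid Z=z,U=u]=r(z).\]
This holds because $U$ is independent of $(X,Y)$.

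The first step is the analogue of \eqref{eq:rank_r_equal}. Conditional on the ordered pairs $\bW_{\pi(1)},\dots,\bW_{\pi(n)}$, the labels are conditionally independent with means $r(Z_{\pi(i)})$. Hence
\[\rankECE(f)=\E\Big[\tfrac1n\sum_{i=1}^{n-1} r(Z_{\pi(i)})r(Z_{\pi(i+1)})\Big].\]
It then helps to push everything to a uniform scale through a single map $V=\Psi(Z,U)$, where
\[\Psi(z,u)=F(z^-)+u\,\P(Z=z),\]
with $F$ the CDF of $Z$. This randomized probability integral transform yields $V\sim\mathrm{Unif}[0,1]$. It is monotone in the lexicographic order, and it is injective almost surely. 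Moreover $Z=F^{-1}(V)$ almost surely. So the sorted order of the $\bW_i$ coincides with the order of $V_1,\dots,V_n$, which are i.i.d.\ uniform, and
\[\rankECE(f)=\E\Big[\tfrac1n\sum_{i=1}^{n-1}\bar r(V_{(i)})\bar r(V_{(i+1)})\Big],\qquad \bar r=r\circ F^{-1},\]
which is exactly the form \eqref{eq:rank_uniform}. Verifying this transform, namely the uniformity of $V$ and the a.s.\ recovery $Z=F^{-1}(V)$, is the main technical step. I expect it to be the only place that needs care with atoms: $F^{-1}$ is constant on each atom interval $[F(z^-),F(z)]$, and $V$ lands in that interval exactly when $Z=z$.

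With this representation, the three parts follow from the proofs already given, applied to $\bar r$ on uniform data. For (a), the rearrangement argument gives the upper bound $\E[\bar r(V)^2]=\E[r(Z)^2]=\lECE(f)$. The lower bound comes from the conditional-pair argument with Lemma 1 of \citet{gupta2021distribution} applied to the continuous $V$'s, which gives a sum of expectations of squares. For (b), apply Lemma~\ref{lemma:rank_uniform_0} with $g=\bar r$, which is bounded. Then $\rankECE(f)=0$ if and only if $\bar r=0$ Lebesgue-a.e., and this is equivalent to $r(Z)=0$ a.s., because $F^{-1}(V)\overset{d}{=}Z$.

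For (c), bins in $Z$-space pull back to intervals in $V$-space. This uses $\tilde\cB_k=\Psi(\cB_k\times[0,1])$, which is an interval because $\Psi$ is monotone lexicographically and the $\cB_k$ are intervals. Then
\[\P(V\in\tilde\cB_k)=\P(Z\in\cB_k),\qquad \E[\bar r(V)\mid V\in\tilde\cB_k]=\mu_k.\]
The proof of Theorem~\ref{thm:rank_bin_comparison} therefore applies verbatim to uniform $V$ with bins $\tilde\cB_k$ and residual $\bar r$. It yields $\rankECE(f)\ge\lbinECE(f)-4K/n$, with the same binned quantity.

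One check remains in (c): atoms located at bin endpoints $a_k$ must be assigned consistently. This is why the bins are specified as half-open intervals with a closed last bin. The pulled-back bins then partition $[0,1]$ up to null sets, and the per-bin conditional i.i.d.\ structure used in that proof holds.
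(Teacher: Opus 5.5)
Your parts (a) and (b) follow the paper's proof essentially exactly. You use the same randomized probability integral transform $V_i=\P(Z<Z_i)+U_i\P(Z=Z_i)$, the same fact that $\pi$ sorts the $V_i$ (the paper's Lemma~\ref{lemma:V_sigma_rank}), and then the nonatomic arguments and Lemma~\ref{lemma:rank_uniform_0} applied to $\bar r=r\circ F^{-1}$.

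For (c) your route is genuinely different and somewhat more economical. The paper stays in $Z$-space. It splits the sum into intra-bin and boundary pairs using lexicographic order within each bin. It uses a transform only locally: it re-runs the randomized integral transform separately for each conditional law $P_{Z\mid Z\in\cB_k}$, producing a function $s$ on uniforms that certifies the centered intra-bin cross terms have nonnegative expectation. You instead transform once globally and pull the bins back. All of (c) then becomes an instance of the nonatomic setting: i.i.d.\ uniform $V$, a function $\bar r$ bounded by $1$, and interval bins $\tilde\cB_k$ with the same masses $p_k$ and conditional means $\mu_k$. This avoids the per-bin transforms and the $\pi_k$, $j_k$ bookkeeping, at the cost of two checks you should make explicit.

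First, lexicographic monotonicity of $\Psi$ alone does not make $\Psi(\cB_k\times[0,1])$ an interval. What makes it one is that $\Psi$ is also onto $[0,1]$, because it fills the jumps of $F$. It is cleaner to set $\tilde\cB_k=\{v: F^{-1}(v)\in\cB_k\}$. This is an interval because $F^{-1}$ is nondecreasing, the $\tilde\cB_k$ partition $[0,1]$ exactly, and $\{V\in\tilde\cB_k\}=\{Z\in\cB_k\}$ a.s.\ because $Z=F^{-1}(V)$ a.s.

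Second, you are invoking the \emph{proof} of Theorem~\ref{thm:rank_bin_comparison}, not its statement, since $\bar r(V)$ is not $\E[Y-V\mid V]$. This is legitimate because that proof uses only $|r|\le 1$, i.i.d.\ nonatomic samples, and interval bins. Zero-mass pulled-back bins, which may now be degenerate, contribute nothing.

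Finally, you are (rightly) reading the last bin as $[a_{K-1},a_K]$, as in Theorem~\ref{thm:rank_bin_comparison}. The printed $\cB_K=(a_{K-1},a_K]$ leaves $a_{K-1}$ uncovered, which matters exactly when $Z$ has an atom there.
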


\begin{remark}\label{remark:rankece_derandom}
    The construction of $\hatrankECE(f)$ in \eqref{eq:hat_rank_sigma} introduces additional randomness through the random variables $U_1,\ldots,U_n$, which are used to define the permutation $\pi$ in \eqref{eq:def_perm_sigma}. Consequently, a practitioner may obtain different values of $\hatrankECE(f)$ for the same set of observations. To bypass this additional randomness we make the observation,
    \begin{align*}
        \rankECE(f) = \E\left[\E\left[
    \frac{1}{n}\sum_{i=1}^{n-1}
    \left(Y_{\pi(i)}-Z_{\pi(i)}\right)
    \left(Y_{\pi(i+1)}-Z_{\pi(i+1)}\right)
    \mid \cD_n\right]\right]
    \end{align*}
    where $\cD_n = \{(Y_i, Z_i): 1\leq i\leq n\}$. With this observation we can define a derandomised version of $\hatrankECE(f)$ by averaging the expression from \eqref{eq:hat_rank_sigma} over the randomness of $U_1,\ldots, U_n$. In particular this is equivalent to definining $\hatrankECE(f)$ as follows:
    \begin{align}\label{eq:rankhat_average}
        \hatrankECE(f) 
        & = \E\left[
    \frac{1}{n}\sum_{i=1}^{n-1}
    \left(Y_{\pi(i)}-Z_{\pi(i)}\right)
    \left(Y_{\pi(i+1)}-Z_{\pi(i+1)}\right)
    \mid \cD_n\right]\nonumber\\
        & = \frac{1}{\left|\cP_n\right|}\sum_{\sigma\in \cP_n}\frac{1}{n}\sum_{i=1}^{n-1}
    \left(Y_{\sigma(i)}-Z_{\sigma(i)}\right)
    \left(Y_{\sigma(i+1)}-Z_{\sigma(i+1)}\right)
    \end{align}
    where $\cP_n = \{\sigma: Z_{\sigma(1)}\leq Z_{\sigma(2)}\leq\cdots\leq Z_{\sigma(n)}\}$ is the collection of all permutations which order $Z_1,\ldots, Z_n$ in a non-decreasing manner. From a computational perspective, the expression in \eqref{eq:rankhat_average} may appear computationally intensive, as it involves an average over $\cP_n$. However, it can be simplified considerably. Let $m_i = \sum_{\ell=1}^{n}\one\{Z_\ell = Z_i\}$ and $\cS_n = \{(i,j): Z_i<Z_j\text{ and}\not\exists \ell\in [n]\text{ such that }Z_i<Z_\ell<Z_j\}$. Then,
    \begin{align}\label{eq:rankhat_simple}
        \hatrankECE(f) = \frac{1}{n}\left[\sum_{i=1}^{n}\sum_{\substack{j\neq i\\Z_i = Z_j}}\frac{(Y_i - Z_i)(Y_j-Z_j)}{m_i} + \sum_{(i,j)\in \cS_n}\frac{(Y_i-Z_i)(Y_j-Z_j)}{m_im_j} \right]
    \end{align}

\end{remark}

\subsection{Proof of Theorem \ref{thm:rank_sigma_proper}}\label{appendix:proofof_rank_proper_sigma}

In this section, we present the proof of Theorem~\ref{thm:rank_sigma_proper}. The main idea is to use the inverse probability transform to construct uniform random variables in such a way that the permutation $\pi$ can be viewed as the permutation induced by ranking these uniform random variables. With this representation in hand, the remainder of the argument closely follows the proof of the corresponding result under the non-atomicity assumption.

\subsubsection{Proof of Part (a) and Part (b).}

Recall the residual function $r(Z) = \E[Y-Z\mid Z]$. Now, conditioning on $\bW_1,\ldots\bW_n$ we get,
\begin{align}\label{eq:rankece_sigma_Z}
    \rankECE(f) = \E\left[\frac{1}{n}\sum_{i=1}^{n-1}r\left(Z_{\pi(i)}\right)r\left(Z_{\pi(i+1)}\right)\right]
\end{align}
where $\pi$ is the permutation from \eqref{eq:def_perm_sigma}. Following the arguments similar to the upper bound from \eqref{eq:rank_r_equal} we can establish that $\rankECE(f)\leq \lECE(f)$. To prove the lower bound and establish $\rankECE$ as a proper measure of calibration we proceed with the following construction. Consider $q(z,\lambda) = \P(Z<z) + \lambda\P(Z = z)$ for $z,\lambda\in [0,1]$. Then, define $V_i = q(Z_i,U_i)$ for all $1\leq i\leq n$. By Proposition 2.1 of \cite{ruschendorf2009distributional}, the random variables \(V_1,\ldots,V_n\) are independent and each is distributed as \(\operatorname{Unif}[0,1]\).
 The following lemma then shows that $\pi$ is the permutation induced by ordering $V_1,\ldots, V_n$.

\begin{lemma}\label{lemma:V_sigma_rank}
    Consider $\bW_1,\ldots, \bW_n$ from \eqref{eq:def_perm_sigma} and let $V_1,\ldots, V_n$ be as defined above. Then,
    \begin{align*}
        V_{\pi(1)}<V_{\pi(2)}<\cdots<V_{\pi(n)} \text{ a.s. }
    \end{align*}
\end{lemma}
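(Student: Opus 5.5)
The plan is to show that $V_i=q(Z_i,U_i)$ is a strictly increasing function of $\bW_i=(Z_i,U_i)$ under the lexicographic order $\prec$ from \eqref{eq:def_perm_sigma}, except on a null event. Once this is established, the ordering of $V_1,\ldots,V_n$ coincides with the ordering that defines $\pi$, and the claim follows. We compare two indices $i\neq j$ with $\bW_i\prec\bW_j$ and split into two cases according to whether $Z_i=Z_j$.

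\textbf{Case $Z_i=Z_j=z$.} Here $\bW_i\prec\bW_j$ means $U_i<U_j$. Necessarily $\P(Z=z)>0$, since ties among $Z$ values occur with positive probability only at atoms; ties at non-atoms form a null event, which we discard. Then
\[
V_j-V_i=(U_j-U_i)\P(Z=z)>0.
\]
Ties $U_i=U_j$ have probability zero, so the strict inequality holds almost surely.

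\textbf{Case $Z_i<Z_j$.} Write $V_i=\P(Z<Z_i)+U_i\P(Z=Z_i)\le \P(Z\le Z_i)$, which gives
\[
V_j-V_i\ \ge\ \P(Z<Z_j)-\P(Z\le Z_i)+U_j\P(Z=Z_j)\ \ge\ \P(Z_i<Z<Z_j)+U_j\P(Z=Z_j)\ \ge 0.
\]
The only delicate point is ruling out equality. Equality forces both
\[
\P(Z_i<Z<Z_j)=0 \qquad\text{and}\qquad U_j\P(Z=Z_j)=0,
\]
together with $U_i\P(Z=Z_i)=\P(Z=Z_i)$, i.e.\ either $U_i=1$ or $Z_i$ is not an atom. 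Since $U_j\in(0,1)$ almost surely, the second condition requires $Z_j$ to be a non-atom.

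We must therefore show it is a null event that two independent draws $Z_i<Z_j$ satisfy $\P(Z_i<Z<Z_j)=0$ while $Z_j$ is a non-atom. In that situation $Z_j$ lies at the left endpoint of a "flat" region of the CDF $F$ starting from $Z_i$. Concretely, $F(Z_j^-)=F(Z_i)$, and because $Z_j$ is not an atom, $F(Z_j)=F(Z_i)$.

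The key step is the following claim: almost surely, $Z_j$ does not lie in any set of the form $\{z:F(z)=F(t)\text{ for some } t<z\}$. This set is a union of level intervals of $F$ on which $F$ is constant, excluding their left endpoints. There are at most countably many such nondegenerate intervals, and each has $Z$-mass zero. This uses exactly the standard fact that $\P(F(Z)=F(Z'))$ with $Z<Z'$ and $Z'$ a non-atom is zero.

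Combining the two cases, the map $\bW\mapsto V$ is strictly increasing almost surely. Hence $V_{\pi(1)}<\cdots<V_{\pi(n)}$ almost surely, which is the statement of Lemma~\ref{lemma:V_sigma_rank}.

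I expect this flat-region null-set argument in the second case to be the main obstacle. The cleanest approach is to condition on $Z_i$ and use that the $Z$-mass of $\{z>Z_i: F(z)=F(Z_i)\}$ equals $F(z)-F(Z_i)=0$ along that interval.
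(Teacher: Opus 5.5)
Your proof is correct, but it gets strictness by a different and longer route than the paper.

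The paper proves only the weak inequality $V_{\pi(i)}\le V_{\pi(i+1)}$:
\begin{itemize}
\item in the tie case it uses $U_{\pi(i)}<U_{\pi(i+1)}$;
\item when $Z_{\pi(i)}<Z_{\pi(i+1)}$ it uses the chain $V_{\pi(i)}\le F(Z_{\pi(i)})\le F(Z_{\pi(i+1)}-)\le V_{\pi(i+1)}$.
\end{itemize}
Strictness then comes for free from the fact that $V_1,\ldots,V_n$ are i.i.d.\ $\textnormal{Unif}[0,1]$ (Proposition 2.1 of R\"uschendorf, invoked just before the lemma). Hence $\P(V_i=V_j)=0$ for $i\neq j$.

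You instead identify the equality event explicitly and show it is null. In your setting equality means $\P(Z_i<Z<Z_j)=0$ with $Z_j$ a non-atom, i.e.\ $F(Z_j)=F(Z_i)$. You handle this with the level-interval argument, or equivalently by conditioning on $Z_i=t$ and noting that $\{z>t:F(z)=F(t)\}$ is an interval of $Z$-mass zero. You also discard ties at non-atoms by hand.

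The flat-region step you flagged as the main obstacle is exactly what the paper sidesteps. Continuity of the $V_i$'s absorbs both of your null events at once, including the case $Z_i=Z_j$ at a non-atom, where $V_i=V_j$.

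\textbf{What each route buys.} Yours is self-contained and never uses the uniformity of $V$. It needs only independence of the $Z_i$, non-atomicity of the $U_i$, and the fact that flat pieces of $F$ carry no $Z$-mass; in effect it reproves the no-ties part of the R\"uschendorf fact. The paper's version is a two-line argument. It reuses a distributional fact it needs anyway, both for the identity $Z_i=F^{-1}(V_i)$ and for the subsequent reduction to uniform order statistics.
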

Next, let $F$ be the distribution function of $Z$. Then consider $F^{-1}$ to be the generalised inverse defined as,
\begin{align*}
    F^{-1}(v) = \inf\{z\in [0,1]: F(z)\geq v\} \text{ for all }v\in [0,1]. 
\end{align*}
Then once again applying Proposition 2.1 from \cite{ruschendorf2009distributional} we know $Z_i = F^{-1}(V_i)$ for all $1\leq i\leq n$. Using this identity on \eqref{eq:rankece_sigma_Z} we get,
\begin{align}
    \rankECE(f) 
    & = \E\left[\frac{1}{n}\sum_{i=1}^{n-1}r\left(F^{-1}(V_{\pi(i)})\right)r\left(F^{-1}(V_{\pi(i+1)})\right)\right]\nonumber\\
    & = \E\left[\frac{1}{n}\sum_{i=1}^{n-1}r\left(F^{-1}(V_{(i)})\right)r\left(F^{-1}(V_{(i+1)})\right)\right]\label{eq:rank_V}
\end{align}
where $V_{(1)}\leq V_{(2)}\leq\cdots\leq V_{(n)}$ are the order statistics of $V_1,\ldots,V_n$ and the last equality follows by applying Lemma \ref{lemma:V_sigma_rank}. The proof of lower bound is now completed by following the arguments for lower bound from Appendix \ref{sec:proofof_prop_lower_bound}. The proof of $\rankECE(f) = 0$ if and only if $f$ satisfies \eqref{eq:def_calibration} now follows by an application of Lemma \ref{lemma:rank_uniform_0}.

\subsubsection{Proof of Part (c).}
Given samples $Z_1,\ldots, Z_n$ let $n_k$ be the number of samples in bin $\cB_k$, denote $Z_{1,k},\ldots,Z_{n_k,k}$ to be the samples in bin $\cB_k$ and let $\bW_{i,k} = (Z_{i,k}, U_{i,k})$ be the pair associated with $Z_{i,k}$ for all $1\leq i\leq k$. Moreover let $\pi_k:[n_k]\ra[n_k]$ denote the permutation induced by ranking the pairs $\bW_{1,k},\ldots,\bW_{n_k,k}$ lexicographically (see \eqref{eq:def_perm_sigma}). Then by the equivalence from \eqref{eq:rankece_sigma_Z},
\begin{align*}
    \rankECE(f) 
    & = \frac{1}{n}\sum_{k=1}^{K}\E\left[\sum_{i=1}^{n_k-1}r\left(Z_{\pi_k(i),k}\right)r\left(Z_{\pi_{k}(i+1),k}\right)\one\{n_k\geq 2\}\right]\\
    & + \frac{1}{n}\sum_{k=1}^{K-1}\E\left[r\left(Z_{\pi_k(n_k),k}\right)r\left(Z_{\pi_{j_k}(1), j_k}\right)\one\{n_k\geq 1\}\one\{\max_{j>k}n_j \geq 1\}\right]
\end{align*}
where $j_k = \min\{\arg\min\{j: k<j\leq K, n_j\geq 1\}, K\}$ (and the indicator $\one\{\max_{j>k}n_j \geq 1\}$ is included to ensure that $j_k$ exists---that is, the last sum only includes bins $k$ whose $n_k$th data point $Z_{\pi_k(n_k)}$ is not the last among all $n$ data points). Since $|r|\leq 1$, then
\begin{align}\label{eq:lb_rank_gen}
    \rankECE(f)\geq \frac{1}{n}\sum_{k=1}^{K}\E\left[\sum_{i=1}^{n_k-1}r\left(Z_{\pi_k(i),k}\right)r\left(Z_{\pi_{k}(i+1),k}\right)\one\{n_k\geq 2\}\right] - \frac{K-1}{n}.
\end{align}
Similar to \eqref{eq:r_prod_expansion}, for $1\leq k\leq K$,
\begin{align}\label{eq:r_prod_expansion_gen}
    r\left(Z_{\pi_k(i),k}\right)r\left(Z_{\pi_{k}(i+1),k}\right) = r_k^\circ\left(Z_{\pi_k(i),k}\right)r_k^\circ\left(Z_{\pi_{k}(i+1),k}\right) + \mu_k\left(r_k\left(Z_{\pi_k(i),k}\right)+r_k\left(Z_{\pi_{k}(i+1),k}\right)\right) - \mu_k^2
\end{align}
where $r_k^\circ(\cdot) = r(\cdot) - \mu_k$ with $\mu_k = \E[r(Z)|Z\in \cB_k]$. To further lower bound $\rankECE(f)$ note that conditional on $n_k = m$, $\bW_{1,k},\ldots, \bW_{1,m}$ have the same joint distribution as $\bar{\bW}_{1,k},\ldots,\bar{\bW}_{1,m}$ where $\bar{\bW}_{i,k} = (\bar Z_{i,k}, \bar U_{i,k})$ with $\bar Z_{1,k},\ldots, \bar Z_{m,k}$ generated independently from $P_{Z|Z\in \cB_k}$ and the collection $\bar U_{1,k},\ldots, \bar U_{m,k}$ is generated independently from $\textnormal{Unif}[0,1]$. Let $\bar \pi_k$ be the permutation induced by lexicographic ordering of $\bar{\bW}_{1,k},\ldots, \bar\bW_{m,k}$. Then,
\begin{align*}
    \E\left[r_k^\circ\left(Z_{\pi_k(i),k}\right)r_k^\circ\left(Z_{\pi_{k}(i+1),k}\right)\mid n_k = m\right] = \E\left[r_k^\circ\left(\bar Z_{\bar \pi_k(i),k}\right)r_k^\circ\left(\bar Z_{\bar \pi_{k}(i+1),k}\right)\right].
\end{align*}
Now repeating the argument for proof of equivalence in \eqref{eq:rank_V} with the distribution $P_{Z\mid Z\in \cB_k}$ we can show that there exists some function $s$ such that,
\begin{align*}
    \E\left[r_k^\circ\left(\bar Z_{\bar \pi_k(i),k}\right)r_k^\circ\left(\bar Z_{\bar \pi_{k}(i+1),k}\right)\right] = \E\left[s\left(\bar V_{(i),k}\right)s\left(\bar V_{(i+1),k}\right)\right]
\end{align*}
where $\bar V_{(1),k}\leq \cdots\leq \bar V_{(m),k}$ are order statistics of $\bar V_{1,k},\ldots, \bar V_{m,k}\sim\textnormal{Unif}[0,1]$. With the boundary convention $\bar V_{(0), k} = 0$ and $\bar V_{(m+1), k} = 1$,
\begin{align*}
    \E\left[s\left(\bar V_{(i),k}\right)s\left(\bar V_{(i+1),k}\right)\right] = \E\left[\E\left[s\left(\bar V_{(i),k}\right)s\left(\bar V_{(i+1),k}\right)\mid \bar V_{(i-1),k}, \bar V_{(i+2), k}\right]\right].
\end{align*}
Applying Lemma 1 from \cite{gupta2021distribution},
\begin{align}\label{eq:EsV_nng}
    \E\left[s\left(\bar V_{(i),k}\right)s\left(\bar V_{(i+1),k}\right)\mid \bar V_{(i-1),k}, \bar V_{(i+2), k}\right] = \E\left[s\left(V_{(1),k}\right)s\left(V_{(2),k}\right)\right] = \E\left[\left(V_{1,k}\right)\right]^2\geq 0.
\end{align}
where $V_{(1),k}\leq V_{(2),k}$ are the order statistics of $V_{1,k}, V_{2,k}\sim\textnormal{Unif}[\bar V_{(i-1),k}, \bar V_{(i+2), k}]$. Using the non-negativity from \eqref{eq:EsV_nng}, the expansion from \eqref{eq:r_prod_expansion_gen} and the lower bound from \eqref{eq:lb_rank_gen} we conclude,
\begin{align*}
    \rankECE(f)\geq \frac{1}{n}\sum_{k=1}^{K}\E\left[\sum_{i=1}^{n_k-1}\left(\mu_k\left(r_k\left(Z_{\pi_k(i),k}\right)+r_k\left(Z_{\pi_{k}(i+1),k}\right)\right) - \mu_k^2\right)\one\{n_k\geq 2\}\right]-\frac{K-1}{n}.
\end{align*}
The proof is now completed by arguments analogous to the proof of Theorem \ref{thm:rank_bin_comparison} in Appendix \ref{sec:proofof_thm_rank_bin_comparision}.

\subsubsection{Proof of Lemma \ref{lemma:V_sigma_rank}}
To prove Lemma \ref{lemma:V_sigma_rank} it is enough to show that for $1\leq i\leq n-1$, $V_{\pi(i)}<V_{\pi(i+1)}$ almost surely. To that end, fix $i\in [n-1]$. Then we have the following two cases:\\
\textbf{Case I:} $Z_{\pi(i)} = Z_{\pi(i+1)}$. In this case by definition from \eqref{eq:def_perm_sigma} we know that $U_{\pi(i)}<U_{\pi(i+1)}$. Hence recalling definition of $V$ it immediately follows that $V_{\pi(i)}<V_{\pi(i+1)}$.\\
\textbf{Case II:} $Z_{\pi(i)}<Z_{\pi(i+1)}$. Let $F(z-) = \P(Z<z)$. Then by definition $V_{\pi(i)}\leq F(Z_{\pi(i)})\leq F(Z_{\pi(i+1)}-)\leq V_{\pi(i+1)}$. To complete the proof, recall that $V_1,\ldots, V_n$ are independent sample from $\textnormal{Unif}[0,1]$ and hence $V_{\pi(i)}<V_{\pi(i+1)}$.

\subsection{Justification of Remark \ref{remark:rankece_derandom}}

In this section we justify the equivalence from \eqref{eq:rankhat_average} and the subsequent simplification from \eqref{eq:rankhat_simple}.\\

To justify the equivalence from \eqref{eq:rankhat_average} it suffices to show that conditional on $\cD_n$ the permutation $\pi$ from \eqref{eq:def_perm_sigma} is distributed uniformly on $\cP_n$. To that end consider $\sigma\in \cP_n$. Define,
\begin{align*}
    \mathcal J = \{j\in [n-1]:Z_{\sigma(j)} = Z_{\sigma(j+1)}\}.
\end{align*}
Note that by definition of $\cP_n$, the set $\mathcal J$ is independent of the choice of $\sigma$. By definition of $\pi$ from \eqref{eq:def_perm_sigma} and the set $\cP_n$,
\begin{align*}
    \{\pi = \sigma\} = \{U_{\sigma(j)}<U_{\sigma(j+1)}\text{ for all }j\in \mathcal J\}.
\end{align*}
Then,
\begin{align}\label{eq:pi_equal_sigma}
    \P\left(\pi = \sigma\mid\cD_n\right) 
    & = \P\left(U_{\sigma(j)}<U_{\sigma(j+1)}\text{ for all }j\in \mathcal J\mid\cD_n\right)\nonumber\\
    & = \P\left(U_j<U_{j+1}\text{ for all }j\in J\mid\cD_n\right)
\end{align}
where the last equality follows by observing that conditional on $\cD_n$ the permutation $\sigma$ and the set $J$ is fixed and $(U_{\sigma(1)},\ldots, U_{\sigma(n)})\mid \cD_n \overset{d}{=} (U_1,\ldots, U_n)$. The proof is now completed since the right hand side of \eqref{eq:pi_equal_sigma} is independent of the choice of $\sigma$.\\

To justify the simplification in \eqref{eq:rankhat_simple}, let $R_i = Y_i-Z_i$ for all $i\in [n]$ and note that
\begin{align}\label{eq:hatrank_simple_2}
    \hatrankECE(f) 
    & = \frac{1}{n}\sum_{i=1}^{n}\frac{1}{|\cP_n|}\sum_{\sigma\in \cP_n}R_{\sigma(i)}R_{\sigma(i+1)}\nonumber\\
    & = \frac{1}{n}\sum_{i=1}^{n}\sum_{j\neq i}\frac{R_iR_j}{|\cP_n|}\sum_{\sigma\in \cP_n}\sum_{k=1}^{n}\one\{\sigma(k) = i, \sigma(k+1) = j\}.
\end{align}
To further simplify this expression, fix $i\neq j\in [n]$. The corresponding term on the right-hand side can be non-zero only if there exist $\sigma\in \cP_n$ and $k\in [n]$ such that $\sigma(k) = i$ and $\sigma(k+1) = j$. There are two cases to consider. First, if $Z_i = Z_j$, then, by counting the number of permutations in which $Z_i$ is ranked immediately before $Z_j$, we obtain
\begin{align}\label{eq:i_equal_j}
    \frac{1}{|\cP_n|}\sum_{\sigma\in \cP_n}\sum_{k=1}^{n}\one\{\sigma(k) = i, \sigma(k+1) = j\} = \frac{1}{m_i}.
\end{align}
On the other hand, if $(i,j)\in \cS_n$, then, by counting the number of permutations in which $Z_i$ is ranked immediately before $Z_j$, we obtain
\begin{align}\label{eq:i_j_cS}
    \frac{1}{|\cP_n|}\sum_{\sigma\in \cP_n}\sum_{k=1}^{n}\one\{\sigma(k) = i, \sigma(k+1) = j\} = \frac{1}{m_im_j}.
\end{align}
The proof is completed by substituting the values from \eqref{eq:i_equal_j} and \eqref{eq:i_j_cS} into \eqref{eq:hatrank_simple_2}.

\section{Additional Details on Sentiment Models}\label{sec:sentiment_details}

In Section \ref{sec:sentiment_main} and Appendix \ref{appendix:sentiment_experiments} we evaluate calibration of four publicly available pre-trained sentiment classifiers \texttt{DistilBERT\_SST2}, \texttt{BERT\_SST2}, \texttt{RoBERTa\_Twitter\_Sentiment}, and \texttt{\seqsplit{BERT\_Multilingual\_Stars}} available at \texttt{Hugging Face}.\footnote{\url{https://huggingface.co/models}} In this section we provide more details about the chosen models.

All four models are variants of the language representation model \texttt{BERT} introduced in \cite{devlin2019bert}. The \texttt{BERT} model is pre-trained on BookCorpus \citep{zhu2015aligning} and English Wikipedia. This model is a general purpose task-agnostic encoder which can be fine tuned on a small labeled datasets for downstream classification tasks. Each of the four models we use applies this recipe. In all cases, the fine-tuned model maps an input text to
a vector of class logits, converted to class probabilities via the softmax
transformation; these probabilities are the predicted scores $Z$ whose
calibration we assess. In the following we provide additional details about the individual models.

\paragraph{\texttt{DistilBERT\_SST2} :} This model is a distilled version (trained through knowledge distillation) of \texttt{BERT} \citep{sanh2019distilbert} fine-tuned on the Stanford Sentiment Treebank binary sentiment task (SST-2) \citep{socher2013recursive}, a corpus of movie-review sentences labeled positive or negative.\footnote{\url{https://huggingface.co/distilbert/distilbert-base-uncased-finetuned-sst-2-english}} The model outputs probabilities for the positive and negative classes, and we take $Z$ to be the softmax probability of the positive class.

\paragraph{\texttt{BERT\_SST2} :} This model is the original \texttt{BERT} model fine tuned on on the same SST-2 data
\citep{socher2013recursive} as part of the TextAttack framework \citep{morris2020textattack}.\footnote{\url{https://huggingface.co/textattack/bert-base-uncased-SST-2}} As before, the model outputs probabilities for the positive and negative classes, and we take $Z$ to be the softmax probability of the positive class.

\paragraph{\texttt{RoBERTa\_Twitter\_Sentiment} :} This model is based on the \texttt{RobERTa} \citep{liu2019roberta} model, which preserves the \texttt{BERT} architecture, but pretrains longer with more data. This model is pretrained on $\sim124$M tweets from January 2018 to December 2021 \citep{loureiro2022timelms}, and finetuned for three-class
(negative, neutral, positive) sentiment analysis with the TweetEval benchmark.\citep{barbieri2020tweeteval} \footnote{\url{https://huggingface.co/cardiffnlp/twitter-roberta-base-sentiment-latest}} We take $Z$ to be the softmax probability of the positive class.

\paragraph{\texttt{{BERT\_Multilingual\_Stars}} :} This is based on multilingual \texttt{BERT}, which is pretrained on the Wikipedia text from over a hundred languages with a shared vocabulary \citep{devlin2019bert, pires2019multilingual}, and fine-tuned to predict product-review star ratings (one to five) across six languages (English, Dutch, German, French, Spanish, Italian).\footnote{\url{https://huggingface.co/nlptown/bert-base-multilingual-uncased-sentiment}} Unlike the other three models, the fine-tuning task is ordinal rather than binary; we take $Z$ to be the sum of the softmax probabilities of the upper-half (four- and five-star) classes.

\end{document}